\documentclass[11pt]{article}

\RequirePackage[margin=1in]{geometry}

\usepackage{hyperref}
\usepackage{amsfonts}
\usepackage{graphicx}
\usepackage{amsmath}
\usepackage{amsthm}
\usepackage{hhline}
\usepackage{multirow}
\usepackage{color}

\usepackage{amssymb}
\usepackage{float}
\usepackage{bbm}
\usepackage{algorithm}
\usepackage{algorithmic}
\usepackage{comment}
\usepackage{caption}
\usepackage{pgf}
\usepackage{tikz}
\usepackage{enumitem}
\usepackage{thmtools}
\usepackage{thm-restate}
\usepackage{cleveref}
\usepackage{subfigure}





\newcommand{\R}{\mathbb{R}}

\def\sample{{\mathcal{S}}}
\newcommand \expect {\mathbb{E}}

\newcommand \dmin {d_{\rm min}}

\def\V{{\mathcal{V}}}
\def\E{{\mathbb{E}}}
\def\seed{{\texttt{seed}}}
\def\clus{{\texttt{clus}}}
\def\lloyd{{\texttt{lloyds}}}
\def\lloyds{{(\alpha,\beta)\text{-Lloyds++}}}
\def\kmeans{{k\text{-means++}}}

\def\alo{{\alpha_{\ell}}}
\def\ahi{{\alpha_{h}}}

\let\citep\cite
\let\citet\cite

\newcommand \uniform {\operatorname{Uniform}}

\newtheorem{theorem}{Theorem}
\newtheorem{lemma}[theorem]{Lemma}

\newtheorem{definition}[theorem]{Definition}

\newenvironment{sproof}%
{%
 \par\noindent{\it Proof sketch. \ }%
}%
{\qed}

\newcommand{\cost}{{\textnormal{\texttt{cost}}}}

\newcommand \dratio {R}

\allowdisplaybreaks

\begin{document}

\title{Data-Driven Clustering via Parameterized Lloyd's Families\footnote{Authors' addresses: \texttt{\{ninamf,tdick,crwhite\}@cs.cmu.edu}.}}
 \author{Maria-Florina Balcan \and
 Travis Dick \and
Colin White}
\date{}

\maketitle


\begin{abstract}

Algorithms for clustering points in metric spaces is a long-studied area of research.
Clustering has seen a multitude of work both theoretically, in understanding
the approximation guarantees possible for many objective
functions such as $k$-median and $k$-means clustering, and experimentally,
in finding the fastest algorithms and seeding procedures for Lloyd's algorithm.
The performance of a given clustering algorithm depends on the specific application at hand, and this may not be known up front.
For example, a ``typical instance'' may vary depending on the application,
and different clustering heuristics perform differently depending on the instance.

In this paper, we define an infinite family of algorithms generalizing Lloyd's algorithm, with one parameter controlling the initialization procedure,
and another parameter controlling the local search procedure.
This family of algorithms includes the celebrated $k$-means++ algorithm,
as well as the classic farthest-first traversal algorithm.
We design efficient learning algorithms which receive samples from an
application-specific distribution over clustering instances and learn a
near-optimal clustering algorithm from the class with respect to that
distribution with provable sample complexity guarantees.
%
We show the best parameters vary significantly across application domains such as MNIST, CIFAR, and mixtures of Gaussians.
Our learned algorithms never perform worse than $k$-means++, and in some application domains we see significant improvements.



\end{abstract}

\section{Introduction} \label{sec:intro}

Clustering is a fundamental problem in machine learning with applications in many areas including
text analysis, transportation networks, social networks, and so on.
The high-level goal of clustering is to divide a dataset into natural subgroups.
For example, in text analysis we may want to divide documents based on topic, and in social networks we might want to find communities.
A common approach to clustering is to set up an objective function
and then approximately find the optimal solution according to the objective.
There has been a wealth of both theoretical and empirical research in clustering using this approach
\cite{gonzalez1985clustering,charikar1999constant,arya2004local,arthur2007k,kaufman2009finding,
ostrovsky2012effectiveness,byrka2015improved,ahmadian2016better}.


The most popular method in practice for clustering is local search, where
we start with $k$ centers and iteratively make incremental improvements until a local optimum is reached.
For example, Lloyd's method (sometimes called $k$-means) \cite{lloyd1982least}
and $k$-medoids \cite{friedman2001elements,cohen2016geometric} are two popular local search algorithms.
There are multiple decisions an algorithm designer must make when using a local search algorithm.
First, the algorithm designer must decide how to seed local search, e.g., how the algorithm chooses the $k$ initial centers.
There is a large body of work on seeding algorithms, since the initial choice of centers
can have a large effect on both the quality of the outputted clustering
and the time it takes for the algorithm to converge \cite{higgs1997experimental,pena1999empirical,arai2007hierarchical}.
The best seeding method often depends on the specific application at hand.
For example, a ``typical problem instance''
in one setting may have significantly different properties from that in another, causing some seeding methods to perform better than others.
Second, the algorithm designer must decide on an objective function for the local search phase ($k$-means, $k$-median, etc.)
For some applications, there is an obvious choice. For instance, if the application is Wi-Fi hotspot location,
then the explicit goal is to minimize the $k$-center objective function. 
For many other applications such as clustering communities in a social network,
the goal is to find clusters which are close to an unknown target clustering,
and we may use an objective function for local search in the hopes that approximately minimizing the chosen objective
will produce clusterings which are close to matching the target clustering (in terms of the number of misclassified points).
As before, the best objective function for local search may depend on the specific application.

In this paper, we show positive theoretical and empirical results for learning
the best initialization and local search procedures over a large family of
algorithms. We take a transfer learning approach where we assume there is an
unknown distribution $\mathcal{D}$ over problem instances corresponding to our application,
and the goal is to use experience from the early instances to perform well on
the later instances. For example, if our application is clustering facilities in
a city, we would look at a sample of cities with existing optimally-placed
facilities, and use this information to find the empirically best seeding/local
search pair from an infinite family, and we use this pair to cluster facilities
in new cities.

\vspace{1em}\noindent\textbf{$\lloyds$.}
We define an infinite family of algorithms generalizing Lloyd's method, with two parameters $\alpha$ and $\beta$.
Our algorithms have two phases, a seeding phase to find $k$ initial centers (parameterized by $\alpha$),
and a local search phase which uses Lloyd's method to converge to a local optimum (parameterized by $\beta$).
In the seeding phase, each point $v$ is sampled with probability proportional to
$d_{\text{min}}(v,C)^\alpha$, where $C$ is the set of centers chosen so far and $d_{\text{min}}(v,C) = \min_{c \in C}d(v,c)$.
Then Lloyd's method is used to converge to a local minima for the $\ell_\beta$ objective.
By ranging $\alpha\in [0,\infty)\cup\{\infty\}$ and $\beta\in [1,\infty)\cup\{\infty\}$, we define our infinite family of algorithms which we call $\lloyds$.
Setting $\alpha = \beta = 2$ corresponds to the $\kmeans$ algorithm~\cite{arthur2007k}.
The seeding phase is a spectrum between random seeding ($\alpha=0$),
and farthest-first traversal \cite{gonzalez1985clustering,dasgupta2005performance} ($\alpha=\infty$),
and the Lloyd's step is able to optimize over common objectives including
$k$-median ($\beta=1$), $k$-means ($\beta=2$), and $k$-center ($\beta=\infty$).
We design efficient learning algorithms which receive samples from an application-specific distribution over clustering instances
and learn a near-optimal clustering algorithm from our family.

\vspace{1em}\noindent\textbf{Theoretical analysis.} In
Section~\ref{sec:alpha} we study both the sample and computational complexity of
learning the parameters for $\lloyds$ that have the lowest expected cost on the
application-specific distribution $\mathcal{D}$. The expected cost is over two
sources of randomness: the distribution $\mathcal{D}$ and the algorithmic
randomness during the seeding phase of $\lloyds$. To aid in our analysis, we
define an associated deterministic version of $\lloyds$ that takes as input a clustering
instance $\V$ and a vector $\vec{Z} = (z_1, \dots, z_k) \in [0,1]^k$, where the
value $z_i$ is used to deterministically choose the $i^{\rm th}$ center
during the seeding phase. When $\vec{Z}$ is sampled uniformly from $[0,1]^k$,
the distribution over outputs of the deterministic algorithm run with $\vec{Z}$
is identical to the randomized version of $\lloyds$. Our learning procedure
receives a sample $(\V_1, \vec{Z}_1), \dots, (\V_m, \vec{Z}_m)$ drawn i.i.d.
from $\mathcal{D} \times \uniform([0,1]^k)$ and returns the
parameters $\hat \alpha$ and $\hat \beta$ so that the deterministic algorithm
has the lowest average cost on the sample. First, we show that when the sample
size $m$ is sufficiently large, these parameters have approximately optimal cost
in expectation over both $\mathcal{D}$ and the internal randomness of $\lloyds$.
We also give efficient algorithms for finding the empirically optimal
parameters.

We prove that when the sample size is $m = \tilde O(k / \epsilon^2)$, where $k$
is the number of clusters and $\tilde O(\cdot)$ suppresses logarithmic terms,
the empirically optimal parameters $(\hat\alpha,\hat\beta)$ have expected cost
at most $\epsilon$ higher than the optimal parameters $(\alpha^*,\beta^*)$ over
the distribution, with high probability over the random sample. The key
challenge is that for any clustering instance $\V$ and vector
$\vec{Z} \in [0,1]^k$, the cost of the outputted clustering is not even a
continuous function of $\alpha$ or $\beta$ since a slight tweak in the
parameters may lead to a completely different run of the algorithm.
In fact, we show that for any clustering instance $\V$ and
vector $\vec{Z}$, the cost is a piecewise constant function of the parameters
$\alpha$ and $\beta$. The key step in our sample complexity guarantees is to
bound the number of discontinuities of the cost function. This requires a delicate
reasoning about the structure of ``decision points'', which are parameter values
where the algorithm output changes, each introducing a discontinuity in the cost
function. Our key technical contribution is to leverage the randomness over
$\vec{Z} \sim \uniform([0,1]^k)$ to prove polynomial bounds on the expected
number of decision points for the $\alpha$ parameter. By contrast, if we ignored
the distribution of $\vec{Z}$ and applied the techniques exploited by prior
work, we would only get exponential bounds on the number of decision points.

%

Next, we complement our sample complexity result with a computational efficiency result. Specifically,
we give a novel meta-algorithm which efficiently finds a near-optimal value $\hat\alpha$ with high probability.
The high-level idea of our algorithm is to run depth-first-search over the ``execution tree'' of the algorithm,
where a node in the tree represents a state of the
algorithm, and edges represent decision points.
A key step in our meta-algorithm is to iteratively solve for the decision points of the algorithm, which itself is nontrivial since the equations governing the decision points
do not have closed-form solutions. We show the equations have a certain structure which allows us to binary search through the range of parameters to find the decision points.

\vspace{1em}\noindent\textbf{Experiments.}
We give a thorough experimental analysis of our family of algorithms by
evaluating their performance on a number of different real-world and synthetic
application domains including MNIST, Cifar10, CNAE-9, and mixtures of Gaussians.
In each case, we create clustering instances by choosing subsets of the labels.
For example, we look at an instance of MNIST with digits $\{0,1,2,3,4\}$, and also an instance with digits $\{5,6,7,8,9\}$.
We show the optimal parameters transfer from one instance to the other.
Among domains, there is no single parameter setting that is nearly optimal,
and for some domains, the best algorithm from the $\lloyds$ family performs significantly better than
known algorithms such as $\kmeans$ and farthest-first traversal.

\section{Related Work} \label{sec:related_work}

\paragraph{Clustering.}
The iterative local search method for clustering, known as Lloyd's algorithm or sometimes called $k$-means, is one of the most popular
algorithms for $k$-means clustering \cite{lloyd1982least},
and improvements are still being found \cite{max1960quantizing,macqueen1967some,dempster1977maximum,pelleg1999accelerating,kanungo2002efficient,kaufman2009finding}.
The worst-case runtime of Lloyd's method is exponential \cite{arthur2006slow} even in $\mathbb{R}^2$ \cite{vattani2011k},
however, it converges very quickly in practice \cite{har2005fast}, and the smoothed complexity is polynomial \cite{arthur2011smoothed}.
Many different initialization approaches have been proposed \cite{higgs1997experimental,pena1999empirical,arai2007hierarchical}.
When using $d^2$-sampling to find the initial $k$ centers, the algorithm is known as $k$-means++,
and the approximation guarantee is provably $O(\log k)$ \cite{arthur2007k}.
If the data satisfies a natural stability condition, $k$-means++ returns a near-optimal clustering \cite{ostrovsky2012effectiveness}.
The farthest-first traversal algorithm is an iterative method to find $k$ centers,
and it was shown to give a 2-approximation algorithm for $k$-center \cite{gonzalez1985clustering},
and an 8-approximation for hierarchical $k$-center \cite{dasgupta2005performance}.

\paragraph{Transfer learning for unsupervised settings.}
Balcan et al.\ shows provable guarantees for learning over a different family of
algorithms, linkage-based clustering with dynamic pruning, in the same distribution
as the current work, however, they provide no experimental guarantees \cite{balcan2017learning}.
There are several related models for learning the best representation and
transfer learning for clustering.
For example, Ashtiani and Ben-David analyze the problem of learning a
near-optimal data embedding function from a given family of embeddings for the
$k$-means objective \cite{ashtiani2015representation}.

%
%

There are a few models for the question of finding the best clustering algorithm to use on a single instance,
given a small amount of expert advice.
Ackerman et al.\ (building off of the celebrated clustering impossibility result of \cite{kleinberg2003impossibility}) study the problem of taxonomizing clustering algorithmic paradigms, by using a list of abstract properties of clustering functions \cite{ackerman2010towards}.
In their work, the goal is for a user to choose a clustering algorithm based
on the specific properties which are important for her application.

Another related area is the problem of unsupervised domain adaption. In this problem, the machine learning algorithm has access to a labeled
training dataset, and an unlabeled target dataset over a different distribution. The goal is to find an accurate classifier over the
target dataset, while only training on the training distribution~\cite{sener2016learning,ganin2014unsupervised,tzeng2014deep}.

There has been more research on related questions for transfer learning on unlabeled data and unsupervised tasks.
Raina et al.\ study transfer learning using unlabeled data, to a supervised learning task \cite{raina2007self}.
Jiang and Chung, and Yang et al.\ study transfer learning for clustering, in which a clustering algorithm has access to unlabeled data, and uses it to better cluster a related problem instance \cite{yang2009heterogeneous,jiang2012transfer}.
This setting is a bit different from ours, since we assume we have access to the target clustering for each training instance, but we tackle the harder question of finding the best clustering objective.

\section{Preliminaries} \label{sec:prelim}

\medskip\noindent \textbf{Clustering.}
A clustering instance $\V$ consists of a point set $V$ of size $n$, a distance metric $d$
(such as Euclidean distance in $\mathbb{R}^d$), and a desired number of clusters $1\leq k\leq n$.
A clustering  $\mathcal{C}=\{C_1,\dots,C_k\}$ is a $k$-partitioning of $V$.
Often in practice, clustering is carried out by approximately minimizing an objective function
(which maps each clustering to a nonzero value).
Common objective functions such as $k$-median and $k$-means come from the $\ell_p$ family,
where each cluster $C_i$ is assigned a center $c_i$ and
$\text{cost}(\mathcal{C})=\left(\sum_i\sum_{v\in C_i}d(v,c_i)^p\right)^\frac{1}{p}$
($k$-median and $k$-means correspond to $p=1$ and $p=2$, respectively).
There are two distinct goals for clustering depending on the application.
For some applications such as computing facility locations, the algorithm designer's
only goal is to find the best centers, and the actual partition $\{C_1,\dots,C_k\}$ is not needed.
For many other applications such as clustering documents by subject, clustering proteins by function,
or discovering underlying communities in a social network, there exists an unknown ``target'' clustering
$\mathcal{C}^*=\{C_1^*,\dots,C_k^*\}$, and the goal is to output a clustering $\mathcal{C}$
which is close to $\mathcal{C}^*$.
Formally, we define $\mathcal{C}$ and $\mathcal{C}'$ to be $\epsilon$-close if there exists a permutation $\sigma$ such that
$\sum_{i=1}^k |C_i\setminus C_{\sigma(i)}' |\leq \epsilon n$.
For these applications, the algorithm designer chooses an objective function while hoping that minimizing the objective
function will lead to a clustering that is close to the target clustering.
In this paper, we will focus on the cost function set to the distance to the target clustering, however, our analysis
holds for an abstract cost function $\cost$ which can be set to an objective function
or any other well-defined measure of cost.

\medskip\noindent \textbf{Algorithm Configuration.} In this work, we assume that
there exists an unknown, application-specific distribution $\mathcal{D}$ over a
set of clustering instances such that for each instance $\V$, $|V|\leq n$. We
suppose there is a cost function that measures the quality of a clustering of
each instance. As discussed in the previous paragraph, we can set the cost
function to be the expected Hamming distance of the returned clustering to the
target clustering, the cost of an $\ell_p$ objective, or any other function. The
learner's goal is to find the parameters $\alpha$ and $\beta$ that approximately
minimize the expected cost with respect to the distribution $\mathcal{D}$. Our
main technical results bound the intrinsic complexity of the class of $\lloyds$
clustering algorithms, which leads to generalization guarantees through standard
Rademacher complexity~\cite{bartlett2002rademacher,koltchinskii2001rademacher}.
This implies that the empirically optimal parameters are also nearly optimal in
expectation.

\section{\texorpdfstring{$\lloyds$}{(alpha,beta)-Lloyds++}} \label{sec:alpha}

In this section, we define an infinite family of algorithms generalizing Lloyd's algorithm, with one parameter controlling the initialization procedure,
and another parameter controlling the local search procedure.
Our main results bound the intrinsic complexity of this family of algorithms (Theorems \ref{thm:expect} and \ref{thm:pdim_upper})
and lead to sample complexity results guaranteeing the empirically optimal parameters over a sample are close to the optimal parameters over the unknown distribution.
We measure optimality in terms of agreement with the target clustering.
We also show theoretically that no parameters are optimal over all clustering applications (Theorem \ref{thm:alpha-just}).
Finally, we give an efficient algorithm for learning the best initialization parameter (Theorem \ref{thm:runtime}).

Our family of algorithms is parameterized by choices of $\alpha \in [0,\infty)\cup\{\infty\}$ and $\beta\in [1,\infty)\cup\{\infty\}$.
Each choice of $(\alpha,\beta)$ corresponds to one local search algorithm.
A summary of the algorithm is as follows. 
The algorithm has two phases. The goal of the first phase is to output $k$ initial centers. Each center
is iteratively chosen by picking a point with probability proportional to the minimum distance to all centers picked so far,
raised to the power of $\alpha$.
The second phase is an iterative two step procedure similar to Lloyd's method,
where the first step is to create a Voronoi partitioning of the points induced by the initial set of centers,
and then a new set of centers is chosen by computing the $\ell_\beta$ mean of the points in each Voronoi tile.

Our goal is to find parameters that return clusterings close to the
ground-truth in expectation. Setting $\alpha=\beta=2$ corresponds to the
$k$-means++ algorithm. The seeding phase is a spectrum between random seeding
($\alpha=0$), and farthest-first traversal ($\alpha=\infty$), and the Lloyd's
algorithm can optimize for common clustering objectives including $k$-median
($\beta=1$), $k$-means ($\beta=2$), and $k$-center ($\beta=\infty$).

On the way to proving our main results, we analyze a
deterministic version of $\lloyds$ that takes as input both a clustering
instance $\V$ and a vector $\vec{Z} = (z_1, \dots, z_k) \in [0,1]^k$. The
deterministic algorithm uses the value $z_t$ when choosing the $t^\text{th}$
center in the first phase of the algorithm. More specifically, the algorithm
chooses the $t^\text{th}$ center as follows: for each point $v_i \in V$ we
determine the $d^\alpha$-sampling probability of choosing $v_i$ as the next
center. Then, we construct a partition of $[0,1]$ into $|V|$ intervals, where
each point $v_i$ is associated with exactly one interval, and the width of the
interval is equal to the $d^\alpha$-sampling probability of choosing $v_i$.
Finally, the algorithm chooses the next center to be the point $v_i$ whose
interval contains the value $z_t$. When $z_t$ is drawn uniformly at random from
$[0,1]$, the probability of choosing the point $v_i$ to be the next center is
the width of its corresponding interval, which is the $d^\alpha$-sampling
probability. Therefore, for any fixed clustering instance $\V$, sampling the
vector $\vec{Z}$ uniformly at random from the cube $[0,1]^k$ and running this
algorithm on $\V$ and $\vec{Z}$ has the same output distribution as $\lloyds$.
Pseudocode for the deterministic version of $\lloyds$ is given in
Algorithm~\ref{alg:clus}.

\medskip\noindent\textbf{Notation.} Before presenting our
results, we introduce some convenient notation. We define cost functions for
both the randomized and deterministic versions of $\lloyds$. Given a clustering
instance $\V$ and a vector $\vec{Z} \in [0,1]^k$, we let
$\clus_{\alpha,\beta}(\V,\vec{Z})$ be the cost of the clustering output by
Algorithm~\ref{alg:clus} (i.e., the distance to the ground-truth clustering for
$\V$) when run on $(\V, \vec{Z})$ with parameters $\alpha$ and $\beta$. Since
the algorithm is deterministic, this is a well-defined function. Next, we also
define $\clus_{\alpha, \beta}(\V) = \expect_{\vec{Z}}[\clus_{\alpha, \beta}(\V,
\vec{Z})]$ to denote the expected cost of (randomized) $\lloyds$ run on instance
$\V$, where the expectation is taken over the randomness of the algorithm (i.e.,
over the draw of the vector $\vec{Z} \sim \uniform([0,1]^k)$). To
facilitate our analysis of phase 2 of Algorithm~\ref{alg:clus}, we let
$\lloyd_\beta(\V, C, T)$ denote the cost of the clustering obtained by running
Lloyd's algorithm with parameter $\beta$ starting from initial centers $C$ for
at most $T$ iterations on the instance $\V$.

\begin{algorithm}
\caption{Deterministic $\lloyds$ Clustering}\label{alg:clus}
\begin{algorithmic} 
\STATE {\bfseries Input:} Instance $\mathcal{V} = (V,d,k)$, vector $\vec{Z} = (z_1, \dots, z_k) \in [0,1]^k$, parameters $\alpha$ and $\beta$.
\STATE  {\bfseries Phase 1: Choosing initial centers with $\mathbf{d^\alpha}$-sampling} \label{step:sample}
\begin{enumerate}[leftmargin=*,itemsep=0pt]
\item Initialize $C = \emptyset$.
\item For each $t = 1, \dots, k$:
\begin{enumerate}[leftmargin=10pt,itemsep=0pt]
\item Partition $[0,1]$ into $n$ intervals, where there is an interval $I_{v_i}$ for each $v_i$
with size equal to the probability of choosing $v_i$ during $d^\alpha$-sampling in round $t$ (see Figure \ref{fig:alpha-interval}).
\item Denote $c_t$ as the point such that $z_t\in I_{c_t}$, and add $c_t$ to $C$.
\end{enumerate}
\end{enumerate}
\STATE {\bfseries Phase 2: Lloyd's algorithm}  \label{step:lloyd}
\begin{enumerate}[leftmargin=*,itemsep=0pt]
\setcounter{enumi}{4}
\item Set $C'=\emptyset$. Let $\{C_1,\dots,C_k\}$ denote the Voronoi tiling of $V$ induced by centers $C$.
\item Compute $\text{argmin}_{x\in V}\sum_{v\in C_i}d(x,v)^\beta$ for all $1\leq i\leq k$, and add it to $C'$.
\item If $C' \neq C$, set $C = C'$ and goto 5.
\end{enumerate}
\STATE {\bfseries Output:} Centers $C$ and clustering induced by $C$.
\end{algorithmic}
\end{algorithm}

When analyzing phase 1 of Algorithm~\ref{alg:clus}, we let
$\seed_\alpha(\V, \vec{Z})$ denote the vector of centers output by phase 1 when
run on a clustering instance $\V$ with vector $\vec{Z}$. For a given set of
centers $C$, we let $d_i$ denote the distance from point $v_i$ to the set $C$;
that is, $d_i = \min_{c \in C} d(v_i, c)$. For each point index $i$, we define
$D_i(\alpha) = \sum_{j=1}^i d_j^\alpha$, so that the probability of choosing
point $i$ as the next center under $d^\alpha$-sampling is equal to $d_i^\alpha /
D_n(\alpha)$, and the probability that the chosen index belongs to $\{1, \dots,
i\}$ is $D_i(\alpha)/D_n(\alpha)$.  When we use this notation, the set of
centers $C$ will always be clear from context. Finally, for a point set $V$, we
let $\dratio = \max \{ d(x,x') / d(y,y') \mid x,x',y,y' \in V, d(y,y') \neq 0\}$
denote the maximum ratio between any pair of non-zero distances in the point
set. The notation used throughout the paper is summarized in
Appendix~\ref{app:notation}.

We start with two structural results about the family of $\lloyds$ clustering
algorithms. The first shows that for sufficiently large $\alpha$, phase 1 of
Algorithm~\ref{alg:clus} is equivalent to farthest-first traversal. This means
that it is sufficient to consider $\alpha$ parameters in a bounded range.

Farthest-first traversal \cite{gonzalez1985clustering} starts by choosing a
random center, and then iteratively choosing the point farthest to all centers
chosen so far, until there are $k$ centers. We assume that ties are broken
uniformly at random. Farthest-first traversal is equivalent to the first phase
of Algorithm~\ref{alg:clus} when run with $\alpha = \infty$. The following
result guarantees that when $\alpha$ is sufficiently large,
Algorithm~\ref{alg:clus} chooses the same initial centers as farthest-first
traversal with high probability.

\begin{restatable}{lemma}{lemFarthestFirst} \label{lem:farthest-first}
  For any clustering instance $\V = (V,d,k)$ and $\delta > 0$, if $\alpha >
  \log\left(\frac{nk}{\delta}\right) / \log s $, where $s$ denotes the minimum
  ratio $d_1/d_2$ between two distances $d_1>d_2$ in the point set, then
  $P_{\vec{Z}}(\seed_\alpha(\V, \vec{Z}) = \seed_\infty(\V, \vec{Z})) \geq
  1-\delta$.
  %
\end{restatable}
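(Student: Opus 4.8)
The plan is to prove this by coupling the finite-$\alpha$ run of phase~1 with the $\alpha = \infty$ run on the \emph{same} vector $\vec Z$, and to argue inductively over the $k$ rounds that with high probability the two runs select the same center in every round. The central observation is that the set $C$ of centers chosen so far completely determines the distances $d_i = \min_{c \in C} d(v_i,c)$, and hence (given $\alpha$) the partition of $[0,1]$ into sampling intervals in round $t$. Therefore, if the two runs agree on $c_1, \dots, c_{t-1}$, they face identical distances $d_1, \dots, d_n$ at the start of round $t$, and it suffices to bound the probability, over $z_t \sim \uniform([0,1])$, that the $t^{\text{th}}$ center differs. A union bound over the $k$ rounds then controls the probability that the two full seedings disagree.

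The key per-round estimate bounds the probability that finite-$\alpha$ sampling selects a point that is \emph{not} farthest from $C$. Let $\dmax = \max_i d_i$ and call $v_i$ a farthest point if $d_i = \dmax$. At $\alpha = \infty$ all sampling mass lies on the farthest points, so a disagreement in round $t$ can only occur if either $z_t$ lands in the interval of a non-farthest point under finite $\alpha$, or $z_t$ lands in a farthest interval but the two runs assign it to different farthest points. For the first event, every non-farthest point satisfies $d_i \le \dmax / s$ by the definition of $s$ as the minimum ratio between two distinct distances in the point set (note that all $d_i$ are distances between points of $V$, since centers are points of $V$), so its probability is
\begin{equation*}
\frac{\sum_{i : d_i < \dmax} d_i^\alpha}{\sum_{j} d_j^\alpha} \le \frac{n \, (\dmax/s)^\alpha}{\dmax^\alpha} = \frac{n}{s^\alpha},
\end{equation*}
using that the denominator is at least $\dmax^\alpha$ (at least one farthest point exists) and that there are at most $n$ non-farthest points.

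To handle the second event I would fix a single index ordering for the sampling intervals in both runs and define the tie-breaking rule of $\seed_\infty$ as the limit of $d^\alpha$-sampling, so that the farthest points are associated with intervals in the same order under both finite and infinite $\alpha$. Conditioned on $z_t$ falling in a farthest interval, the two runs then select the same farthest point, and the only remaining source of disagreement is the first event above. I expect this tie-breaking and interval-alignment step to be the main obstacle to make fully rigorous, because for finite $\alpha$ the farthest intervals are slightly shrunk and shifted by the positive mass placed on the non-farthest points; the argument must certify that these perturbations do not push $z_t$ across a boundary between two farthest intervals except on the low-probability event already accounted for.

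Granting the per-round bound $n/s^\alpha$, a union bound over the $k$ rounds gives $P_{\vec Z}\big(\seed_\alpha(\V, \vec Z) \ne \seed_\infty(\V, \vec Z)\big) \le kn/s^\alpha$. Finally, the hypothesis $\alpha > \log(nk/\delta)/\log s$ is exactly the condition $s^\alpha > nk/\delta$, i.e.\ $kn/s^\alpha < \delta$, which yields $P_{\vec Z}\big(\seed_\alpha(\V, \vec Z) = \seed_\infty(\V, \vec Z)\big) \ge 1 - \delta$, as claimed.
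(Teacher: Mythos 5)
Your proposal follows essentially the same route as the paper's proof: condition on the two runs agreeing through round $t-1$, bound the per-round probability of selecting a non-farthest point by $n/s^\alpha$ using the minimum distance ratio $s$, and union bound over the $k$ rounds to get $nk/s^\alpha < \delta$. The tie-breaking/interval-alignment subtlety you flag is present in the paper's argument as well, where it is dispatched with the informal assumption that both runs ``break ties at random in the same way.''
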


For some datasets, $1/\log s$ might be very large. In Section
\ref{sec:experiments}, we empirically observe that for all datasets we tried,
$\lloyds$ behaves the same as farthest-first traversal for $\alpha>20$.
\footnote{ In Appendix \ref{app:theory}, we show that if the dataset satisfies a
stability assumption called separability \cite{kobren2017online,pruitt2011ncbi},
then $\lloyds$ outputs the same clustering as farthest-first traversal with high
probability when $\alpha>\log n$. 
}

Next, to motivate learning the best parameters, we show that for \emph{any} pair of parameters $(\alpha^*,\beta^*)$,
there exists a clustering instance such that $(\alpha^*,\beta^*)$-Lloyds++ outperforms
all other values of $\alpha,\beta$.
This implies that $d^\beta$-sampling is not always the best choice of seeding for
the $\ell_\beta$ objective.

\begin{theorem} \label{thm:alpha-just}
For $\alpha^*\in [.01,\infty)\cup\{\infty\}$ and $\beta^*\in [1,\infty)\cup\{\infty\}$, there exists a clustering instance $\V$
whose target clustering is the optimal $\ell_{\beta^*}$ clustering,
such that $\clus_{\alpha^*,\beta^*}(\V)<\clus_{\alpha,\beta}(\V)$ for all $(\alpha,\beta)\neq (\alpha^*,\beta^*)$.
\end{theorem}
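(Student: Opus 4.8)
The plan is to construct the hard instance $\V$ as a disjoint union of gadgets placed in well-separated regions of the metric space, tuned so that the expected cost decomposes (up to an additive constant) as $\clus_{\alpha,\beta}(\V) = g(\alpha) + h(\beta) + c$, where $g$ is governed by the seeding phase and is uniquely minimized at $\alpha^*$, and $h$ is governed by the Lloyd's phase and is uniquely minimized at $\beta^*$. Since each summand is strictly minimized at its own optimum, the sum is strictly minimized only at $(\alpha^*,\beta^*)$, which is exactly the claim. Throughout, the target is fixed to be an $\ell_{\beta^*}$-optimal clustering of $\V$, so the hypothesis on the target holds and $\clus_{\alpha^*,\beta^*}(\V)$ is kept small.

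For the seeding parameter I would use two sub-gadgets with opposing geometry. In the first, the correct next center is the farther of two candidate points, so the probability of seeding it correctly is $\tfrac{a^\alpha}{a^\alpha+b^\alpha}$ and increases with $\alpha$; in the second, the correct center is the nearer candidate, so the analogous probability decreases with $\alpha$. Writing the expected misclassification as $g(\alpha)=c_A\,\sigma(-\alpha L_A)+c_B\,\sigma(\alpha L_B)$, a weighted sum of two logistic terms with $L_A,L_B>0$ set by the distance ratios, $g$ is smooth, and by choosing $c_A,c_B,L_A,L_B$ it becomes strictly unimodal with its unique minimizer at any prescribed $\alpha^*\ge 0.01$ (the lower bound keeps the balancing parameters in a feasible range). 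For $\alpha^*=\infty$ I would instead make $g$ strictly decreasing with $g(\infty)=0$, invoking Lemma~\ref{lem:farthest-first} to argue that every finite $\alpha$ retains a strictly positive probability of deviating from the farthest-first seeding and hence incurs strictly positive expected cost. The seeding gadgets are built from clusters whose $\ell_\beta$-medoid is the same point for every $\beta\ge 1$, so their cost is independent of $\beta$.

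The Lloyd's parameter is the delicate part and the main obstacle. For a fixed seed and $\alpha$, the clustering returned by phase 2 is piecewise constant in $\beta$ (the discrete medoid $\operatorname*{argmin}_{x}\sum_v d(x,v)^\beta$ jumps only at finitely many thresholds), so $\clus_{\alpha,\beta}(\V)$ is piecewise constant in $\beta$ and admits a strict interior minimizer only via a downward spike engineered exactly at $\beta^*$. I would realize this with a cluster of three candidate medoids $x_1,x_2,x_3$ whose exponential-sum costs $f_i(\beta)=\sum_v d(x_i,v)^\beta$ all coincide at the single value $\beta^*$ and are ordered so that $x_1$ strictly wins for $\beta<\beta^*$ and $x_3$ strictly wins for $\beta>\beta^*$, while the target requires the middle medoid $x_2$. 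Then $x_2$ is selected only at the exact tie (by the tie-break rule, or with positive probability under random tie-breaking), so $h(\beta)$ is strictly larger for every $\beta\ne\beta^*$ than at $\beta^*$. Existence of three such curves meeting at one prescribed point with the required ordering, while $x_1,x_2,x_3$ dominate all other medoid candidates on $[1,\infty)$, follows by a direct construction / continuity argument. For boundary targets the gadget simplifies: a single two-medoid tie placed at $\beta^*=1$ (or in the limit $\beta^*=\infty$) suffices, since there is no opposite side to match. These refinement gadgets are made robust to the seed, with a single attracting Lloyd's fixed point per $\beta$, so their cost is independent of $\alpha$.

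Finally I would stitch the gadgets together, separating the regions at scales tuned to $\alpha^*$ so that, for $\alpha$ near $\alpha^*$, phase 1 allocates exactly the intended number of centers per region and the expected cost is precisely $g(\alpha)+h(\beta)+c$; for parameter values far from $(\alpha^*,\beta^*)$ a crude lower bound shows the cost is strictly larger. Combining the strict unimodality of $g$ with the strict spike of $h$ then yields $\clus_{\alpha^*,\beta^*}(\V)<\clus_{\alpha,\beta}(\V)$ for all $(\alpha,\beta)\ne(\alpha^*,\beta^*)$. I expect the $\beta$-separation, producing a strict interior minimizer of a quantity that is piecewise constant in $\beta$, to be the crux, with the cross-region independence of the two phases the main bookkeeping burden.
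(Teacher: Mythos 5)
Your proposal follows the same architecture as the paper's proof: well-separated gadgets whose expected costs add, one gadget forcing the seeding error to be uniquely minimized at $\alpha^*$ and another forcing the Lloyd's-phase error to be uniquely minimized at $\beta^*$, with the paper's clusters $C_1,\dots,C_4$ playing the role of your $\alpha$-gadgets (including the local-minima traps that convert a mis-seeding into a fixed final cost --- make these explicit, since without them a bad seed could be repaired by phase 2 and $g(\alpha)$ would not equal the misclassification you write down) and $C_5,C_6$ playing the role of your $\beta$-gadget. The two implementations differ in instructive ways. For $\alpha$, the paper uses a single three-clique gadget in which the failure probability is proportional to $x^{\delta}+x^{-\delta}$ with $\delta=\alpha-\alpha^*$, so uniqueness of the minimizer is immediate by AM--GM; your sum of two opposing logistics $c_A\sigma(-\alpha L_A)+c_B\sigma(\alpha L_B)$ is asserted to be strictly unimodal with a prescribed minimizer, which is true when $L_A\neq L_B$ and the weights are tuned, but requires an argument (monotonicity of the ratio $\sigma'(\alpha L_A)/\sigma'(\alpha L_B)$) that you should supply; the paper's form buys you that step for free. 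For $\beta$, you genuinely diverge, and your version is the more careful one: you correctly observe that $\clus_{\alpha,\beta}(\V)$ is piecewise constant in $\beta$ (this is exactly the content of Theorem~\ref{thm:pdim_upper} combined with the $\beta$-independence of phase 1), so a strict minimum at an interior $\beta^*$ can only be a downward spike at an exact medoid tie. The paper instead places competing centers at $z^*\pm\epsilon$ for a fixed $\epsilon>0$, which makes $c_5$ the argmin on an open interval of $\beta$ around $\beta^*$, so the strict inequality for $\beta$ arbitrarily close to $\beta^*$ does not follow from that construction as written; your triple-coincidence gadget is the right repair. Two caveats on your gadget: the tie-breaking rule for the argmin in step 6 is unspecified in the paper, so you must either fix a rule favoring $x_2$ or verify that uniform tie-breaking still gives a strict dip (it does, provided $\mathrm{err}(x_2)$ is small enough relative to $\mathrm{err}(x_1),\mathrm{err}(x_3)$); and since $f_1(\beta^*)=f_2(\beta^*)=f_3(\beta^*)$, the target induced by $x_2$ is only one of several tied $\ell_{\beta^*}$-optima, which is consistent with the theorem's hypothesis but should be stated. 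With those details filled in, your plan proves the statement.
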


\begin{proof}[Proof sketch]         
Consider $\alpha^*,\beta^*\in [0,\infty)\cup\{\infty\}$.
The clustering instance consists of 6 clusters, $C_1,\dots,C_6$. These 6 clusters are both the target clustering for the instance, and they are optimal for the $\ell_{\beta^*}$ objective.
The proof consists of three sections.
First, we construct $C_1,\dots,C_4$ so that $d^{\alpha^*}$ sampling has the best chance of putting
exactly one point into each optimal cluster.
Then we add ``local minima traps'' to each cluster, so that if any cluster received two centers in the
sampling phase, Lloyd's method will not be able to move the centers to a different cluster.
Finally, we construct $C_5$ and $C_6$ so that if seeding put one point in each cluster,
then ${\beta^*}$-Lloyd's method will outperform any other $\beta\neq\beta^*$.

We construct the instance in an abstract metric space where we can define
pairwise distances to take any values, provided that they still satisfy the
triangle inequality. We refer to a collection of points as a clique if all
pairwise distances within the collection are equal.

 \begin{figure}[h]
 	\centering
  \includegraphics[width=0.9\linewidth]{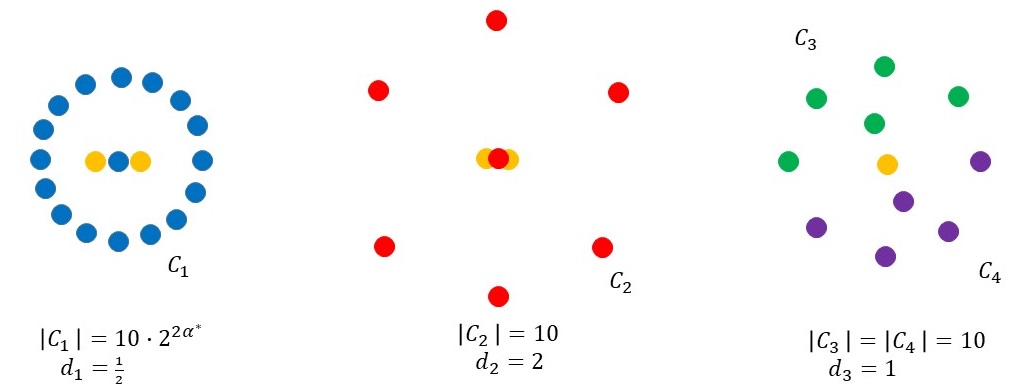}
 	\caption{Optimal instance for $d^{\alpha^*}$-sampling}
	\label{fig:alpha-just}
 \end{figure}

Step 1: we define $C_1$ and $C_2$ to be two different cliques, and we define a third clique whose points are equal to $C_3\cup C_4$.
See Figure \ref{fig:alpha-just}.
We space these cliques arbitrarily far apart so that with high probability, the first three sampled centers will each be in a different clique.
Now the idea is to define the distances and sizes of the cliques so that $\alpha^*$ is the value of $\alpha$ with the
greatest chance of putting the last center into the third clique.
If we set the distances in cliques 1,2,3 to $d_1=2$, $d_2=1/2$, and $d_3=1$, and set $|C_1|=2^{2\alpha^*}|C_2|$,
then the probability of sampling a 4th center in the third clique for $\alpha=\alpha^*+\delta$ is equal to
$$\frac{|C_3\cup C_4|}{|C_3\cup C_4|+(2^{\alpha^*+\delta}+2^{\alpha^*-\delta})|C_2|}.$$
This is maximized when $\delta=0$.

Now we add local minima traps for Lloyd's method as follows.
In the first two cliques, we add three centers so that the 2-clustering cost is only slightly better than the 1-clustering cost.
In the third clique, which consists of $C_3\cup C_4$, add centers so that the 2-clustering cost is much lower than the 1-clustering cost.
We also show that since all cliques are far apart, it is not possible for a center to move between clusters during Lloyd's method.

Finally, we add three centers $c_5,b_5,b_5'$ to the last cluster $C_5$.
We set the rest of the points so that $c_5$ minimizes the $\ell_{\beta^*}$ objective, while $b_5$ and $b_5'$ favor
$\beta=\beta^*\pm \epsilon$. Therefore, $(\alpha^*,\beta^*)$ performs the best out of all pairs $(\alpha,\beta)$.
\end{proof}


\paragraph{Sample efficiency.} Now we give sample complexity bounds for learning the best algorithm from the class of $\lloyds$ algorithms.
We analyze the phases of Algorithm \ref{alg:clus} separately.
For the first phase, our main structural result is to show that for a given clustering instance,
with high probability over
the draw of $\vec{Z} \sim \uniform([0,1]^k)$,
the number of discontinuities of the function $\alpha \mapsto \seed_{\alpha}(\V,\vec{Z})$ as we vary $\alpha\in [\alo,\ahi]$
is $O\bigl(nk\log(n)\log(\ahi/\alo)\bigr)$.
Our analysis crucially harnesses the randomness of $\vec{Z}$
to achieve this bound.
For instance, if we ignore the distribution of $\vec{Z}$ and use a purely combinatorial approach as in prior algorithm configuration work, we would only achieve a bound of $n^{O(k)}$,
which is the total number of sets of $k$ centers.
For completeness, we give a combinatorial proof of $O(n^{k+3})$ discontinuities in Appendix \ref{app:theory} (Theorem \ref{thm:combinatorial}). Similarly, for the second phase of the algorithm, we show that for any clustering instance $\V$, initial set of centers $C$, and any maximum number of iterations $T$, the function $\beta \mapsto \lloyd_\beta(\V, C, T)$ has at most $O(\min(n^{3T}, n^{k+3}))$ discontinuities.

We begin by analyzing the number of discontinuities of the function $\alpha
\mapsto \seed_\alpha(\V, \vec{Z})$. Before proving the
$O\bigl(nk\log(n)\log(\ahi/\alo)\bigr)$ upper bound, we define a few concepts
used in the proof. Assume we start to run Algorithm \ref{alg:clus} without a
specific setting of $\alpha$, but rather a range $[\alo,\ahi]$, for some
instance $\V$ and vector $\vec{Z}$. In some round $t$, if Algorithm
\ref{alg:clus} would choose the same center $c_t$ for every setting of
$\alpha\in[\alo,\ahi]$, then we continue normally. However, if the algorithm
would choose a different center depending on the specific value of $\alpha$ used
from the interval $[\alo, \ahi]$, then we fork the algorithm, making one copy
for each possible next center. In particular, we partition $[\alo, \ahi]$ into a
finite number of sub-intervals such that the next center is constant on each
interval. The boundaries between these intervals are ``breakpoints'', since as
$\alpha$ crosses those values, the next center chosen by the algorithm changes.
Our goal is to bound the total number of breakpoints over all $k$ rounds in
phase 1 of Algorithm~\ref{alg:clus}, which bounds the number of discontinuities
of the function $\alpha \mapsto \seed_\alpha(\V, \vec{Z})$.

A crucial step in the above approach is determining where the breakpoints are
located. Recall that in round $t$ of Algorithm \ref{alg:clus}, each datapoint
$v_i$ is assigned an interval in $[0,1]$ of size $d_i^\alpha / D_n(\alpha)$,
where $d_i$ is the minimum distance from $v_i$ to the current set of centers,
and $D_j(\alpha)=d_1^\alpha+\cdots+d_j^\alpha$. The interval for point $v_i$ is
$\bigl[\frac{D_{i-1}(\alpha)}{D_n(\alpha)},
\frac{D_{i}(\alpha)}{D_n(\alpha)}\bigr)$ (see Figure \ref{fig:alpha-interval}).
WLOG, we assume that the algorithm sorts the points on each round so that
$d_1\geq\cdots\geq d_n$. We prove the following nice structure about these
intervals.

\begin{figure}[ht]
  \subfigure{\includegraphics[width=.5\linewidth]{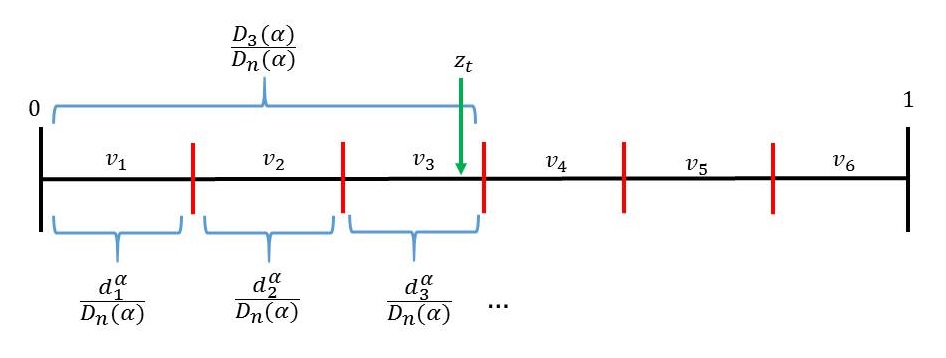}}
  \subfigure{\includegraphics[width=.5\linewidth]{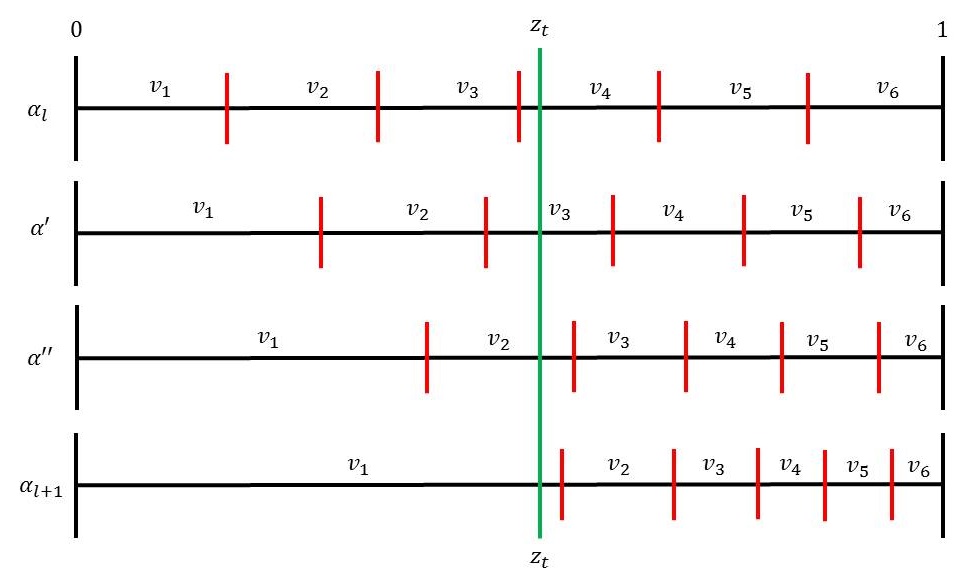}}
  \vspace{-1em}
  \caption{The algorithm chooses $v_3$ as a center (left). In the interval $[\alpha_\ell,\alpha_{\ell+1}]$, the algorithm may choose $v_4,v_3,v_2,$ or $v_1$ as a center,
	based on the value of $\alpha$ (right).}
  	\label{fig:alpha-interval}
\end{figure}

\begin{lemma} \label{lem:monotoneBins}
	Assume that $v_1$, \dots, $v_n$ are sorted in decreasing distance from a set $C$ of centers.
	Then for each $i=1,\dots,n$, the function
	$\alpha \mapsto\frac{D_i(\alpha)}{D_n(\alpha)}$ is monotone increasing and continuous
	along $[0,\infty)$.
	Furthermore, for all $1\leq i < j\leq n$ and $\alpha\in[0,\infty)$, we have $\frac{D_i(\alpha)}{D_n(\alpha)}\leq\frac{D_j(\alpha)}{D_n(\alpha)}$.
\end{lemma}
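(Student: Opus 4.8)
The statement bundles two claims: continuity together with monotonicity of each ratio $\alpha \mapsto D_i(\alpha)/D_n(\alpha)$, and the ``nestedness'' inequality $D_i(\alpha)/D_n(\alpha) \le D_j(\alpha)/D_n(\alpha)$ for $i<j$. The plan is to dispose of the second claim and continuity first, since neither needs the sorting or any real work. For $i<j$ we have $D_i(\alpha) = \sum_{l=1}^i d_l^\alpha \le \sum_{l=1}^j d_l^\alpha = D_j(\alpha)$, because the latter sum merely adds nonnegative terms; dividing by $D_n(\alpha) > 0$ gives the inequality. Continuity of $\alpha \mapsto D_i(\alpha)/D_n(\alpha)$ on $[0,\infty)$ is equally routine: each $\alpha \mapsto d_l^\alpha = e^{\alpha \ln d_l}$ is continuous for $d_l>0$ (and $d_l=0$ terms contribute $0$ under the convention $0^\alpha=0$, reflecting that a point coinciding with a current center is never sampled), so $D_i$ and $D_n$ are continuous, and $D_n(\alpha)>0$ as long as at least one distance is positive, which holds since otherwise every remaining point would already be a center.

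The heart of the lemma is the monotonicity, and here is how I would attack it. Since $D_i/D_n$ is nondecreasing if and only if its reciprocal $D_n/D_i = 1 + (D_n - D_i)/D_i$ is nonincreasing, it suffices to show that the ``tail-to-head'' ratio
$$\frac{D_n(\alpha) - D_i(\alpha)}{D_i(\alpha)} = \frac{\sum_{j=i+1}^n d_j^\alpha}{\sum_{l=1}^i d_l^\alpha}$$
is nonincreasing in $\alpha$. The key maneuver is to push the common denominator through the numerator and rewrite this ratio of sums as a sum of simple ratios,
$$\sum_{j=i+1}^n \frac{d_j^\alpha}{\sum_{l=1}^i d_l^\alpha} = \sum_{j=i+1}^n \frac{1}{\sum_{l=1}^i (d_l/d_j)^\alpha},$$
so that each term can be analyzed on its own.

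Now the sorting does the work: for every pair $l \le i < j$ we have $d_l \ge d_i \ge d_{i+1} \ge d_j$, so each base $d_l/d_j \ge 1$ and hence $\alpha \mapsto (d_l/d_j)^\alpha$ is nondecreasing on $[0,\infty)$. Consequently each inner sum $\sum_{l\le i}(d_l/d_j)^\alpha$ is nondecreasing, each summand $1/\sum_{l\le i}(d_l/d_j)^\alpha$ is nonincreasing, and the full sum over $j$ is nonincreasing, which yields the monotonicity. (If some $d_j=0$ the corresponding summand is identically $0$, and if the entire tail vanishes then $D_i = D_n$ and the ratio is the constant $1$, so these cases are harmless.) I expect the only genuine obstacle to be spotting this reduction: differentiating $D_i/D_n$ directly produces the cross term $D_i'D_n - D_iD_n'$ whose sign is not transparent, whereas the rewrite above converts the ``more terms'' intuition into honest per-term monotonicity and makes the sorted structure immediately usable. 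As a fallback I would keep the logarithmic-derivative route, where $\frac{d}{d\alpha}\ln\!\big(D_i/D_n\big) = \frac{D_i'}{D_i} - \frac{D_n'}{D_n}$ is the gap between the $d^\alpha$-weighted average of $\ln d_l$ over the top $i$ indices and over all $n$ indices; this gap is nonnegative because averaging the larger log-distances can only raise the mean, and it is equally rigorous but requires a short mean-comparison argument and the same bookkeeping for zero distances.
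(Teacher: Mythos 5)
Your proposal is correct, and all three components (nestedness, continuity, monotonicity) are handled soundly. The nestedness and continuity arguments coincide with the paper's (the paper's continuity step even contains a small slip --- it asserts $D_n(\alpha)\geq n$, which only holds if all distances are at least $1$; your observation that $D_n(\alpha)>0$ because some distance is positive is the right justification). For the main monotonicity claim, the paper takes $\alpha_1<\alpha_2$, cross-multiplies to reduce to $D_i(\alpha_1)D_n(\alpha_2)\leq D_i(\alpha_2)D_n(\alpha_1)$, expands both products, and compares the cross terms pairwise using $d_j^{\alpha_1}d_k^{\alpha_2}\leq d_j^{\alpha_2}d_k^{\alpha_1}$ for $j\leq i<k$ (a rearrangement-style termwise bound). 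Your route instead reduces to showing the tail-to-head ratio $(D_n-D_i)/D_i=\sum_{j>i}\bigl(\sum_{l\leq i}(d_l/d_j)^\alpha\bigr)^{-1}$ is nonincreasing, which follows because each base $d_l/d_j\geq 1$. Both proofs ultimately rest on the same sorted-distance fact, but your decomposition replaces the paper's double-sum bookkeeping with a per-term monotonicity observation, is arguably cleaner, and is more careful about degenerate cases (zero distances, $D_i=D_n$). Either argument suffices for the downstream use of the lemma.
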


This lemma guarantees two crucial properties. First, we know that for every
(ordered) set $C$ of $t \leq k$ centers chosen by phase 1 of
Algorithm~\ref{alg:clus} up to round $t$, there is a single interval (as opposed
to a more complicated set) of $\alpha$-parameters that would give rise to $C$.
Second, for an interval $[\alo,\ahi]$, the set of possible next centers is
exactly $v_{i_{\ell}}, v_{i_{\ell} + 1}, \dots, v_{i_{h}}$, where $i_{\ell}$ and
$i_{h}$ are the centers sampled when $\alpha$ is $\alo$ and $\ahi$, respectively
(see Figure \ref{fig:alpha-interval}).

Now we are ready to prove our main structural result, which bounds the number of
discontinuities of the function $\alpha \mapsto \seed_\alpha(\V, \vec{Z})$ in
expectation over $\vec{Z} \sim \uniform([0,1]^k)$. We prove two versions of the
result: one that holds over parameter intervals $[\alo, \ahi]$ with $\alo > 0$,
and a version that holds when $\alo = 0$, but that depends on the largest ratio
of any pairwise distances in the dataset.

\begin{restatable}{theorem}{thmExpect} \label{thm:expect}
  Fix any clustering instance $\V$ and let $\vec{Z} \sim \uniform([0,1]^k)$. Then:
  \begin{enumerate}
    \item For any parameter interval $[\alo, \ahi]$ with $\alo > 0$, the expected
    number of discontinuities of the function $\alpha \mapsto \seed_\alpha(\V, \vec{Z})$
    on $[\alo, \ahi]$ is at most $O(n k \log(n) \log(\ahi / \alo))$.
    \item For any parameter interval $[0, \ahi]$, the expected number of
    discontinuities of the function $\alpha \mapsto \seed_\alpha(\V, \vec{Z})$
    on $[0, \ahi]$ is at most $O\bigl(nk\log(n)\log(\ahi \log(\dratio))\bigr)$,
    where $R$ is the largest ratio between any pair of non-zero distances in
    $\V$.
  \end{enumerate}

  %
  %
\end{restatable}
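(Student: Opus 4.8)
The plan is to bound, for each of the $k$ rounds of phase~1 separately, the expected number of parameter values at which the center chosen in that round changes, and then sum over rounds. Fix a round $t$ and condition on $z_1,\dots,z_{t-1}$. This fixes a partition of $[\alo,\ahi]$ into maximal ``segments'' on each of which the prefix $c_1,\dots,c_{t-1}$, and hence the current sorted distances $d_1\ge\cdots\ge d_n$, is constant. By Lemma~\ref{lem:monotoneBins} each cumulative bin boundary $b_i(\alpha)=D_i(\alpha)/D_n(\alpha)$ is continuous and increasing in $\alpha$, so within a segment the index selected by the fixed threshold $z_t$ moves monotonically and sweeps a contiguous block of indices; the number of round-$t$ breakpoints inside the segment is exactly the number of boundaries that $z_t$ crosses. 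Writing $S(\alpha)=\sum_{i=1}^{n-1}b_i(\alpha)$ and taking expectation over the fresh uniform draw of $z_t$, the per-segment count becomes $S$ evaluated at the segment's right endpoint minus its value at the left endpoint. Summing over segments, the conditional expected number of round-$t$ breakpoints equals $\int_{\alo}^{\ahi}\frac{d}{d\alpha}S(\alpha)\,d\alpha$, where the integrand is computed with the prefix active at $\alpha$ (this telescopes correctly on each segment). Averaging over $z_1,\dots,z_{t-1}$ and using Tonelli, it suffices to bound $\frac{d}{d\alpha}S(\alpha)$ pointwise, \emph{uniformly over every possible prefix}, so that the bound survives the unknown distribution over prefixes.

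This uniform crossing-rate bound is the crux. Let $p_i=d_i^\alpha/D_n(\alpha)$ and let $J$ be the index with law $p$. A short computation gives $S(\alpha)=n-\expect_p[J]$ and $\frac{d}{d\alpha}\expect_p[J]=\mathrm{Cov}_p(J,\log d_J)\le 0$, so $\frac{d}{d\alpha}S(\alpha)=|\mathrm{Cov}_p(J,\log d_J)|$ and, on the logarithmic scale, $\alpha\frac{d}{d\alpha}S(\alpha)=|\mathrm{Cov}_p(J,\log p_J)|$ (since $\log(d_i^\alpha)$ and $\log p_i$ differ by an additive constant). I would bound this covariance using the identity $\mathrm{Cov}_p(J,\log p_J)=\tfrac12\,\expect[(J-J')(\log p_J-\log p_{J'})]$ for i.i.d.\ $J,J'\sim p$; because $\log p$ is monotone in the index, $|J-J'|\le n-1$ gives $|\mathrm{Cov}_p(J,\log p_J)|\le\tfrac{n-1}{2}\,\expect|\log p_J-\log p_{J'}|\le\tfrac{n-1}{2}\sqrt{2\,\mathrm{Var}_p(\log p_J)}$. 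The quantity $\mathrm{Var}_p(\log p_J)$ is the varentropy of $p$, which for any distribution on $n$ points is $O(\log^2 n)$: the surprisal $\log(1/p_J)$ carries constant probability mass only on values of size $O(\log n)$, and larger values contribute negligibly. Hence $\alpha\frac{d}{d\alpha}S(\alpha)=O(n\log n)$ uniformly over all prefixes.

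Plugging this in, the conditional expected number of round-$t$ breakpoints is at most $\int_{\alo}^{\ahi}\frac{O(n\log n)}{\alpha}\,d\alpha=O\!\left(n\log n\,\log(\ahi/\alo)\right)$, and summing over the $k$ rounds proves part~(1). For part~(2) the only issue is that this estimate blows up as $\alpha\to0$, so I would additionally use the complementary, $\alpha$-free bound $\frac{d}{d\alpha}S(\alpha)=|\mathrm{Cov}_p(J,\log d_J)|\le\tfrac{n-1}{2}\,\expect|\log d_J-\log d_{J'}|\le\tfrac{n-1}{2}\log\dratio$ on a small initial window $[0,\alpha_0]$, reserving the logarithmic-scale bound for $[\alpha_0,\ahi]$. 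Splitting at $\alpha_0\asymp\log(n)/\log\dratio$ balances the two contributions at $O(n\log n)$ each and yields $O\bigl(n\log n\,\log(\ahi\log\dratio)\bigr)$ per round, hence the stated bound after summing over $k$.

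The main obstacle is precisely this uniform pointwise bound on the crossing rate $\alpha\frac{d}{d\alpha}S(\alpha)$; the rest is bookkeeping around the monotone bin structure of Lemma~\ref{lem:monotoneBins} (including ties and zero distances, which only remove points from the support) and the freshness of each $z_t$. The bound must hold for \emph{every} prefix simultaneously so that it survives the unknown distribution over prefixes induced by $z_1,\dots,z_{t-1}$, and it is exactly the reduction of the crossing rate to the varentropy of the $d^\alpha$-sampling distribution -- together with the $O(\log^2 n)$ varentropy bound -- that replaces the naive branching-based count of $n^{O(k)}$ by a polynomial bound, and is where the randomness of $\vec{Z}$ is essential.
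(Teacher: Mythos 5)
Your proposal is correct and reaches both stated bounds, and at the top level it follows the same skeleton as the paper's proof: condition on $z_1,\dots,z_{t-1}$ to fix the prefix on each $\alpha$-segment, exploit the freshness of $z_t$ to bound the expected number of round-$t$ boundary crossings per segment, telescope within segments, and for part (2) split the range at a threshold near $1/\log\dratio$ where an $\alpha$-free crossing-rate bound takes over from the $O(n\log n)/\alpha$ bound. Where you genuinely depart from the paper is in the two central estimates. First, you compute the per-segment expected crossing count \emph{exactly} as $\sum_{i=1}^{n-1}\bigl(b_i(\alpha_{\ell+1})-b_i(\alpha_\ell)\bigr)$, since each monotone boundary $b_i=D_i/D_n$ crosses the fixed threshold $z_t$ at most once and does so with probability equal to its increment; this replaces the paper's more laborious per-bucket case analysis (the events $E_{t,j}$ and $E_{t,j}'$, bounding $j-i$ by the smallest bucket width and $P(E_{t,j})$ by a derivative) with a one-line identity. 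Second, where the paper bounds each $\bigl|\tfrac{\partial}{\partial\alpha}(D_i/D_n)\bigr|$ individually by $\min\{2\log(n)/\alpha,\log\dratio\}$ (Lemma~\ref{lem:deriv}, via an explicit two-case computation), you bound the \emph{sum} of these derivatives at once by writing $S(\alpha)=n-\expect_p[J]$ and identifying $\tfrac{d}{d\alpha}S=|\mathrm{Cov}_p(J,\log d_J)|$, then controlling this via the symmetrization identity and an $O(\log^2 n)$ varentropy bound; this yields the same $O(n\log n)/\alpha$ and $O(n\log\dratio)$ rates and makes transparent why the randomness of $z_t$ buys a polynomial rather than exponential count. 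Two small caveats, neither fatal and both shared with the paper's own treatment: counting boundary crossings over-counts discontinuities when several $b_i$ cross $z_t$ at the same $\alpha$ (harmless for an upper bound), and the $\log\dratio$ bound implicitly excludes zero distances from the support, which you correctly flag; you should also spell out the varentropy estimate (split atoms at $p_i\gtrless n^{-2}$), since $\sum_i p_i\log^2(1/p_i)=O(\log^2 n)$ is the one step you only assert.
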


\begin{proof}[Proof sketch] 
  Consider round $t$ in the run of phase 1 in Algorithm \ref{alg:clus} on
  instance $\V$ with vector $\vec{Z}$. Suppose at the beginning of round $t$,
  there are $L$ possible states of the algorithm; that is, $L$
  $\alpha$-intervals such that within each interval, the choice of the first
  $t-1$ centers is fixed. By Lemma \ref{lem:monotoneBins}, we can write these
  sets as $[\alpha_0,\alpha_1],\dots,[\alpha_{L-1},\alpha_L]$, where
  $0=\alpha_0<\cdots<\alpha_L=\ahi$. Given one interval,
  $[\alpha_{\ell},\alpha_{\ell+1}]$, we claim the expected number of new
  breakpoints, denoted by $\#I_{t,\ell}$, introduced by choosing a center in
  round $t$ starting from the state for interval $\ell$ is bounded by
  $$\min\left\{2n\log(\dratio) (\alpha_{\ell+1}-\alpha_\ell),\,\, n-t-1,\,\,
  4n\log (n)(\log\alpha_{\ell+1}-\log\alpha_\ell)\right\}.$$ Note that
  $\#I_{t,\ell}+1$ is the number of possible choices for the next center in
  round $t$ using $\alpha$ in $[\alpha_{\ell},\alpha_{\ell+1}]$.

  The claim gives three different upper bounds on the expected number of new
  breakpoints, where the expectation is only over $z_t \sim
  \uniform([0,1])$, and the bounds hold for any given
  configuration of $d_1\geq\cdots\geq d_n$ (i.e., it does not depend on the
  centers that have been chosen on prior rounds). To prove the first statement
  in Theorem~\ref{thm:expect}, we only need the last of the three bounds, and to
  prove the second statement, we need all three bounds.

  First we show how to prove the first part of the theorem assuming the claim,
  and later we will prove the claim. We prove the first statement as follows.
  Let $\#I$ denote the total number of discontinuities of $\alpha \mapsto
  \seed_\alpha(\V,\vec{Z})$ for $\alpha \in [\alo, \ahi]$. Then we have

  \begin{align*}
  \E_{\vec{Z}}[\#I]
  &\leq \E_{\vec{Z}} \left[\sum_{t=1}^k\sum_{\ell=1}^{L-1} (\#I_{t,\ell})\right]\\
  &= \sum_{t=1}^k \sum_{\ell=0}^{L-1} \E_{\vec{Z}}[\#I_{t,\ell}]\\
  &\leq \sum_{t=1}^k \sum_{\ell=0}^{L-1} 4n\log (n)\left(\log \alpha_{\ell+1}-\log\alpha_\ell\right) \\
  &= \sum_{t=1}^k 4n\log(n) \left(\log \ahi-\log\alo\right) \\
  &= O\left(nk\log n\log \frac{\ahi}{\alo}\right)
  \end{align*}

  Now we prove the second part of the theorem. Let $\ell^*$ denote the largest
  value such that $\alpha_{\ell^*}<\frac{1}{\log \dratio}$. Such an $\ell^*$
  must exist because $\alpha_0=0$. Then we have $\alpha_{\ell^*}<\frac{1}{\log
  \dratio}\leq\alpha_{\ell^*+1}$. We use three upper bounds for three different
  cases of alpha intervals: the first $\ell^*$ intervals, interval
  $[\alpha_{\ell^*},\alpha_{\ell^*+1}]$, and intervals $\ell^*+2$ to $L$. Let
  $\#I$ denote the total number of discontinuities of $\alpha \mapsto
  \seed_\alpha(\V,\vec{Z})$ for $\alpha \in [0,\ahi]$.

  \begin{align*}
  \E_{\vec{Z}}[\#I]
  &\leq \E_{\vec{Z}} \left[\sum_{t=1}^k\sum_{\ell=1}^{L-1} (\#I_{t,\ell})\right]\\
  &=\sum_{t=1}^k \sum_{\ell=0}^{L-1} \E_{\vec{Z}}[\#I_{t,\ell}]\\
  &=\sum_{t=1}^k\left(
    \sum_{\ell=0}^{\ell^*-1} \E_{\vec{Z}}[\#I_{t,\ell}] +
    \E_{\vec{Z}}[\#I_{t,\ell^*}]
    + \sum_{\ell=\ell^*+1}^{L-1} \E_{\vec{Z}}[\#I_{t,\ell}]
  \right)\\
  &\leq \sum_{t=1}^k\left( \sum_{\ell=0}^{\ell^*-1} \left(2n\log \dratio (\alpha_{\ell+1}-\alpha_\ell)\right) + (n-t-1)
  + \sum_{\ell=\ell^*+1}^{L-1} \left(4n\log n(\log \alpha_{\ell+1}-\log\alpha_\ell)\right) \right)\\
  &\leq \sum_{t=1}^k\left( 2n\log \dratio \cdot \alpha_{\ell^*} + n + 4n\log n\left(\log \alpha_h-\log\alpha_{\ell^*}\right) \right)\\
  &\leq \sum_{t=1}^k\left( 2n\log(\dratio) \cdot \frac{1}{\log \dratio} + n + 4n\log(n)\left(\log \alpha_h-\log\left(\frac{1}{\log \dratio}\right)\right) \right)\\
  & = O\left(n k \log(n)(\log(\ahi \log (\dratio)) \right)
  \end{align*}

  Now we will prove the claim. Given $z_t\in[0,1]$, let $x$ and $y$ denote the
  minimum indices s.t.\ $\frac{D_x(\alpha_{\ell})}{D_n(\alpha_{\ell})}>z_t$ and
  $\frac{D_y(\alpha_{\ell+1})}{D_n(\alpha_{\ell+1})}>z_t$, respectively. Then
  from Lemma \ref{lem:monotoneBins}, the number of breakpoints for $\alpha \in
  [\alpha_\ell, \alpha_{\ell + 1}]$ is exactly $\#I_{t,\ell} = x-y$ (see Figure
  \ref{fig:case2}). Therefore, our goal is to compute $\E_{z_t}[x-y]$. One
  method is to sum up the expected number of breakpoints for each interval $I_v$
  by bounding the maximum possible number of breakpoints given that $z_t$ lands
  in $I_v$. However, this will sometimes lead to a bound that is too coarse. For
  example, if $\alpha_{\ell+1}-\alpha_\ell=\epsilon\approx 0$, then for each
  bucket $I_{v_j}$, the maximum number of breakpoints is 1, but we want to show
  the expected number of breakpoints is proportional to $\epsilon$. To tighten
  up this analysis, we will show that for each bucket, the probability (over
  $z_t$) of achieving the maximum number of breakpoints is low.

  Assuming that $z_t$ lands in a bucket $I_{v_j}$, we further break into cases
  as follows. Let $i$ denote the minimum index such that
  $\frac{D_i(\alpha_{\ell+1})}{D_n(\alpha_{\ell+1})}>\frac{D_j(\alpha_{\ell})}{D_n(\alpha_{\ell})}$.
  Note that $i$ is a function of $j,\alpha_\ell$, and $\alpha_{\ell+1}$, but it
  does not depend on $z_t$. If $z_t$ is less than
  $\frac{D_i(\alpha_{\ell+1})}{D_n(\alpha_{\ell+1})}$, then we have the maximum
  number of breakpoints possible, since the algorithm chooses center $v_{i-1}$
  when $\alpha=\alpha_{\ell+1}$ and it chooses center $v_j$ when
  $\alpha=\alpha_\ell$. The number of breakpoints is therefore $j-i+1$, by Lemma
  \ref{lem:monotoneBins}. We denote this event by $E_{t,j}$, i.e., $E_{t,j}$ is
  the event that in round $t$, $z_t$ lands in $I_{v_j}$ and is less than
  $\frac{D_i(\alpha_{\ell+1})}{D_n(\alpha_{\ell+1})}$. If $z_t$ is instead
  greater than $\frac{D_i(\alpha_{\ell+1})}{D_n(\alpha_{\ell+1})}$, then the
  algorithm chooses center $v_i$ when $\alpha=\alpha_{\ell+1}$, so the number of
  breakpoints is $\leq j-i$. We denote this event by $E_{t,j}'$ (see Figure
  \ref{fig:case2}). Note that $E_{t,j}$ and $E_{t,j}'$ are disjoint and
  $E_{t,j}\cup E_{t,j}'$ is the event that $z_t\in I_{v_j}$.

  Within an interval $I_{v_j}$, the expected number of breakpoints is
  $$P(E_{t,j})\cdot(j-i+1)+P(E_{t,j}')\cdot(j-i)=P(E_{t,j}\cup E'_{t,j})\cdot
  (j-i)+P(E_{t,j}).$$ We will bound $j-i$ and $P(E_{t,j})$ separately.

\begin{figure}[ht]
  \subfigure{\includegraphics[width=.5\linewidth]{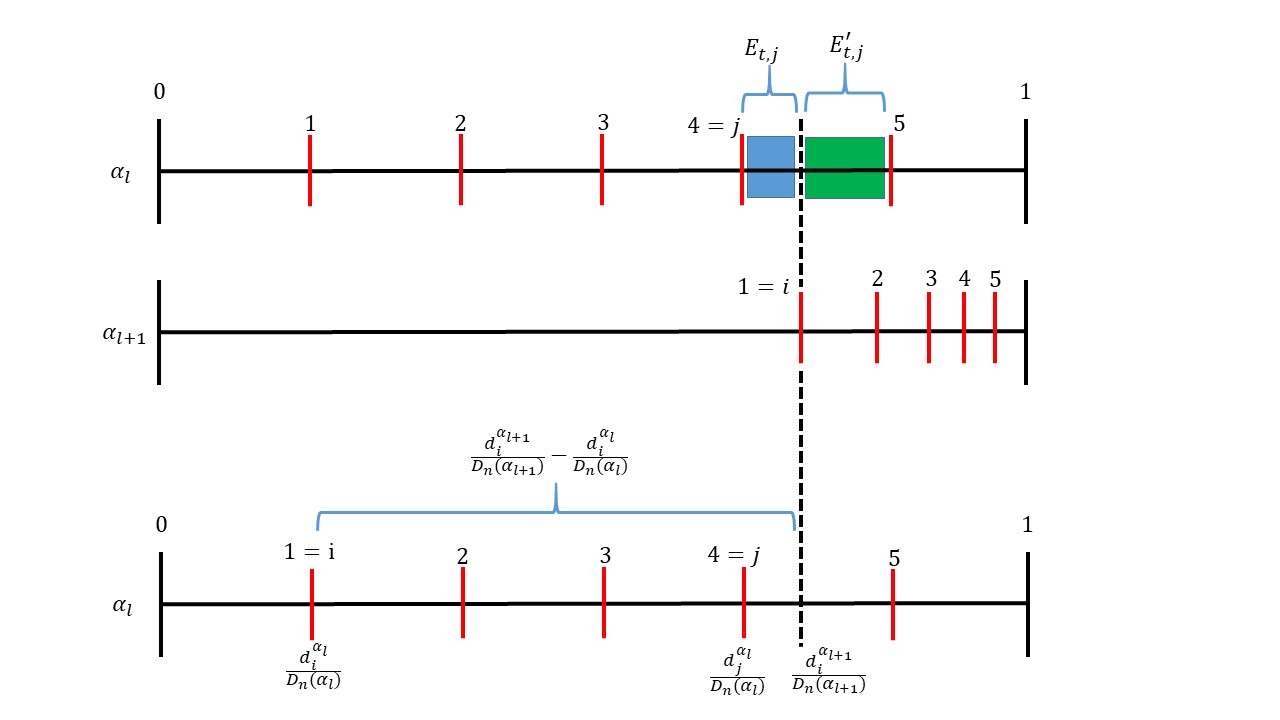}}
  \subfigure{\includegraphics[width=.5\linewidth]{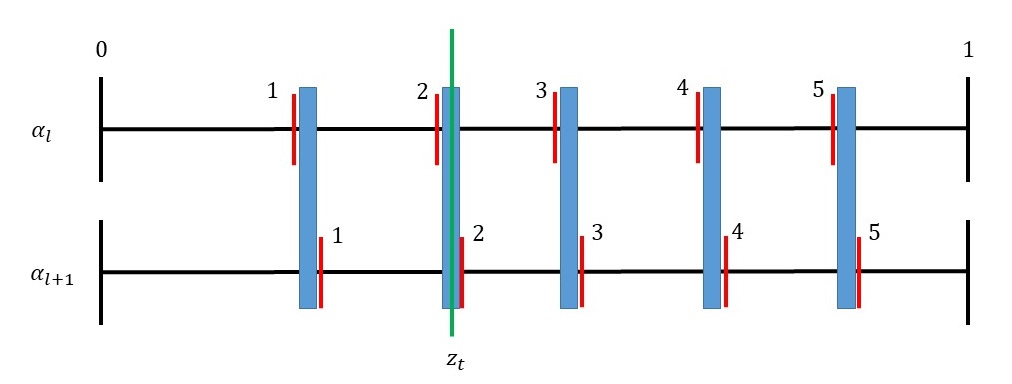}}
  \caption{Definition of $E_{t,j}$ and $E_{t,j}'$, and details for bounding $j-i$ (left).
	Intuition for bounding $P(E_{t,j})$, where the blue regions represent $E_{t,j}$ (right).
	}
  	\label{fig:case2}
\end{figure}

  First we upper bound $P(E_{t,j})$. Recall this is the probability that $z_t$
  is in between $\frac{D_j(\alpha_{\ell})}{D_n(\alpha_\ell)}$ and
  $\frac{D_i(\alpha_{\ell+1})}{D_n(\alpha_{\ell+1})}$, which is
  $$\frac{D_i(\alpha_{\ell+1})}{D_n(\alpha_{\ell+1})}-\frac{D_j(\alpha_{\ell})}{D_n(\alpha_\ell)}\leq\frac{D_j(\alpha_{\ell+1})}{D_n(\alpha_{\ell+1})}-\frac{D_j(\alpha_{\ell})}{D_n(\alpha_\ell)}.$$
  Therefore, we can bound this quantity by bounding the derivative
  $\left|\frac{\partial}{\partial\alpha}
  \left(\frac{D_j(\alpha)}{D_n(\alpha)}\right)\right|$, which we show is at most
  $\min\left\{\frac{2}{\alpha}\log n , \log \left(\dratio\right)\right\}  $ in
  Appendix \ref{app:theory}.


  Now we upper bound $j-i$. Recall that $j-i$ represents the number of intervals
  between $\frac{D_i(\alpha_\ell)}{D_n(\alpha_{\ell})}$ and
  $\frac{D_j(\alpha_\ell)}{D_n(\alpha_\ell)}$ (see Figure \ref{fig:case2}). Note
  that the smallest interval in this range has width
  $\frac{d_j^{\alpha_\ell}}{D_n(\alpha_\ell)}$, and
  $$\frac{D_j(\alpha_\ell)}{D_n(\alpha_\ell)}-\frac{D_i(\alpha_\ell)}{D_n(\alpha_{\ell})}\leq
  \frac{D_i(\alpha_{\ell+1})}{D_n(\alpha_{\ell+1})}-\frac{D_i(\alpha_\ell)}{D_n(\alpha_{\ell})}.$$
  Again, we can use the derivative of $\frac{D_i(\alpha)}{D_n(\alpha)}$ to bound
  $\frac{D_i(\alpha_{\ell+1})}{D_n(\alpha_{\ell+1})}-\frac{D_i(\alpha_\ell)}{D_n(\alpha_{\ell})}.$
  To finish off the proof of the claim, we have

  \begin{align*}
  \E[\#I_{t,\ell}]
  &\leq\sum_j \left(P(E_{t,j}')\cdot (j-i)+P(E_{t,j})\cdot (j-i+1)\right)\\
  &= \sum_j \left( P(E_{t,j}'\cup E_{t,j})\cdot (j-i)+ P(E_{t,j})\right)\\
  &= \sum_j P(z_t\in I_{v_j})\cdot (j-i) + \sum_j P(E_{t,j})\\
  &\leq \sum_j\left(\frac{d_j^{\alpha_\ell}}{D_n(\alpha_\ell)}\right)
  \left( \frac{D_n(\alpha_\ell)}{d_j^{\alpha_\ell}}\cdot \frac{D_j(\alpha)}{D_n(\alpha)}\bigg\rvert_{\alpha_\ell}^{\alpha_{\ell+1}}\right)
  +\sum_j \left( \frac{D_j(\alpha)}{D_n(\alpha)}\bigg\rvert_{\alpha_\ell}^{\alpha_{\ell+1}}\right)\\
  &\leq 2n \left( \frac{D_j(\alpha)}{D_n(\alpha)}\bigg\rvert_{\alpha_\ell}^{\alpha_{\ell+1}} \right)\\
  &\leq 2n \min\left( 2\log n(\log \alpha_{\ell+1}-\log \alpha_\ell) , \log D(\alpha_{\ell+1}-\alpha_\ell) \right)
  \end{align*}

  This accounts for two of the three upper bounds in our claim. To complete the
  proof, we note that $\E[\#I_{t,\ell}]\leq n-t-1$ simply because there are only
  $n-t$ centers available to be chosen in round $t$ of the algorithm (and
  therefore, $n-t-1$ breakpoints).
\end{proof}


Now we analyze phase 2 of Algorithm \ref{alg:clus}.
Since phase 2 does not have randomness, we use combinatorial techniques.
Recall that $\lloyd_\beta(\V,C,T)$ denotes the cost of the outputted clustering from phase 2 of Algorithm \ref{alg:clus} on instance $\V$ with initial centers $C$,
and a maximum of $T$ iterations.

\begin{theorem} \label{thm:pdim_upper}
Given $T\in\mathbb{N}$, a clustering instance $\V$, and a fixed set $C$ of initial centers,
the number of discontinuities of $\lloyd_\beta(\V,C,T)$ as a function of $\beta$ on instance $\V$ is $O(\min(n^{3T},n^{k+3}))$.
\end{theorem}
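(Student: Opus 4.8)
The plan is to exploit the fact that the parameter $\beta$ influences Phase~2 of Algorithm~\ref{alg:clus} only through the argmin computation in Step~6: the Voronoi tiling in Step~5 depends solely on the fixed metric $d$ and the current centers, never on $\beta$. Consequently, for a fixed instance $\V$ and fixed initial centers $C$, the entire trajectory of the algorithm---the sequence of center sets $C^{(0)}=C, C^{(1)}(\beta),\dots,C^{(T)}(\beta)$ produced over the (at most) $T$ iterations---is a piecewise-constant function of $\beta$, and on each piece the output clustering and hence its cost is constant. Thus every discontinuity of $\beta\mapsto\lloyd_\beta(\V,C,T)$ is a value of $\beta$ at which this trajectory changes, and it suffices to bound the number of such \emph{breakpoints}.

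The analytic heart of the argument is a bound on how often a single argmin can change. First I would fix a subset $S\subseteq V$ (which will be a Voronoi tile) and, for each candidate center $x\in V$, consider $g_x(\beta)=\sum_{v\in S}d(x,v)^\beta$. The new center for tile $S$ is $\operatorname{argmin}_{x\in V} g_x(\beta)$ (breaking ties by a fixed rule, e.g.\ lowest index), which is the lower envelope of the $n$ functions $g_x$. For any two candidates $x,y$, the difference $g_x(\beta)-g_y(\beta)=\sum_{v\in S}\bigl(d(x,v)^\beta-d(y,v)^\beta\bigr)$ is a linear combination of at most $2|S|$ exponentials $r^\beta=e^{\beta\ln r}$ in distinct bases $r$; by the standard root bound for exponential sums (a generalized-Rolle induction on the number of terms), it has at most $2|S|-1=O(|S|)$ real zeros. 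Since the lower envelope changes identity only at a $\beta$ where two of the $g_x$ cross, the number of breakpoints of this single argmin is at most the total number of pairwise crossings, namely $\binom{n}{2}\cdot O(|S|)=O(n^2|S|)$.

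Summing over the $k$ tiles $S_1,\dots,S_k$ of a fixed center set $C$, and using $\sum_i|S_i|=n$ because the tiles partition $V$, the successor map $\beta\mapsto C'(\beta)$ computed from $C$ in a single iteration has at most $\sum_i O(n^2|S_i|)=O(n^3)$ breakpoints. Crucially, this transition is \emph{history-independent}: it is a fixed function of $\beta$ determined entirely by $C$. From here the two bounds follow from the same $O(n^3)$-per-transition estimate in two different ways. For the $n^{3T}$ bound I would induct on iterations: within any interval of $\beta$ on which the first $t$ center sets are already fixed, $C^{(t)}$ is fixed, so its transition splits that interval into at most $O(n^3)+1$ subintervals; hence the number of intervals grows by a factor $O(n^3)$ per iteration, giving $O(n^3)^T=O(n^{3T})$ intervals and breakpoints after $T$ iterations. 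For the $n^{k+3}$ bound I would instead argue that every breakpoint of $\lloyd_\beta$ is a breakpoint of the transition from some reachable center set; since every chosen center lies in $V$, each $C^{(t)}$ is one of at most $\binom{n}{k}\le n^k$ many $k$-subsets of $V$, and each such set contributes its fixed set of $O(n^3)$ transition breakpoints, so the union over all reachable states has size at most $n^k\cdot O(n^3)=O(n^{k+3})$. Taking the better of the two yields $O(\min(n^{3T},n^{k+3}))$; the single parameter value $\beta=\infty$ (and consistent tie-breaking) contributes only lower-order additive terms.

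I expect the main obstacle to be the argmin / exponential-sum step: correctly bounding the number of sign changes of $g_x-g_y$ via the root count for sums of exponentials, and then arguing cleanly that breakpoints of the lower envelope are captured by pairwise crossings. The second delicate point, which distinguishes the two bounds, is recognizing that the per-state transition is history-independent, so that for the $n^{k+3}$ bound its breakpoints may be \emph{unioned} over the at most $n^k$ reachable states without multiplicative blowup, whereas for the $n^{3T}$ bound the same estimate is applied multiplicatively across the $T$ iterations.
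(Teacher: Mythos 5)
Your proposal is correct and follows essentially the same route as the paper: both exploit that the Voronoi step is $\beta$-independent, bound the roots of the pairwise comparison functions $\sum_{v}d(c_1,v)^\beta-\sum_{v}d(c_2,v)^\beta$ via the exponential-sum root count (the paper's Theorem~\ref{thm:roots}), and then obtain $n^{3T}$ by compounding the $O(n^3)$ per-iteration breakpoints multiplicatively over $T$ rounds versus $n^{k+3}$ by unioning over the at most $n^k$ possible center sets. Your accounting via the lower envelope and the explicit use of $\sum_i|S_i|=n$ is slightly more careful than the paper's, but the argument is the same.
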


\begin{sproof}
Given $\V$ and a set of initial centers $C$, we bound the number of discontinuities introduced in the Lloyd's step of Algorithm \ref{alg:clus}.
First, we give a bound of $n^{k+3}$ which holds for any value of $T$.
Recall that Lloyd's algorithm is a two-step procedure, and note that the Voronoi partitioning step is independent of $\beta$.
Let $\{C_1,\dots,C_k\}$ denote the Voronoi partition of $V$ induced by $C$.
Given one of these clusters $C_i$, the next center is computed by $\min_{c\in C_i}\sum_{v\in C_i}d(c,v)^\beta$.
Given any $c_1,c_2\in C_i$, the decision for whether $c_1$ is a better center than $c_2$ is governed by
$\sum_{v\in C_i}d(c_1,v)^\beta<\sum_{v\in C_i}d(c_2,v)^\beta$.
By a consequence of Rolle's theorem, this equation has at most $2n+1$ roots.
This equation depends on the set $C$ of centers, and the two points $c_1$ and $c_2$,
therefore, there are ${n\choose k}\cdot {n\choose 2}$ equations each with $2n+1$ roots.
We conclude that there are $n^{k+3}$ total intervals of $\beta$ such that the outcome of Lloyd's method is fixed.

Next we give a different analysis which bounds the number of discontinuities by $n^{3T}$, where $T$ is the maximum number of Lloyd's iterations.
By the same analysis as the previous paragraph, if we only consider one round, then the total number of equations which govern the output
of a Lloyd's iteration is ${n\choose 2}$, since the set of centers $C$ is fixed. These equations have $2n+1$ roots, so the total number
of intervals in one round is $O(n^3)$. Therefore, over $T$ rounds, the number of intervals is $O(n^{3T})$.
\end{sproof}

By combining Theorem \ref{thm:expect} with Theorem \ref{thm:pdim_upper}, and
using standard learning theory results, we can bound the sample complexity
needed to learn near-optimal parameters $\alpha,\beta$ for an unknown
distribution $\mathcal{D}$ over clustering instances. Recall that
$\clus_{\alpha,\beta}(\V)$ denotes the expected cost of the clustering outputted
by $\lloyds$, with respect to the target clustering, taken over both the random
draw of a new instance and the algorithm randomness. Let $H$ denote an upper
bound on $\clus_{\alpha,\beta}(\V)$.

\begin{theorem} \label{thm:rademacher}
  Let $\mathcal{D}$ be a distribution over clustering instances with $k$
  clusters and at most $n$ points, and let $\sample = \bigl\{ (\V^{(i)},
  \vec{Z}^{(i)} \bigr\}_{i=1}^m$ be an i.i.d. sample from $\mathcal{D} \times
  \uniform([0,1]^k)$. For any parameters $\alpha, \beta$, let $L_S(\alpha,
  \beta) = \frac{1}{m} \sum_{i=1}^m \clus_{\alpha,\beta}(\V^{(i)},
  \vec{Z}^{(i)})$ denote the sample loss, and $L_{\mathcal{D}}(\alpha, \beta)
  = \E[\clus_{\alpha,\beta}(\V,\vec{Z})]$ denote the expected loss. The
  following statements hold:
  \begin{enumerate}[leftmargin=*]
    \item For any $\epsilon > 0$, $\delta > 0$, and parameter interval $[\alo,
    \ahi]$ with $\alo > 0$, if the sample is of size $m =
    O\left(\left(\frac{H}{\epsilon}\right)^2 \left(\min(T,k)\log n + \log k +
    \log\frac{1}{\delta} + \log(\log( \frac{\ahi}{\alo})) \right)\right)$
    then with probability at least $1-\delta$, for all $(\alpha, \beta) \in
    [\alo, \ahi] \times [1,\infty]$, we have $|L_S(\alpha, \beta) -
    L_{\mathcal{D}}(\alpha, \beta)| < \epsilon$.
    \item
    Suppose that the maximum ratio of any non-zero distances is bounded by
    $\dratio$ for all clustering instances in the support of $\mathcal{D}$. Then
    for any $\epsilon > 0$, $\delta > 0$, and parameter interval $[0, \ahi)$, if
    the sample size is $m = O\left(\left(\frac{H}{\epsilon}\right)^2
    \left(\min(T,k)\log n + \log k + \log\frac{1}{\delta} + \log(\log(\ahi
    \log(\dratio))) \right)\right)$, then with probability at least
    $1-\delta$, for all $(\alpha, \beta) \in [0,\ahi] \times [1,\infty]$, we
    have that $|L_S(\alpha, \beta) - L_{\mathcal{D}}(\alpha, \beta)| <
    \epsilon$.
  \end{enumerate}


\end{theorem}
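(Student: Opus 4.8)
The plan is to control the capacity of the function class $\mathcal{F} = \{(\V, \vec{Z}) \mapsto \clus_{\alpha,\beta}(\V, \vec{Z}) : \alpha \in [\alo, \ahi],\ \beta \in [1, \infty]\}$ via its empirical Rademacher complexity and then invoke the standard uniform-convergence bound for bounded losses. The starting observation is that for any fixed pair $(\V, \vec{Z})$, the map $(\alpha, \beta) \mapsto \clus_{\alpha,\beta}(\V, \vec{Z})$ is piecewise constant with a product-like cell structure: the $\alpha$-axis is cut at the phase-1 breakpoints (the discontinuities of $\alpha \mapsto \seed_\alpha(\V, \vec{Z})$), and within each resulting $\alpha$-interval the seeding output $C$ is fixed, so the $\beta$-axis is then cut at the phase-2 breakpoints (the discontinuities of $\beta \mapsto \lloyd_\beta(\V, C, T)$). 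Hence the number of distinct values the loss takes over the whole parameter rectangle is at most the product of one plus the number of $\alpha$-discontinuities and one plus the number of $\beta$-discontinuities, and we already have handles on both from Theorems~\ref{thm:expect} and~\ref{thm:pdim_upper}.

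To bound the Rademacher complexity I would apply Massart's finite-class lemma: for a fixed sample $S = \{(\V^{(i)}, \vec{Z}^{(i)})\}_{i=1}^m$, if $N$ is the number of distinct loss vectors $(\clus_{\alpha,\beta}(\V^{(i)}, \vec{Z}^{(i)}))_{i=1}^m$ realizable as $(\alpha,\beta)$ range over the parameter space, then $\hat{R}_S(\mathcal{F}) = O(H\sqrt{\log N / m})$. The quantity $N$ is at most the number of cells in the common refinement of the $m$ per-instance parameter partitions. Overlaying all instances, the $\alpha$-axis is divided into at most $1 + \sum_i A_i$ vertical strips, where $A_i$ is the number of phase-1 discontinuities for instance $i$; within any such strip every seeding output $C^{(i)}$ is fixed, so the $\beta$-axis is further divided into at most $1 + \sum_i B_i$ bands, where each $B_i \le O(\min(n^{3T}, n^{k+3}))$ by Theorem~\ref{thm:pdim_upper}. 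This gives $N \le \bigl(1 + \sum_i A_i\bigr)\bigl(1 + m\cdot O(\min(n^{3T}, n^{k+3}))\bigr)$.

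The main obstacle, and the reason to route through Rademacher complexity rather than a worst-case pseudodimension argument, is that Theorem~\ref{thm:expect} bounds only $\E_{\vec{Z}}[A_i]$, whereas the worst-case number of $\alpha$-discontinuities could be exponential. The key structural fact is that $\vec{Z}^{(i)}$ is itself part of the i.i.d. draw, so I can take the expectation over the entire sample $S$ and legitimately use the in-expectation bound. Two applications of Jensen's inequality then suffice: since $\log$ is concave, $\E_S[\log N] \le \log \E_S[N]$, and because $\E_S[\sum_i A_i] = \sum_i \E[A_i] \le m\cdot O(nk\log n\,\log(\ahi/\alo))$ by part~1 of Theorem~\ref{thm:expect}, this yields $\E_S[\log N] = O(\log m + \log k + \min(T,k)\log n + \log\log(\ahi/\alo))$; since $\sqrt{\cdot}$ is concave, $\E_S[\hat{R}_S(\mathcal{F})] \le O\bigl(H\sqrt{\E_S[\log N]/m}\bigr)$.

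Finally I would plug this into the standard Rademacher generalization bound: with probability at least $1-\delta$, $\sup_{\alpha,\beta}|L_S(\alpha,\beta) - L_{\mathcal{D}}(\alpha,\beta)| \le 2\,\E_S[\hat R_S(\mathcal{F})] + O\bigl(H\sqrt{\log(1/\delta)/m}\bigr)$, set the right-hand side to $\epsilon$, and solve for $m$; the stray $\log m$ factor is lower order and is absorbed in the usual way, giving $m = O\bigl((H/\epsilon)^2(\min(T,k)\log n + \log k + \log(1/\delta) + \log\log(\ahi/\alo))\bigr)$. Part~2 follows by the identical argument, substituting the $[0,\ahi]$ discontinuity bound $O(nk\log n\,\log(\ahi\log(\dratio)))$ from part~2 of Theorem~\ref{thm:expect}, which merely replaces the $\log\log(\ahi/\alo)$ term by $\log\log(\ahi\log(\dratio))$.
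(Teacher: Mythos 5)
Your proposal is correct and follows the same architecture as the paper's proof: exploit the piecewise-constant product structure of $(\alpha,\beta)\mapsto\clus_{\alpha,\beta}(\V,\vec{Z})$, count cells in the common refinement over the sample using Theorems~\ref{thm:expect} and~\ref{thm:pdim_upper}, apply Massart's lemma to bound the Rademacher complexity, and finish with the standard generalization bound. The one place you genuinely diverge is in handling the fact that the number of $\alpha$-discontinuities $A_i$ is random with only an in-expectation bound: the paper applies Markov's inequality with failure probability $\delta/(2m)$ to each instance, union-bounds over the $m$ instances, and conditions on the resulting high-probability event, which inflates the cell count to $N = O(m^3 nk\log(n)\log(\ahi/\alo)\min(n^{3T},n^{k+3})/\delta)$ and costs half the failure budget; you instead keep everything in expectation, using concavity of $\sqrt{\cdot}$ and $\log$ to pass $\E_S$ inside via $\E_S[\hat R_S(\mathcal{F})]\leq O\bigl(H\sqrt{\log\E_S[N]/m}\bigr)$, and then invoke the generalization bound stated in terms of the \emph{expected} Rademacher complexity. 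Your route is arguably cleaner --- it avoids the spurious $\log(1/\delta)$ and extra $\log m$ factors inside the logarithm and spends the failure probability only once --- while the paper's route has the minor advantage of producing an explicit high-probability statement about the partition itself (useful for the algorithmic results that reuse it). Both yield the claimed sample complexity after absorbing lower-order $\log m$ terms, and part~2 follows in both cases by swapping in the second bound from Theorem~\ref{thm:expect}.
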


\begin{proof}  
  We begin by proving the first statement, which guarantees uniform convergence
  for parameters $(\alpha, \beta) \in [\alo, \ahi] \times [1,\infty]$.

  First, we argue that with probability at least $1 - \delta/2$, for all sample
  indices $i \in [m]$, the number of discontinuities of the function $\alpha
  \mapsto \seed_\alpha(\V^{(i)}, \vec{Z}^{(i)})$ for $\alpha \in [\alo, \ahi]$
  is $O(mnk \log(n) \log(\ahi/\alo) / \delta)$. By Theorem~\ref{thm:expect}, we
  know that for any sample index $i$, the expected number of discontinuities of
  $\alpha \mapsto \seed_\alpha(\V^{(i)}, \vec{Z}^{(i)})$ over the draw of
  $Z^{(i)} \sim \uniform([0,1]^d)$ is $O(nk\log(n)\log(\ahi/\alo))$. Applying
  Markov's inequality with failure probability $\delta / (2m)$, we have that
  with probability at least $1-\delta/(2m)$ over $\vec{Z}^{(i)}$, the function
  $\alpha \mapsto \seed_\alpha(\V^{(i)}, \vec{Z}^{(i)})$ has at most $O(mnk
  \log(n) \log(\ahi / \alo) / \delta)$ discontinuities. The claim follows by
  taking the union bound over all $m$ functions. We assume this high probability
  event holds for the rest of the proof.

  Next, we construct a set of $N = O(m^3nk \log(n) \log(\ahi / \alo)
  \min(n^{3T}, n^{k+3}) / \delta)$ parameter values $(\alpha_1, \beta_1), \dots,
  (\alpha_N, \beta_N)$ that exhibit all possible behaviors of the $\lloyds$
  algorithm family on the entire sample of clustering instances. Taking the
  union of all $O(m^2nk \log(n) \log(\ahi / \alo) / \delta)$ discontinuities of
  the functions $\alpha \mapsto \seed_\alpha(V^{(i)}, \vec{Z}^{(i)})$ for $i \in
  [m]$, we can partition $[\alo, \ahi]$ into $O(m^2nk \log(n) \log(\ahi / \alo)
  / \delta)$ intervals such that for each interval $I$ and any $\alpha, \alpha'
  \in I$, we have $\seed_\alpha(\V^{(i)}, \vec{Z}^{(i)}) =
  \seed_{\alpha'}(\V^{(i)}, \vec{Z}^{(i)})$ for all $i \in [m]$. In other words,
  on each interval and each sample instance, the initial centers chosen by phase
  1 of the algorithm is fixed. Now consider any interval $I$ in this partition.
  For each instance $(\V^{(i)}, \vec{Z}^{(i)})$ and any $\alpha \in I$, the set
  of initial centers chosen by phase 1 of the algorithm is fixed. Therefore,
  Theorem~\ref{thm:pdim_upper} guarantees that the number of discontinuities of
  the function $\beta \mapsto \lloyd_{\alpha, \beta}(\V^{(i)}, \vec{Z}^{(i)})$
  is at most $O(\min(n^{3T}, n^{k+3}))$. By a similar argument, it follows that
  we can partition the beta parameter space $[1,\infty]$ into $O(m\min(n^{3T},
  n^{k+3}))$ intervals such that for each interval the output clustering is
  constant for all instances (when run with any parameter $\alpha \in I$).
  Combined, it follows that we can partition the joint parameter space $[\alo,
  \ahi] \times [1,\infty]$ into $N = O(m^3nk \log(n) \log(\ahi / \alo)
  \min(n^{3T}, n^{k+3}) / \delta)$ rectangles such that for each rectangle and
  every instance $(\V^{(i)}, \vec{Z}^{(i)})$, the clustering output by $\lloyds$
  (and therefore the loss) is constant for all $(\alpha,\beta)$ values in the
  rectangle. Let $(\alpha_1, \beta_1), \dots, (\alpha_N, \beta_N)$ be a
  collection of parameter values obtained by taking one pair from each
  rectangle in the partition.

  Finally, to prove the uniform convergence guarantee, we bound the empirical
  Rademacher complexity of the family of loss functions $\mathcal{F} = \{ f_{\alpha,\beta} :
  (\V, \vec{Z}) \mapsto \clus_{\alpha,\beta}(\V, \vec{Z}) \mid \alpha \in
  [\alo, \ahi], \beta \in [1,\infty]\}$ on the given sample of instances. The
  empirical Rademacher complexity is defined by
  \[
    \hat R(\mathcal{F}, \sample)
    = \frac{1}{m} \E_\sigma \left[
      \sup_{f_{\alpha,\beta} \in \mathcal{F}} \sum_{i=1}^m \sigma_i f_{\alpha,\beta}\bigl(V^{(i)}, \vec{Z}^{(i)}\bigr)
    \right],
  \]
  where $\sigma$ is a vector of $m$ i.i.d. Rademacher random variables. The
  above arguments imply that we can replace the supremum over all of
  $\mathcal{F}$ by a supremum only over the loss functions with parameters in
  the finite set $\{(\alpha_j, \beta_j)\}_{j=1}^N$:
  \[
    \hat R(\mathcal{F}, \sample)
    = \frac{1}{m} \E_\sigma \left[
      \sup_{j \in [N]} \sum_{i=1}^m \sigma_i f_{\alpha_j,\beta_j}\bigl(V^{(i)}, \vec{Z}^{(i)}\bigr)
    \right].
  \]
  Define the vector $a^{(j)} = \bigl(f_{\alpha_j, \beta_j}(\V^{(1)}, \vec{Z}^{(1)}),
  \dots, f_{\alpha_j, \beta_j}(\V^{(m)}, \vec{Z}^{(m)})\bigr) \in [0,H]^m$ for
  all $j \in [N]$ and let $A = \{a^{(j)} \mid j \in [N]\}$. We have that
  $\Vert a_j\Vert_2 \leq H \sqrt{m}$ for all $j \in [N]$. Applying Massart's Lemma
  \citep{massart2000some} gives
  \[
    \hat R(\mathcal{F}, \sample)
    = \frac{1}{m} \E_\sigma\left[\sup_{a \in A} \sum_{i=1}^m \sigma_i a_i \right]
    \leq H \sqrt{2 \log(N) / m}.
  \]
  From this, the final sample complexity guarantee follows from standard
  Rademacher complexity bounds \cite{bartlett2002rademacher} using the remaining
  $\delta/2$ failure probability.

  The proof of the second statement follows a nearly identical argument. The
  only step that needs to be modified is the partitioning of the $\alpha$
  parameter space. In this case, we use the second statement from
  Theorem~\ref{thm:expect}, which guarantees that for each index $i \in [m]$,
  the expected number of discontinuities of the function $\alpha \mapsto
  \seed_\alpha(\V^{(i)}, \vec{Z}^{(i)})$ for $\alpha \in [0,\ahi]$ (over the
  draw of $\vec{Z} \sim \uniform([0,1]^k)$) is $O(n k \log(n) \log(\ahi
  \log(\dratio)))$. Applying Markov's inequality and the union bound, we have
  that with probability at least $1-\delta/2$, for all $i \in [m]$, the number
  of discontinuities of $\alpha \mapsto \seed_\alpha(\V^{(i)}, \vec{Z}^{(i)})$
  is $O(m n k \log(n) \log(\ahi \log(\dratio))/\delta)$. Now the rest of the
  argument is identical to the proof for the first statement, except the number
  of rectangles in the partition is $N = O(m^3nk \log(n) \log(\ahi
  \log(\dratio)) \min(n^{3T}, n^{k+3}) / \delta)$, which replaces $1/\alo$ by
  $\log(\dratio)$.

%
\end{proof}


\paragraph{Computational efficiency}
In this section, we present an algorithm for tuning $\alpha$ whose running time
scales with the true number of discontinuities over the sample. Combined with
Theorem~\ref{thm:expect}, this gives a bound on the expected running time of
tuning $\alpha$.

\begin{algorithm}[t]
\caption{Dynamic algorithm configuration} \label{alg:fast}
\begin{algorithmic} 
\STATE {\bfseries Input:} Instance $\V = (V,d,k)$, vector $\vec{Z} \in [0,1]^k$, $\alo$, $\ahi$, $\epsilon>0$
\begin{enumerate}[leftmargin=*,itemsep=0pt]
  \item Initialize $Q$ to be an empty queue, then push the root node $(\langle\rangle, [\alo,\ahi])$ onto $Q$.
  \item While $Q$ is non-empty
  \begin{enumerate}[leftmargin=2ex,itemsep=0pt]
    \item Pop node $(C,A)$ from $Q$ with centers $C$ and alpha interval $A$.
		\item For each point $u_i$ that can be chosen as the next center,
		compute $A_i = \{\alpha\in A \,:\, \hbox{$u_i$ is the sampled center}\}$ up to error $\epsilon$
		and set $C_i=C\cup\{u_i\}$.
    \item For each $i$, if $|C_i| < k$, push $(C_i,A_i)$ onto $Q$. Otherwise, output $(C_i,A_i)$.
  \end{enumerate}
\end{enumerate}
\end{algorithmic}
\end{algorithm}

The high-level idea of our algorithm is to directly enumerate the set
of centers that can possibly be output by $d^\alpha$-sampling for a given
clustering instance $\V$ and pre-sampled randomness $\vec{Z}$.
We know from the previous section how to count the number of new breakpoints at
any given state in the algorithm, however, efficiently solving for the breakpoints poses a new challenge.
From the previous section, we know the breakpoints in $\alpha$ occur when
$\frac{D_i(\alpha)}{D_n(\alpha)}=z_t$. This is an exponential equation with $n$ terms,
and there is no closed-form solution for $\alpha$.
Although an arbitrary equation of this form may have up to $n$ solutions, our
key observation is that if $d_1\geq\cdots\geq d_n$, then $\frac{D_i(\alpha)}{D_n(\alpha)}$ must be monotone decreasing
(from Lemma \ref{lem:monotoneBins}), therefore, it suffices to binary search over
$\alpha$ to find the unique solution to this equation.
We cannot find the exact value of the breakpoint from binary search (and even if there was a closed-form solution for the breakpoint,
it might not be rational), however we can find the value to within additive error $\epsilon$ for all $\epsilon>0$.
Now we show that the cost function $\clus_{\alpha,\beta}(\V)$ is $(Hnk\log \dratio)$-Lipschitz in $\alpha$
for a constant-size interval,
therefore, it suffices to run $O\left(\log \frac{Hn\log \dratio}{\epsilon}\right)$ rounds of binary search to find a solution whose expected cost is within
$\epsilon$ of the optimal cost.
This motivates Algorithm \ref{alg:fast}.

\begin{lemma} \label{lem:net}
Given any clustering instance $\V$ with maximum non-zero distance ratio $\dratio$, $\epsilon>0$, and $\alpha\in (0,\infty)\cup\{\infty\}$,
$P_{\vec{Z}}(\seed_\alpha(\V,\vec{Z})\neq\seed_{\alpha+\epsilon}(\V, \vec{Z}))
\leq\min\left(2nk\log n\log\left(\frac{\alpha+\epsilon}{\alpha}\right) , \epsilon nk\log \dratio\right)$.
\end{lemma}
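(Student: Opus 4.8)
The plan is to bound the disagreement probability round-by-round and then sum over the $k$ seeding rounds. First I would decompose the event $\{\seed_\alpha(\V,\vec{Z})\neq\seed_{\alpha+\epsilon}(\V,\vec{Z})\}$ according to the first round in which the two runs select different centers. If the runs agree on rounds $1,\dots,t-1$, then after those rounds both parameter settings hold the identical ordered set of centers $C$, so the distances $d_1,\dots,d_n$ to $C$ are common to both parameter values. This is the key observation that lets the per-round analysis proceed with a fixed configuration of distances. By a union bound over $t=1,\dots,k$, it suffices to show that for any fixed sorted configuration $d_1\ge\cdots\ge d_n$, the probability over $z_t\sim\uniform([0,1])$ that round $t$ selects a different center under $\alpha$ versus $\alpha+\epsilon$ is at most $\min\bigl\{2n\log n\,\log(\tfrac{\alpha+\epsilon}{\alpha}),\,\epsilon n\log\dratio\bigr\}$.

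Second, for a fixed round I would pin down exactly which values of $z_t$ cause disagreement. Writing $b_i(\alpha)=D_i(\alpha)/D_n(\alpha)$ for the interval boundaries, Lemma~\ref{lem:monotoneBins} guarantees each $b_i$ is monotone increasing in $\alpha$, so the center chosen at $z_t$ under parameter $\alpha$ is $v_{I}$ with $I=\min\{i:z_t<b_i(\alpha)\}$, and this index only decreases as $\alpha$ grows. A short argument then shows the chosen center differs between $\alpha$ and $\alpha+\epsilon$ exactly when $z_t\in\bigcup_{i=1}^{n-1}\bigl[b_i(\alpha),b_i(\alpha+\epsilon)\bigr)$: indeed, since $I(z_t,\cdot)$ is non-increasing in the parameter, the indices differ iff $z_t$ lies weakly above the $\alpha$-boundary but strictly below the $(\alpha+\epsilon)$-boundary for some $i$. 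Since $z_t$ is uniform, the disagreement probability is the measure of this union, which by subadditivity is at most $\sum_{i=1}^{n}\bigl(b_i(\alpha+\epsilon)-b_i(\alpha)\bigr)$.

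Third, I would bound each boundary shift by integrating the derivative of $b_i$. The proof of Theorem~\ref{thm:expect} already establishes the estimate $\bigl|\tfrac{\partial}{\partial\alpha}b_i(\alpha)\bigr|\le\min\bigl\{\tfrac{2}{\alpha}\log n,\,\log\dratio\bigr\}$, so the fundamental theorem of calculus gives, for each $i$, the bound $b_i(\alpha+\epsilon)-b_i(\alpha)=\int_\alpha^{\alpha+\epsilon}\tfrac{\partial}{\partial a}b_i(a)\,da\le\min\bigl\{2\log n\,\log(\tfrac{\alpha+\epsilon}{\alpha}),\,\epsilon\log\dratio\bigr\}$, applying the two halves of the derivative bound separately. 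Summing over the at most $n$ movable boundaries yields the per-round bound $\min\bigl\{2n\log n\,\log(\tfrac{\alpha+\epsilon}{\alpha}),\,\epsilon n\log\dratio\bigr\}$, and then summing over the $k$ rounds gives the claimed $\min\bigl\{2nk\log n\,\log(\tfrac{\alpha+\epsilon}{\alpha}),\,\epsilon nk\log\dratio\bigr\}$.

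The step I expect to be the most delicate is the cross-round conditioning in the first paragraph: one must argue that the per-round bound holds uniformly over the random history $z_1,\dots,z_{t-1}$. This works precisely because agreement on the earlier rounds forces both runs to share the same center set, so the only quantities relevant to round $t$—the sorted distances $d_i$—are common to both parameter values, and the per-round bound derived in the second and third steps is independent of the particular distances. Once this uniformity is secured, the remainder is a direct application of the monotonicity in Lemma~\ref{lem:monotoneBins} and the derivative estimate from Theorem~\ref{thm:expect}.
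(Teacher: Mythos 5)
Your proposal is correct and follows essentially the same route as the paper's proof: condition on the two runs agreeing through round $t-1$ (so they share the same center set and hence the same sorted distances), identify the disagreement event in round $t$ with the set of $z_t$ lying in the shifted boundary regions, bound its measure by $\sum_i\bigl(\tfrac{D_i(\alpha+\epsilon)}{D_n(\alpha+\epsilon)}-\tfrac{D_i(\alpha)}{D_n(\alpha)}\bigr)$ via the derivative bound of Lemma~\ref{lem:deriv}, and union-bound over the $k$ rounds. Your explicit treatment of the cross-round conditioning and the characterization of the disagreement set as $\bigcup_i[b_i(\alpha),b_i(\alpha+\epsilon))$ is a slightly cleaner packaging of the paper's symmetric-difference argument, but the substance is identical.
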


\begin{proof}
Given a clustering instance $\V$, $\epsilon>0$, we will show that, over the draw
of $\vec{Z} \sim \uniform([0,1]^k)$, there is low probability that
$\seed_\alpha(\V,\vec{Z})$ outputs a different set of centers than
$\seed_{\alpha+\epsilon}(\V,\vec{Z})$.
Assume in round $t$ of
$d^\alpha$-sampling and $d^{\alpha+\epsilon}$-sampling, both algorithms have $C$
as the current list of centers.
Given we draw $z_t\sim[0,1]$, we will show there
is only a small chance that the algorithms choose different centers in this
round.
Let $I_1, \dots, I_n$ be the intervals such that the next center chosen by the
algorithm with parameter $\alpha$ is $v_i$ whenever $z_t \in I_i$ and $I'_1, \dots, I'_n$
be the intervals for the algorithm with parameter $\alpha + \epsilon$.
First we will show that $P_{\vec{Z}}(\seed_\alpha(\V,\vec{Z})\neq\seed_{\alpha+\epsilon}(\V, \vec{Z}))
\leq\epsilon nk\log \dratio$.
Since the algorithms have an identical set of current centers, the
distances $d(v,C)$ are the same, but the breakpoints of the intervals,
$\frac{\sum_{j=1}^i d(v_j,C)^\alpha}{\sum_{j=1}^n d(v_j,C)^\alpha}$ differ
slightly. If $z_t\sim [0,1]$ lands in $I_i \cap I_i'$, the $d^\alpha$ and $d^{\alpha + \epsilon}$
will both choose $v_i$ as the next center. Thus, we need to bound the size of
$\sum_{i=1}^n (I_i\setminus I'_i)\cup (I'_i\setminus I_i)$. Recall the endpoint
of interval $i$ is $\frac{D_i(\alpha)}{D_n(\alpha)}$, where $D_i=\sum_{j=1}^i
d(v_j,C)^\alpha$. Thus, we want to bound
$\left|\frac{D_i(\alpha)}{D_n(\alpha)}-\frac{D_i(\alpha+\epsilon)}{D_n(\alpha+\epsilon)}\right|$,
and we can use Lemma \ref{lem:deriv}, which bounds the derivative of
$\frac{D_i(\alpha)}{D_n(\alpha)}$ by $\log \dratio$, to show
$\left|\frac{D_i(\alpha)}{D_n(\alpha)}-\frac{D_i(\alpha+\epsilon)}{D_n(\alpha+\epsilon)}\right|\leq
\epsilon \log \dratio$.

Therefore, we have
\begin{align*}
\sum_{i=1}^n (I_i\setminus I'_i)\cup (I'_i\setminus I_i)&\leq\sum_{i=1}^n \left|\frac{D_i(\alpha)}{D_n(\alpha)}-\frac{D_i(\alpha+\epsilon)}{D_n(\alpha+\epsilon)}\right| \\
&\leq\sum_{i=1}^n \epsilon\cdot \log \dratio\\
&\leq \epsilon n\log \dratio
\end{align*}

Therefore, assuming $d^\alpha$-sampling and $d^{\alpha+\epsilon}$-sampling have chosen the same centers so far,
the probability that they choose different centers in round $t$ is $\leq \epsilon n\log \dratio$.
Over all rounds, the probability the outputted set of centers is not identical, is $\leq \epsilon nk\log \dratio$.

Now we will show that $P_{\vec{Z}}(\seed_\alpha(\V,\vec{Z})\neq\seed_{\alpha+\epsilon}(\V, \vec{Z}))
\leq 2nk\log n\log\left(\frac{\alpha+\epsilon}{\alpha}\right)$.
We will bound
$\left|\frac{D_i(\alpha)}{D_n(\alpha)}-\frac{D_i(\alpha+\epsilon)}{D_n(\alpha+\epsilon)}\right|$
a different way, again using Lemma~\ref{lem:deriv}.
We have that
$$\frac{D_j(\alpha)}{D_n(\alpha)}\bigg\rvert_{\alpha_\ell}^{\alpha_{\ell+1}}\leq 2\log n\int_{\alpha_\ell}^{\alpha_{\ell+1}} \frac{1}{\alpha}\, d\alpha
\leq 2\log n\left(\log\alpha\right)\mid_{\alpha_\ell}^{\alpha_{\ell+1}}=2\log n(\log \alpha_{\ell+1}-\log \alpha_\ell).$$

Using this inequality and following the same steps as above, we have
\begin{align*}
\sum_{i=1}^n (I_i\setminus I'_i)\cup (I'_i\setminus I_i)&\leq\sum_{i=1}^n \left|\frac{D_i(\alpha)}{D_n(\alpha)}-\frac{D_i(\alpha+\epsilon)}{D_n(\alpha+\epsilon)}\right| \\
&\leq\sum_{i=1}^n 2\log n(\log \alpha_{\ell+1}-\log \alpha_\ell)\\
&\leq 2n\log n\log\left(\frac{\alpha+\epsilon}{\alpha}\right)
\end{align*}
Over all rounds, the probability the outputted set of centers is not identical,
is $\leq nk\log n \log\left(\frac{\alpha+\epsilon}{\alpha}\right)$.
This completes the proof.
\end{proof}

In order to analyze the runtime of Algorithm \ref{alg:fast},
we consider the  \emph{execution tree} of $d^\alpha$-sampling run on a clustering instance $\V$
with randomness $\vec{Z}$. This is a tree where each node is labeled by a state
(i.e., a sequence $C$ of up to $k$ centers chosen so far by the algorithm) and
the interval $A$ of $\alpha$ values that would result in the
algorithm choosing this sequence of centers. The children of a node correspond
to the states that are reachable in a single step (i.e., choosing the next
center) for some value of $\alpha \in A$. The tree has depth $k$, and there is
one leaf for each possible sequence of $k$ centers that $d^\alpha$-sampling will
output when run on $\V$ with randomness $\vec{Z}$.
Our algorithm enumerates these leaves up to an error $\epsilon$ in the $\alpha$ values,
by performing a depth-first traversal of the tree.

\begin{theorem} \label{thm:runtime}
  Let $\mathcal{D}$ be a distribution over clustering instances with $k$
  clusters and at most $n$ points and let $\sample = \{(\V^{(i)},
  \vec{Z}^{(i)})\}_{i=1}^m$ be an i.i.d. sample from $\mathcal{D} \times
  \uniform([0,1]^k)$. Fix any $\epsilon > 0$, $\delta > 0$, a parameter $\beta
  \in [1,\infty]$, and a parameter interval $[\alo, \ahi]$, and run
  Algorithm~\ref{alg:fast} on each sample instance and collect the breakpoints
  (boundaries between the intervals $A_i$). Let $\overline{\alpha}$ be the
  lowest cost breakpoint across all instances. Then the following statements hold:
  \begin{enumerate}[leftmargin=*]
    \item If $\alo > 0$ and the sample is of size $m =
    O\left(\left(\frac{H}{\epsilon}\right)^2 \left(\log (\frac{nk}{\delta})  +
    \log(\log( \frac{\ahi}{\alo})) \right)\right)$, then with probability
    at least $1-\delta$ we have
    $|\clus_{\bar\alpha,\beta}(\mathcal{S})-\min_{\alo\leq\alpha\leq\ahi}\clus_{\alpha,\beta}(\mathcal{S})|<\epsilon$
    and the total running time of finding the best breakpoint is $O\left(m n^2
    k^2 \log\left(\frac{\alo}{\ahi}\right)\log\left(
    nH\log\left(\frac{\alo}{\ahi}\right)/\epsilon\right) \right)$.
    \item If $\alo = 0$ but the maximum ratio of any non-zero distances for all
    instances in the support of $\mathcal{D}$ is bounded by $\dratio$ and the
    sample is of size $m = O\left(\left(\frac{H}{\epsilon}\right)^2 \left(\log(\frac{nk}{\delta}) + \log(\log(\ahi \log(\dratio)))
    \right)\right)$, then with probability at least $1-\delta$ we have
    $|\clus_{\bar\alpha,\beta}(\mathcal{S})-\min_{0
    \leq\alpha\leq\ahi}\clus_{\alpha,\beta}(\mathcal{S})|<\epsilon$ and the
    total running time of finding the best breakpoint is $O\left(m n^2 k^2 (\log(\ahi \log(\dratio)) \log\left(
    nH\log\left(\log(\ahi\log\dratio)\right)/\epsilon\right) \right)$.
  \end{enumerate}
\end{theorem}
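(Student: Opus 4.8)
The plan is to establish the two components of each statement separately---the $\epsilon$-approximation guarantee and the running-time bound---and then obtain the $\alo=0$ statement by substituting the appropriate bounds. I would organize everything around the \emph{execution tree} of $d^\alpha$-sampling. For correctness, the starting observation is that for a fixed $\beta$ the empirical cost $\alpha\mapsto\clus_{\alpha,\beta}(\mathcal{S})=\frac1m\sum_i\clus_{\alpha,\beta}(\V^{(i)},\vec{Z}^{(i)})$ is piecewise constant, changing only at those $\alpha$ where $\seed_\alpha(\V^{(i)},\vec{Z}^{(i)})$ changes for some $i$; these are exactly the breakpoints collected by Algorithm~\ref{alg:fast} across the $m$ instances. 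Hence the minimum of the sample cost over $[\alo,\ahi]$ is attained on one of the pieces induced by the union of all breakpoints, and enumerating the breakpoints and evaluating the cost at each recovers the minimizer. Lemma~\ref{lem:monotoneBins} makes the enumeration well-defined: at every node the set of admissible $\alpha$-values is a single interval and the reachable next centers form a contiguous block $v_{i_\ell},\dots,v_{i_h}$, so a node's children are obtained by splitting its interval at finitely many breakpoints.

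The first technical hurdle is that a breakpoint solves $D_i(\alpha)/D_n(\alpha)=z_t$, an exponential equation in $\alpha$ with no closed form. Here I would again invoke the monotonicity in Lemma~\ref{lem:monotoneBins}: since $\alpha\mapsto D_i(\alpha)/D_n(\alpha)$ is monotone, the crossing is unique and can be located by binary search (on $\log\alpha$) to precision $\epsilon_\alpha$, each iteration costing $O(n)$ to evaluate the partial sums $D_i(\alpha)$. The second, and what I expect to be the main, hurdle is a genuine tension: the per-instance deterministic cost is piecewise constant, so landing slightly off a true breakpoint could place us in the wrong piece and change the deterministic cost by a full jump. I would resolve this by passing to the \emph{expected} cost, using Lemma~\ref{lem:net}: the bound on $P_{\vec{Z}}(\seed_\alpha\neq\seed_{\alpha+\epsilon})$ shows that over a multiplicative change in $\alpha$ the expected cost $\clus_{\alpha,\beta}(\V)$ changes by at most $O(Hnk\log n)\log((\alpha+\epsilon)/\alpha)$, i.e.\ it is Lipschitz on the $\log\alpha$ scale. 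Choosing the binary-search precision accordingly ensures the cost at each computed breakpoint is within $\epsilon$ of the optimum, which fixes the number of iterations at $O(\log(nH\log(\ahi/\alo)/\epsilon))$.

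For the running time I would bound the size of the execution tree. It has depth $k$ and one leaf per distinct seeding output, so the number of leaves is one plus the total number of $\alpha$-breakpoints and the number of nodes is at most $k$ times this. By part~1 of Theorem~\ref{thm:expect}, the expected number of breakpoints for a single instance over $[\alo,\ahi]$ is $O(nk\log n\log(\ahi/\alo))$, so the expected tree has $O(nk^2\log n\log(\ahi/\alo))$ nodes. The DFS visits each node once and locates each child boundary with one binary search; summing the single-binary-search cost $O(n\log(nH\log(\ahi/\alo)/\epsilon))$ over all edges of the tree and over the $m$ instances yields the stated expected running time $O(mn^2k^2\log(\ahi/\alo)\log(nH\log(\ahi/\alo)/\epsilon))$ (suppressing a $\log n$ factor).

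It remains to pin down the sample size and the $1-\delta$ guarantee, and to handle $\alo=0$. Because $\beta$ is held fixed, the distinct behaviors of $\lloyds$ on the sample are controlled purely by the $O(mnk\log n\log(\ahi/\alo))$ $\alpha$-breakpoints, so a Massart/Rademacher bound over this finite candidate set (exactly as in the proof of Theorem~\ref{thm:rademacher}, but with no $\beta$-discontinuities to account for), together with Markov's inequality to control the breakpoint count, yields the claimed $m=O((H/\epsilon)^2(\log(nk/\delta)+\log\log(\ahi/\alo)))$ and the high-probability cost guarantee. Finally, the second statement follows by repeating the entire argument with the $\alo=0$ bounds: invoking part~2 of Theorem~\ref{thm:expect} and the second bound of Lemma~\ref{lem:net} replaces every occurrence of $\log(\ahi/\alo)$ by $\log(\ahi\log\dratio)$ in both the sample size and the running time.
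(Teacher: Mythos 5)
Your proposal is correct and follows essentially the same route as the paper's proof: an execution-tree DFS with binary search justified by the monotonicity in Lemma~\ref{lem:monotoneBins}, Lemma~\ref{lem:net} to convert binary-search imprecision into a small change in expected cost, Theorem~\ref{thm:expect} to bound the expected number of breakpoints (and hence the tree size and runtime), and a Theorem~\ref{thm:rademacher}-style uniform convergence argument for the sample size. The only piece you leave implicit is the paper's explicit induction showing that the approximately-located breakpoints track the true execution tree round by round (so that errors do not send the DFS into a wrong subtree), but you correctly identify both the issue and the mechanism that resolves it.
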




\begin{proof}
We argue that one of the breakpoints outputted by Algorithm~\ref{alg:fast}
on the sample is approximately optimal over all $\alpha\in[\alo,\ahi]$.
Formally, denote $\hat\alpha$ as the value with the lowest empirical cost over the sample,
and $\bar\alpha$ as the value with the lowest empirical cost over the sample,
among the set of breakpoints returned by the algorithm.
We define $\alpha^*$ as the value with the minimum true cost over the distribution.
We also claim that for all breakpoints $\alpha$, there exists a breakpoint $\hat\alpha$ outputted by Algorithm~\ref{alg:fast}
such that $|\alpha-\hat\alpha|<\frac{\epsilon}{5n^2k\log n\log\left(\frac{\ahi}{\alo}\right)}$.
We will prove this claim at the end of the proof.
Assuming the claim is correct, we denote $\alpha'$ as a breakpoint outputted by the algorithm such that
$|\hat\alpha-\alpha'|<\frac{\epsilon}{5n^2k\log n\log\left(\frac{\ahi}{\alo}\right)}$.

For the rest of the proof, denote
$\underset{\V \sim \mathcal{D}}{\E}\left[\clus_{\alpha,\beta}\left(\V\right)\right]=\text{true}(\alpha)$
and $\frac{1}{m} \sum_{i = 1}^m \clus_{\alpha,\beta}\left(\V^{(i)},\vec{Z}^{(i)}\right)=\text{sample}(\alpha)$
since beta, the distribution, and the sample are all fixed.

By construction, we have $\text{sample}(\hat\alpha)\leq \text{sample}(\alpha^*)$
and $\text{sample}(\bar\alpha)\leq \text{sample}(\alpha')$.
By Theorem~\ref{thm:rademacher}, with probability $>1-\delta$, for all $\alpha$ (in particular, for $\bar\alpha,~\hat\alpha,~\alpha^*,$ and $\alpha'$), we have
$\left|\text{sample}(\alpha)-\text{true}(\alpha)\right|<\epsilon/5$.
Finally, by Lemma~\ref{lem:net}, we have
$$|\hat\alpha-\alpha'|<\frac{\epsilon}{5n^2k\log n\log\left(\frac{\ahi}{\alo}\right)}~\implies
~\left|\text{true}(\hat\alpha)-\text{true}(\alpha')\right|<\epsilon/5.$$

Using these five inequalities for $\alpha',~\hat\alpha,~\bar\alpha,$ and $\alpha^*$, we can show the desired outcome as follows.
\begin{align*}
\text{true}(\bar\alpha)-\text{true}(\alpha^*)
&\leq\left(\text{true}(\bar\alpha)-\text{sample}(\bar\alpha)\right)+\text{sample}(\bar\alpha)
-\left(\text{true}(\alpha^*)-\text{sample}(\alpha^*)\right)-\text{sample}(\alpha^*)\\
&\leq\epsilon/5+\text{sample}(\alpha')+\epsilon/5-\text{sample}(\hat\alpha)\\
&\leq \left(\text{sample}(\alpha')-\text{true}(\alpha')\right)+\left(\text{true}(\alpha')-\text{true}(\hat\alpha)\right)
+\left(\text{true}(\hat\alpha)-\text{sample}(\hat\alpha)\right)+\frac{2\epsilon}{5}\\
&\leq\epsilon.
\end{align*}

Now we will prove the claim that for all breakpoints $\alpha$, there exists a breakpoint $\hat\alpha$ outputted by Algorithm~\ref{alg:fast}
such that $|\alpha-\hat\alpha|<\frac{\epsilon}{5n^2k\log n\log\left(\frac{\ahi}{\alo}\right)}$. Denote $\epsilon'=\frac{\epsilon}{5n^2k\log n\log\left(\frac{\ahi}{\alo}\right)}$.
We give an inductive proof.
Recall that the algorithm may only find the values of breakpoints up to additive error $\epsilon'$, since the true breakpoints may be
irrational and/or transcendental.
Let $\hat T_t$ denote the execution tree of the algorithm after round $t$,
and let $T_t$ denote the \emph{true} execution tree on the sample.
That is, $T_t$ is the execution tree as defined earlier this section,
$\hat T_t$ is the execution tree with the algorithm's $\epsilon'$ imprecision on the values of alpha.
Note that if a node in $T_t$ represents an $\alpha$-interval of size smaller than $\epsilon'$, it is possible that $\hat T_t$
does not contain the node. Furthermore, $\hat T_t$ might contain spurious nodes with alpha-intervals of size smaller than $\epsilon'$.

Our inductive hypothesis has two parts.
The first part is that for each breakpoint $\alpha$ in $T_t$, there exists
a breakpoint $h(\alpha)$ in $\hat T_t$ such that $|\alpha-h(\alpha)|<\epsilon'$.
For the second part of our inductive hypothesis, we define
$B_t=\bigcup_{\alpha\text{ breakpoint}}\left([\alpha,h(\alpha)]\cup[h(\alpha),\alpha]\right)$,
the set of ``bad'' intervals. Note that for each $\alpha$, one of $[\alpha,h(\alpha)]$ and $[h(\alpha),\alpha]$ is empty.
Then define $G_t=[\alo,\ahi]\setminus B_t$, the set of ``good'' intervals.
The second part of our inductive hypothesis is that the set of centers for $\alpha$ in $T_t$ is the same as in $\hat T_t$, as long as $\alpha\in G_t$.
That is, if we look at the leaf in $T_t$ and the leaf in $\hat T_t$ whose alpha-intervals contain $\alpha$,
the set of centers for both leaves are identical.
Now we will prove the inductive hypothesis is true for round $t+1$, assuming it holds for round $t$.
Given $T_t$ and $\hat T_t$, consider a breakpoint $\alpha$ from $T_{t+1}$ introduced in round $t+1$.

Case 1: $\alpha\in G_t$.
Then the algorithm will recognize there exists a breakpoint, and use binary search to output a value $h(\alpha)$ such that $|\alpha-h(\alpha)|<\epsilon'$.
The interval $[\alpha,h(\alpha)]\cup[h(\alpha),\alpha]$ is added to $B_{t+1}$, but the good intervals to the left and right of this interval still have the correct centers.

Case 2: $\alpha\in B_t$. Then there exists an interval $[\alpha',h(\alpha')]\cup [h(\alpha'),\alpha]$ containing $\alpha$.
By assumption, this interval is size $<\epsilon'$, therefore, we set $h(\alpha')=h(\alpha)$, so there is a breakpoint within $\epsilon'$ of $\alpha$.

Therefore, for each breakpoint $\alpha$ in $T_{t+1}$, there exists a breakpoint $\hat \alpha$ in $\hat T_{t+1}$ such that $|\alpha-\hat\alpha|<\epsilon'$.
Furthermore, for all $\alpha\in G_{t+1}$, the set of centers for $\alpha$ in $T_{t+1}$ is the same as in $\hat T_{t+1}$.
This concludes the inductive proof.

Now we analyze the runtime of Algorithm~\ref{alg:fast}.
Let $(C,A)$ be any node in the algorithm, with centers $C$ and alpha
interval $A = [\alo, \ahi]$. Sorting the points in $\V$
according to their distance to $C$ has complexity $O(n \log n)$. Finding the
points sampled by $d^\alpha$-sampling with $\alpha$ set to $\alo$ and
$\ahi$ costs $O(n)$ time. Finally, computing the alpha interval $A_i$
for each child node of $(C,A)$ costs $O(n \log\frac{nH \log\left(\frac{\ahi}{\alo}\right)}{\epsilon})$ time, since we need
to perform $\log\frac{nkH \log\left(\frac{\ahi}{\alo}\right)}{\epsilon}$ iterations of binary search on $\alpha
\mapsto \frac{D_i(\alpha)}{D_n(\alpha)}$
and each evaluation of the function costs $O(n)$ time.
We charge this $O(n \log\frac{nH \log\left(\frac{\ahi}{\alo}\right)}{\epsilon})$ time to the corresponding child node.
If there are $N$ nodes in the execution tree, summing this cost over all nodes
gives a total running time of $O(N \cdot n \log\frac{nH \log\left(\frac{\ahi}{\alo}\right)}{\epsilon}))$.
If we let $\#I$ denote the total number of $\alpha$-intervals for $\V$,
then each layer of the execution tree has at most $\#I$ nodes, and the depth is
$k$, giving a total running time of $O(\#I \cdot k n \log\frac{nH \log\left(\frac{\ahi}{\alo}\right)}{\epsilon})$.

For the case when $\alo > 0$, Theorem~\ref{thm:expect} guarantees that we have
$\E[\#I]\leq 8nk\log (n) \log\left(\frac{\ahi}{\alo}\right)$. Therefore, the
expected runtime of Algorithm \ref{alg:fast} is $O\left(n^2 k^2
\log\left(\frac{\ahi}{\alo}\right) \left(\log\frac{nH
\log\left(\frac{\ahi}{\alo}\right)}{\epsilon}\right)\right)$.
The case where $\alo=0$ but there is a bound on the maximum ratio between any nonzero pair of distances is similar, except we use the appropriate statements from Theorem~\ref{thm:expect} and Lemma~\ref{lem:net}.
\end{proof}


Since we showed that $d^\alpha$-sampling is Lipschitz as a function of $\alpha$ in Lemma \ref{lem:net},
it is also possible to find the best $\alpha$ parameter with sub-optimality at most $\epsilon$ by finding
the best point from a discretization of $[0,\ahi]$ with step-size $s = \epsilon/(Hn^2k\log \dratio)$.
The running time of this algorithm is $O(n^3 k^2 H \log \dratio / \epsilon)$, which is
significantly slower than the efficient algorithm presented in this section.
Intuitively, Algorithm~\ref{alg:fast} is able to binary search to find each breakpoint in time $O(\log\frac{nkH\log \dratio}{\epsilon})$,
whereas a discretization-based algorithm must check all values of alpha uniformly,
so the runtime of the discretization-based algorithm increases by a multiplicative factor of $O\left(\frac{nH\log \dratio}{\epsilon}\cdot \left(\log\frac{nH\log \dratio}{\epsilon}\right)^{-1}\right)$.



\paragraph{Generalized families}

Theorems~\ref{thm:expect} and~\ref{thm:runtime} are not just true specifically for $\lloyds$; they can be made much more general.
In this section, we show these theorems are true for any family of randomized initialization procedures which satisfies a few simple properties.
First we formally define a parameterized family of randomized initialization procedures.

\begin{definition} \label{def:init}
Given a clustering instance $\mathcal{C}=(V,d)$ and a function $p:V\times 2^V\mapsto [0,1]$,
a \emph{$p$-randomized initialization method} is an iterative method such that $C$ is initialized as empty and
one new center is added in each round, and in round $t$, each point $v$ is chosen as a center with probability $p(v,C)$.
Note that for all $C\subseteq V$ such that $|C|\leq k$, we must have $\sum_{v\in V}p(v,C)=1$.
An \emph{$\alpha$-parameterized family} is a set of $p$-randomized initialization methods
$p_\alpha:V\times 2^V\mapsto [0,1]$.
\end{definition}

We will show how to prove general upper bounds on $\alpha$-parameterized families of $p$-randomized initialization
methods as long as the functions $p_\alpha$ satisfy a few simple properties.

Just as with $\lloyds$, we specify precisely how we execute the randomized steps
of the algorithm. We assume the algorithm draws a vector
  $\vec{Z}=\{z_1,\dots,z_k\}$ from $[0,1]^k$ uniformly at random. Then in round
$t$, the algorithm partitions $[0,1]$ into $n$ intervals, where there is an
interval $I_{v_i}$ for each $v_i$ with size equal to the probability
$p_\alpha(v_i, C)$ of choosing $v_i$ in round $t$ where $C$ is the set of
centers at the start of round $t$. Then the algorithm chooses the point $v_i$ as
a center, where $z_t\in I_{v_i}$.

Now we will show how to upper bound the number of discontinuities of the output
as a function of $\alpha$, which will lead to sample-efficient and
computationally-efficient meta-algorithms. We start with a few definitions.
Without loss of generality, we assume that in each round we rename the points so
that they satisfy $p_\alo(v_1,C)\geq\cdots\geq p_\alo(v_n,C)$. For a given $C$
and $v_i$, we define the partial sums $S_{i,C}(\alpha)=\sum_{j=1}^i
p_\alpha(v_i,C)$. Let the set of permissible values of $\alpha$ be the interval
$[\alo,\ahi]$. Let $\seed_\alpha(\V,\vec{Z},p)$ denote the sequence of centers
returned with randomness $\vec{Z}\in [0,1]^k$ and parameter $\alpha$. Let
$D_{p}=\max_{i,C,v,\alpha}\left(\frac{\partial S_{i,C}(\alpha)}{\partial
\alpha}\right)$.

\begin{theorem} \label{thm:init}
Given an $\alpha$-parameterized family such that \emph{(1)} for all $1\leq i\leq k$ and $C\subseteq V$ such that $|C|\leq k$,
each $S_{i,C}(\alpha)$ is monotone increasing and continuous as a function of $\alpha$,
and \emph{(2)} for all $1\leq i\leq j\leq n$ and $\alpha\in (\alo,\ahi)$, $S_{i,C}(\alpha)\leq S_{j,C}(\alpha)$,
then the expected number of discontinuities of $\seed_\alpha(\V,\vec{Z},p)$ as a function of $\alpha$ is
$O\left(nkD_p(\ahi-\alo)\right)$.
\end{theorem}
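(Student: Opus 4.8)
The plan is to reuse the breakpoint-counting strategy from the proof of Theorem~\ref{thm:expect}, but to replace the instance-specific derivative bounds with the single uniform bound $D_p$. The two hypotheses play exactly the role that Lemma~\ref{lem:monotoneBins} played for $\lloyds$: condition~(2) guarantees that for each fixed $\alpha$ the partial sums $S_{1,C}(\alpha)\le\cdots\le S_{n,C}(\alpha)=1$ carve $[0,1]$ into selection buckets $[S_{i-1,C}(\alpha),S_{i,C}(\alpha))$, while condition~(1) guarantees that as $\alpha$ increases each boundary $S_{i,C}(\alpha)$ moves monotonically upward. Consequently, at the start of round $t$ the range $[\alo,\ahi]$ is partitioned into finitely many intervals $[\alpha_0,\alpha_1],\dots,[\alpha_{L-1},\alpha_L]$ (with $\alpha_0=\alo$ and $\alpha_L=\ahi$) on each of which the first $t-1$ chosen centers are constant, so that the probabilities $p_\alpha(\cdot,C)$ and hence the boundary functions $S_{i,C}$ are well defined there.

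First I would fix one such interval $[\alpha_\ell,\alpha_{\ell+1}]$ together with its determined center set $C$, and fix the draw $z_t\sim\uniform([0,1])$. The $t^{\text{th}}$ center chosen at parameter $\alpha$ is the point whose bucket contains $z_t$, i.e.\ the smallest index $i$ with $S_{i,C}(\alpha)>z_t$. A new breakpoint in $\alpha$ occurs exactly when some boundary $S_{i,C}(\alpha)$ passes through the level $z_t$, and since each $S_{i,C}$ is monotone increasing in $\alpha$ by condition~(1), each boundary crosses that level at most once on $[\alpha_\ell,\alpha_{\ell+1}]$. Thus the number of new breakpoints is
$$\#I_{t,\ell}=\bigl|\{\, i : S_{i,C}(\alpha_\ell)\le z_t< S_{i,C}(\alpha_{\ell+1})\,\}\bigr|.$$
Taking the expectation over $z_t\sim\uniform([0,1])$ converts each indicator into the width of the swept region, so
$$\E_{z_t}[\#I_{t,\ell}]=\sum_{i=1}^n\bigl(S_{i,C}(\alpha_{\ell+1})-S_{i,C}(\alpha_\ell)\bigr)\le\sum_{i=1}^n D_p(\alpha_{\ell+1}-\alpha_\ell)=nD_p(\alpha_{\ell+1}-\alpha_\ell),$$
where the inequality integrates the definition of $D_p$ as the maximum of $\partial S_{i,C}/\partial\alpha$ via the fundamental theorem of calculus.

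To finish, I would sum this estimate over the intervals of round $t$ and then over rounds. Since the intervals tile $[\alo,\ahi]$, the telescoping identity $\sum_\ell(\alpha_{\ell+1}-\alpha_\ell)=\ahi-\alo$ bounds the expected number of new breakpoints in round $t$ by $nD_p(\ahi-\alo)$, and summing over $t=1,\dots,k$ gives the claimed $O\bigl(nkD_p(\ahi-\alo)\bigr)$.

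The one genuinely delicate point, which I would state carefully rather than gloss over, is the conditioning: the partition $\{[\alpha_\ell,\alpha_{\ell+1}]\}$ at round $t$ is itself random, determined by $z_1,\dots,z_{t-1}$, so it cannot simply be treated as fixed. The resolution is that both the per-interval estimate $\E_{z_t}[\#I_{t,\ell}]\le nD_p(\alpha_{\ell+1}-\alpha_\ell)$ and the tiling identity $\sum_\ell(\alpha_{\ell+1}-\alpha_\ell)=\ahi-\alo$ hold for \emph{every} realization of the earlier rounds and do not depend on the (possibly growing) number $L$ of intervals. Hence the conditional expected number of new breakpoints in round $t$ is at most $nD_p(\ahi-\alo)$ no matter how the first $t-1$ rounds turned out, and the tower property removes the conditioning cleanly, yielding the final bound by linearity of expectation.
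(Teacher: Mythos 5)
Your proof is correct and reaches the stated bound, but the key per-interval estimate is obtained by a genuinely cleaner route than the paper's. The paper's proof conditions on which bucket $I_{v_j}$ the value $z_t$ lands in, splits into the events $E_{t,j}$ and $E_{t,j}'$ according to whether the maximal number of breakpoints is attained, and then separately bounds the count $j-i$ (via the width of the swept region divided by the smallest bucket width) and the probability $P(E_{t,j})$ (via the derivative bound), arriving at $2nD_p(\alpha_{\ell+1}-\alpha_\ell)$. You instead observe that the breakpoint count on $[\alpha_\ell,\alpha_{\ell+1}]$ is the number of boundaries $S_{i,C}$ that sweep past the level $z_t$ — which is the same quantity as the paper's $x-y$, since that set of indices is exactly $\{y,\dots,x-1\}$ — write it as a sum of indicators over $i$, and compute the expectation exactly as $\sum_{i}\bigl(S_{i,C}(\alpha_{\ell+1})-S_{i,C}(\alpha_\ell)\bigr)\leq nD_p(\alpha_{\ell+1}-\alpha_\ell)$. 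Decomposing over boundaries rather than buckets eliminates the case analysis, saves a factor of $2$, and makes the subsequent telescoping over $\ell$ transparent. Your explicit handling of the conditioning (the partition at round $t$ is random, but the per-interval bound and the tiling identity $\sum_\ell(\alpha_{\ell+1}-\alpha_\ell)=\ahi-\alo$ hold for every realization of $z_1,\dots,z_{t-1}$, so the tower property applies) is also more careful than the paper's, which invokes linearity of expectation without comment. One caveat shared with the paper: the quantity $D_p$ is defined through $\partial S_{i,C}/\partial\alpha$, so passing from the increment $S_{i,C}(\alpha_{\ell+1})-S_{i,C}(\alpha_\ell)$ to $D_p(\alpha_{\ell+1}-\alpha_\ell)$ implicitly assumes differentiability of the partial sums, which is stronger than the continuity and monotonicity stated in the hypotheses; this gap is present in both arguments and is not specific to yours.
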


Note this theorem is a generalization of Theorem~\ref{thm:expect}, since $\lloyds$ is an $\alpha$-parameterized property
with the properties (due to Lemma~\ref{lem:monotoneBins}) and $D_p=4\log n$.
We give the proof of Theorem~\ref{thm:init} in Appendix \ref{app:theory}, since it is similar to the proof of Theorem~\ref{thm:expect}.
Intuitively, a key part of the argument is bounding the expected number of discontinuities in an arbitrary interval
$[\alpha_\ell,\alpha_{\ell+1}]$ for a current set of centers $C$. If the algorithm chooses $v_j$ as a center at $\alpha_{\ell}$
and $v_i$ at $\alpha_\ell$, then the only possible centers are $v_i,\dots,v_j$.
We show this key fact follows for any partial sum of probability functions as long as they are
monotone, continuous, and non-crossing.
Furthermore, $j-i$ is bounded by the maximum derivative of the partial sums over $[\alpha_\ell,\alpha_{\ell+1}]$,
so the final upper bound scales with $D_p$.
We can also achieve a generalized version of Theorem~\ref{thm:runtime}.

\begin{theorem} \label{thm:gen_runtime}
Given parameters $0\leq\alo<\ahi$, $\epsilon>0$, a sample $\mathcal{S}$ of size
$$m = O\left(\left(\frac{H}{\epsilon}\right)^2\log \left(\frac{\ahi n D_p}{\delta}\right)\right)$$  from $\left(\mathcal{D} \times [0,1]^k\right)^m$,
and an $\alpha$-parameterized family satisfying properties \emph{(1)} and \emph{(2)} from
Theorem~\ref{thm:init}, run Algorithm~\ref{alg:fast} on
each sample and collect all breakpoints (i.e., boundaries of the intervals
$A_i$). With probability at least $1-\delta$, the breakpoint $\bar\alpha$ with
lowest empirical cost satisfies
$|\clus_{\bar\alpha,\beta}(\mathcal{S})-\min_{0\leq\alpha\leq\ahi}\clus_{\alpha,\beta}(\mathcal{S})|<\epsilon$.
The total running time to find the best breakpoint is
$O\left(m n^2 k^2 \ahi D_p \log \left( \frac{nH }{\epsilon}\right)\log n\right)$.
\end{theorem}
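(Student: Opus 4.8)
The plan is to follow the template of the proof of Theorem~\ref{thm:runtime} verbatim, replacing the $\lloyds$-specific ingredients with their $p$-randomized analogs; these analogs are available precisely because properties (1) and (2) supply the same structural guarantees (monotone, continuous, non-crossing partial sums) that Lemma~\ref{lem:monotoneBins} gave for $\lloyds$. Four ingredients must be assembled: a generalized Lipschitz bound (the analog of Lemma~\ref{lem:net}), a generalized uniform-convergence bound (the analog of Theorem~\ref{thm:rademacher}, built on Theorem~\ref{thm:init}), the execution-tree induction showing the algorithm's $\epsilon'$-approximate breakpoints track the true breakpoints, and the runtime accounting.

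First I would establish the Lipschitz analog of Lemma~\ref{lem:net}: for any instance $\V$ and $\epsilon > 0$, $P_{\vec{Z}}(\seed_\alpha(\V,\vec{Z},p) \neq \seed_{\alpha+\epsilon}(\V,\vec{Z},p)) \le \epsilon n k D_p$. The argument mirrors Lemma~\ref{lem:net}: conditioned on the two runs sharing the same centers $C$ entering round $t$, the right endpoint of the $i$-th interval is $S_{i,C}(\alpha)$, and $|S_{i,C}(\alpha) - S_{i,C}(\alpha+\epsilon)| \le \epsilon D_p$ since $D_p$ bounds $\partial S_{i,C}/\partial\alpha$; summing over the $n$ intervals bounds the measure of $z_t$ that causes a disagreement by $\epsilon n D_p$, and a union bound over the $k$ rounds gives the claim. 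This shows $\alpha \mapsto \E_{\V,\vec{Z}}[\clus_{\alpha,\beta}(\V,\vec{Z})]$ is $(HnkD_p)$-Lipschitz. For uniform convergence, Theorem~\ref{thm:init} gives that for each sample index the expected number of discontinuities of $\alpha \mapsto \seed_\alpha(\V^{(i)},\vec{Z}^{(i)},p)$ is $O(nkD_p\ahi)$; applying Markov's inequality with failure probability $\delta/(2m)$ and a union bound over the $m$ instances bounds the total number of breakpoints by $N = O(m^2 nkD_p\ahi/\delta)$ with probability $1-\delta/2$, and these partition $[\alo,\ahi]$ into at most $N$ intervals on which the empirical cost is constant. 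Selecting one $\alpha$ per interval and bounding the empirical Rademacher complexity via Massart's lemma yields $\hat R \le H\sqrt{2\log N/m}$; since $\log N = O(\log(mnkD_p\ahi/\delta))$, the stated size $m = O((H/\epsilon)^2\log(\ahi n D_p/\delta))$ (after absorbing the sub-dominant $\log m$ and $\log k$ terms) gives $|\text{sample}(\alpha)-\text{true}(\alpha)| < \epsilon/5$ uniformly over $[\alo,\ahi]$ with probability $1-\delta/2$.

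The remaining two pieces transfer essentially unchanged. The execution-tree induction from the proof of Theorem~\ref{thm:runtime} shows that for every true breakpoint there is a breakpoint output by Algorithm~\ref{alg:fast} within additive error $\epsilon'$, where $\epsilon'$ is chosen small enough (roughly $\epsilon/(5HnkD_p)$, via the generalized Lemma~\ref{lem:net}) that any two $\alpha$ within $\epsilon'$ have true costs differing by at most $\epsilon/5$. This induction uses only that each $S_{i,C}(\alpha)$ is monotone and continuous (property (1)), so the breakpoint equation $S_{i,C}(\alpha) = z_t$ has a unique root locatable by binary search, and that the partial sums are non-crossing (property (2)), so the set of possible next centers on an interval is contiguous $v_i,\dots,v_j$. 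Combining the Lipschitz bound, the uniform-convergence bound, and this net of breakpoints via the same chain of inequalities (over $\alpha', \hat\alpha, \bar\alpha, \alpha^*$) as in Theorem~\ref{thm:runtime} yields $\text{true}(\bar\alpha)-\min_\alpha \text{true}(\alpha)<\epsilon$, which is the desired approximation guarantee.

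For the runtime, each node of the execution tree costs $O(n\log n)$ to sort the points by distance to the current centers and $O(n\log(nH/\epsilon))$ per child to binary-search for each breakpoint to precision $\epsilon'$ (each of the $O(\log(nH/\epsilon))$ binary-search steps, up to lower-order logarithmic factors, evaluates a partial sum $S_{i,C}(\alpha)$ in $O(n)$ time). There are at most $\#I$ nodes per level and $k$ levels, and by Theorem~\ref{thm:init} $\E[\#I] = O(nkD_p\ahi)$, so summing over the $m$ instances gives expected total time $O(mn^2k^2\ahi D_p\log(nH/\epsilon)\log n)$. I expect the main obstacle to be the approximation analysis rather than the structural or runtime bounds: reconciling the exact (but possibly irrational or transcendental) true breakpoints of the piecewise-constant empirical cost with the $\epsilon'$-approximate breakpoints returned by binary search. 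This is exactly what the execution-tree induction handles, but it is the step that genuinely requires the generalized Lipschitz bound and uniform convergence to hold simultaneously, and lining up the constants so that each of the error terms contributes at most $\epsilon/5$ is where the care is needed.
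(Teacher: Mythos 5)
Your proposal is correct and follows essentially the same route as the paper's proof, which itself proceeds by transplanting the proof of Theorem~\ref{thm:runtime} (the $\alpha'$/$\hat\alpha$/$\bar\alpha$/$\alpha^*$ chain of inequalities, the execution-tree induction for the $\epsilon'$-approximate breakpoints, and the per-node runtime accounting with $\E[\#I]=O(nkD_p\ahi)$). The only difference is that you explicitly state and verify the generalized analogs of Lemma~\ref{lem:net} and Theorem~\ref{thm:rademacher} with the $D_p$-dependent constants, whereas the paper invokes the $\lloyds$-specific versions and leaves that substitution implicit; your version is, if anything, the more careful write-up.
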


\section{Experiments} \label{sec:experiments}

 In this section, we empirically evaluate the effect of the $\alpha$ and $\beta$
 parameters on clustering cost for real-world and synthetic clustering domains.
 We find that the optimal $\alpha$ and $\beta$ parameters vary significantly from
 domain to domain. We also find that the number of possible initial centers
 chosen by $d^\alpha$-sampling scales linearly with $n$ and $k$,
 suggesting our Theorem \ref{thm:expect} is tight up to logarithmic factors.
 Finally, we show the empirical distribution of $\alpha$-interval boundaries.

\vspace{1em} \noindent \textbf{Experiment Setup.} Our experiments evaluate the
$\lloyds$ family of algorithms on several distributions over clustering
instances. Our clustering instance distributions are derived from classification
datasets. For each classification dataset, we sample a clustering instance by
choosing a random subset of $k$ labels, sampling $N$ examples belonging to each
of the $k$ chosen labels. The clustering instance then consists of the $kN$
points, and the target clustering is given by the ground-truth labels. This
sampling distribution covers many related clustering tasks (i.e., clustering
different subsets of the same labels). We evaluate clustering performance in
terms of the Hamming distance to the optimal clustering, or the fraction of
points assigned to different clusters by the algorithm and the target
clustering. Formally, the Hamming distance between the outputted clustering
$\{C_1,\dots,C_k\}$ and the optimal clustering $\{C^*_1,\dots,C^*_k\}$ is
measured by $\min_\sigma \frac{1}{n}\sum_{i=1}^k C_i\setminus C^*_{\sigma(i)}$,
where the minimum is taken over all permutations $\sigma$ of the cluster
indices. In all cases, we limit the algorithm to performing $3$ iterations of
$\beta$-Lloyds. We use the following datasets:

\vspace{1em}
\noindent
\textit{MNIST:} We use the raw pixel representations of the subset of
MNIST~\citep{loosli2007training}. For MNIST we set $k = 5$, $N = 100$, so each
instance consists of $n = 500$ points.

\vspace{1em}
\noindent
\textit{CIFAR-10:} The CIFAR-10 dataset \citep{CIFAR10} is an image dataset with
10 classes. Following \citet{krizhevsky2012} we include 50 randomly rotated and
cropped copies of each example. We extract the features from the Google
Inception network \citep{Szegedy_2015_CVPR} using layer in4d. For CIFAR10 we set
$k = 5$, $N = 100$, so each instance consists of $n = 500$ points.

\vspace{1em}
\noindent
\textit{CNAE-9:} The CNAE-9~\citep{Dua:2017} dataset consists of 1080 documents
describing Brazilian companies categorized into 9 categories. Each example is
represented by an 875-dimensional vector, where each entry is the frequency of a
specific word. For CNAE-9 we set $k=4$ and $N = 100$, so each clustering
instance has $n = 400$ points.

\vspace{1em}
\noindent
\textit{Gaussian Grid:} We also use a synthetic 2-dimensional clustering
instance where points are sampled from a mixture of $9$ standard Gaussians
arranged in a $3\times 3$ grid with a grid stride of 5. For the Gaussian Grid
dataset, we set $k = 4$ and $N = 120$, so each clustering instance has $n = 500$
points.

\vspace{1em}\noindent\textbf{Parameter Study.} Our first experiment explores the
effect of the $\alpha$ and $\beta$ parameters. For each dataset, we sampled
$m=50,000$ sample clustering instances and run  $\lloyds{}$ for all combinations
of $50$ values of $\alpha$ evenly spaced in $[0,20]$ and $25$ values of $\beta$
evenly spaced in $[1,10]$. Figure~\ref{fig:parameterStudy} shows the average
Hamming error on each dataset as a function of $\alpha$ and $\beta$. The optimal
parameters vary significantly across the datasets. On MNIST and the Gaussian
grid, it is best to set $\alpha$ to be large, while on the remaining datasets
the optimal value is low. Neither the $k$-means$++$ algorithm nor farthest-first
traversal have good performance across all datasets. On the Gaussian grid
example, $k$-means++ has Hamming error $6.8\%$ while the best parameter setting
only has Hamming error $1.3\%$.

\begin{figure*}
\centering
  \subfigure[MNIST]{\includegraphics[height=3.9cm]{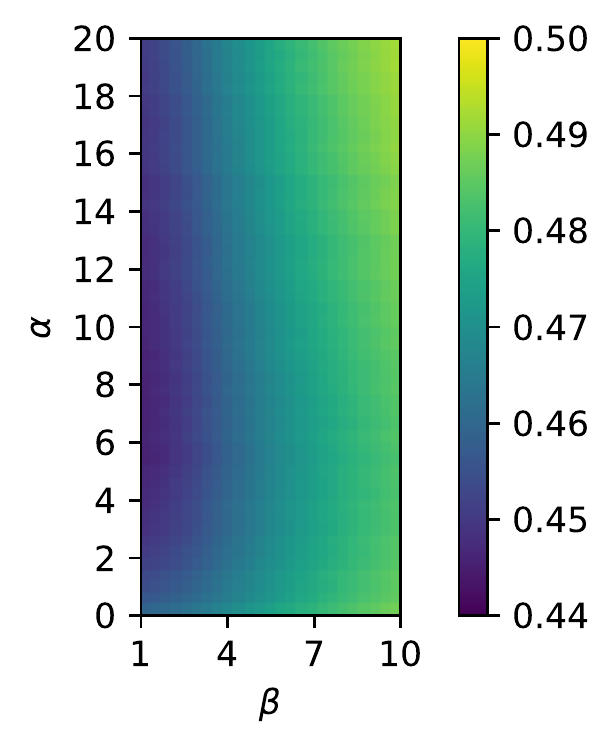}}
  \subfigure[CIFAR-10]{\includegraphics[height=3.9cm]{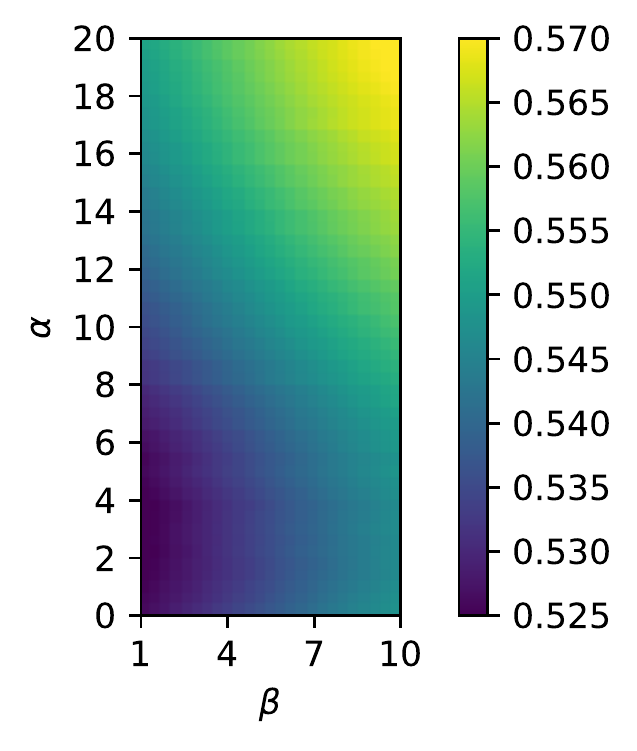}}
  \subfigure[CNAE-9]{\includegraphics[height=3.9cm]{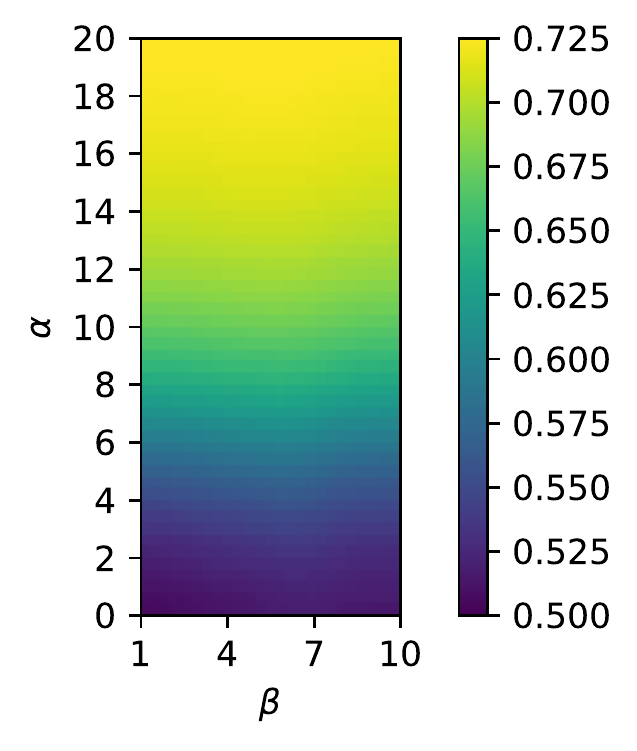}}
  \subfigure[Gaussian Grid]{\includegraphics[height=3.9cm]{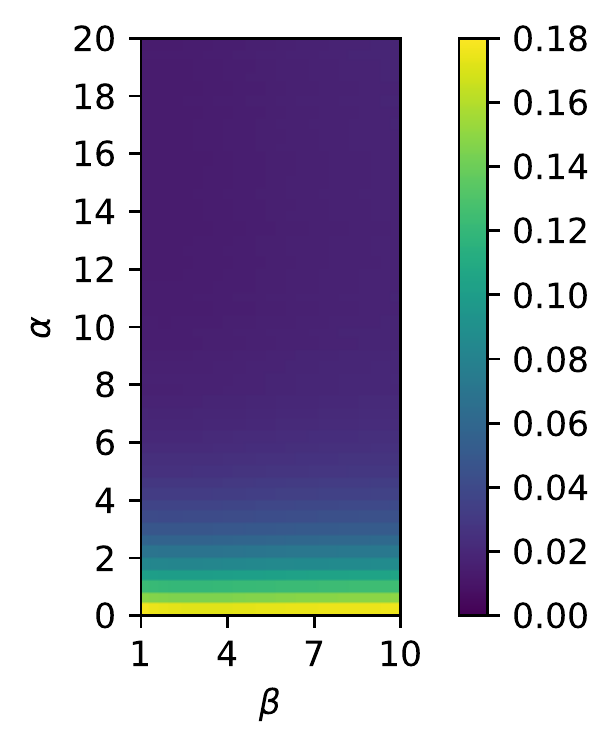}}
  \caption{Average Hamming error of $\lloyds{}$ as a function of $\alpha$ and $\beta$.}
  \label{fig:parameterStudy}
\end{figure*}

Next, we use Algorithm \ref{alg:fast} to tune $\alpha$ without discretization.
In these experiments we set $\beta=2$ and modify step 6 of
Algorithm~\ref{alg:clus} to compute the mean of each cluster, rather than the
point in the dataset minimizing the sum of squared distances to points in that
cluster (as is usually done in Lloyd's method). This modification improves
running time and also the Hamming cost of the resulting clusterings. For each
dataset, we sample $m = 50,000$ sample clustering instances and divide them
evenly into testing and training sets (for MNIST we set $m = 250,000$ instead).
We plot the average Hamming cost as a function of $\alpha$ on both the training
and testing sets. The optimal value of $\alpha$ varies between the different
datasets, showing that tuning the parameters leads to improved performance.
Interestingly, for MNIST the value of $\alpha$ with lowest Hamming error is
$\alpha = 4.1$ which does not correspond to any standard algorithm. Moreover, in
each case the difference between the training and testing error plots is small,
supporting our generalization claims.

\begin{figure*}
  \centering
  \subfigure[MNIST]{\includegraphics[height=3cm]{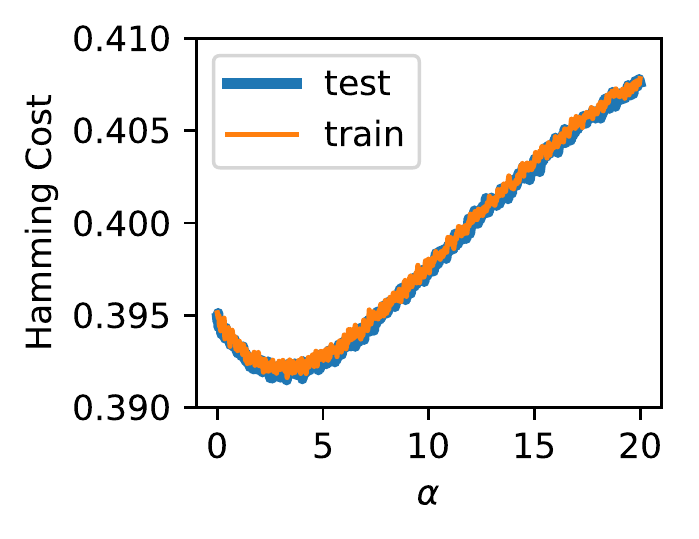}}
  \subfigure[CIFAR-10]{\includegraphics[height=3cm]{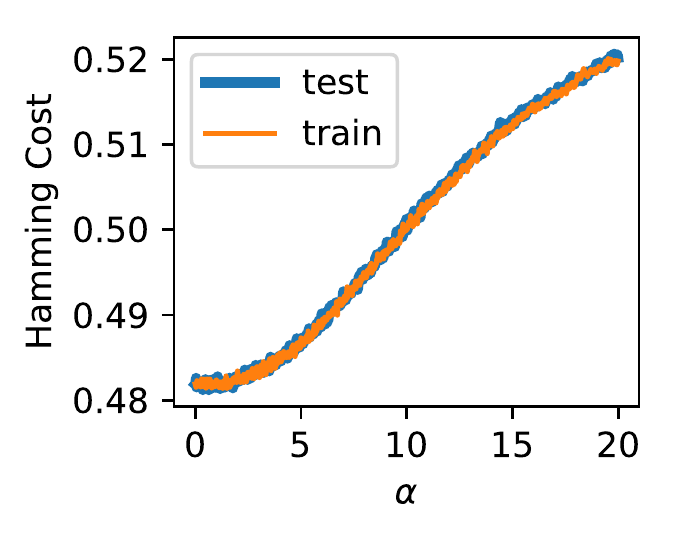}}
  \subfigure[CNAE-9]{\includegraphics[height=3cm]{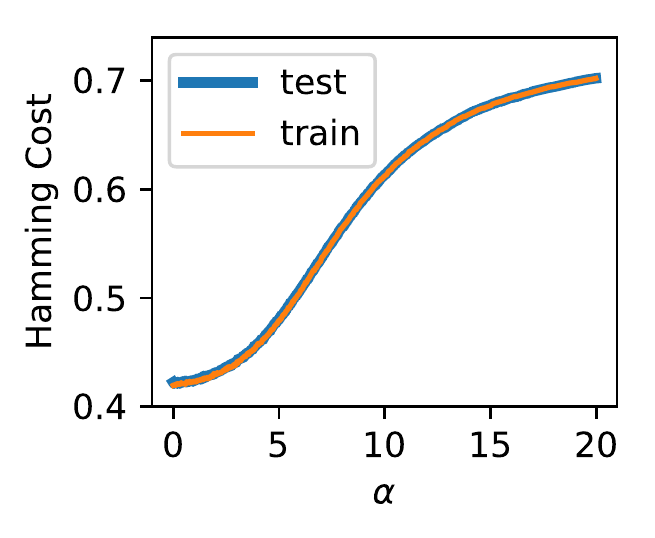}}
  \subfigure[Gaussian Grid]{\includegraphics[height=3cm]{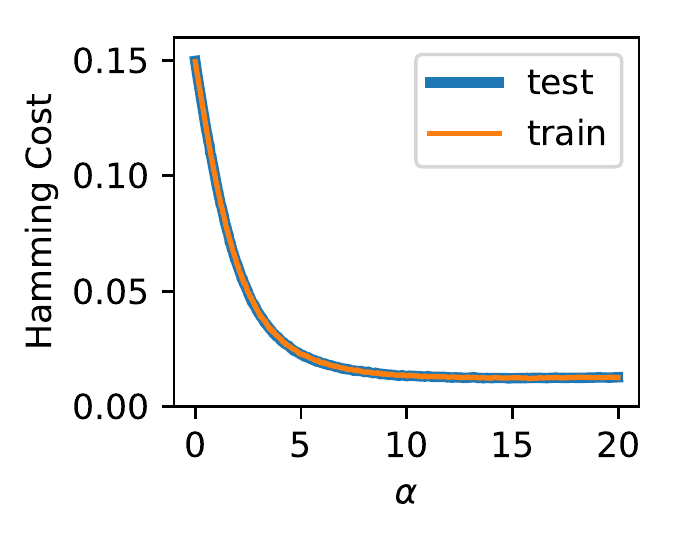}}
	\caption{Average Hamming error of $\lloyds{}$ as a function of $\alpha$ for $\beta = 2$.}
  \label{fig:alphaStudy}
\end{figure*}

\vspace{1em} \noindent \textbf{Number of $\alpha$-Intervals.} Next we report the
number of $\alpha$-intervals in the above experiments. On average, MNIST had
$826.1$ intervals per instance, CIFAR-10 had $994.5$ intervals per instance,
CNAE-9 had $855.9$ intervals per instance, and the Gaussian grid had $953.4$
intervals per instance.

In Figure~\ref{fig:alphaIntervals} we evaluate how the number of $\alpha$
intervals grows with the clustering instance size $n$. For $n \in \{50, 100,
150, \dots, 1000\}$, we modify the above distributions by setting $N = n/k$ and
plot the average number of $\alpha$-intervals on $m=5000$ samples. For all four
datasets, the average number of intervals grows nearly linearly with $n$.

\begin{figure}
\centering
\includegraphics[width=0.4\columnwidth]{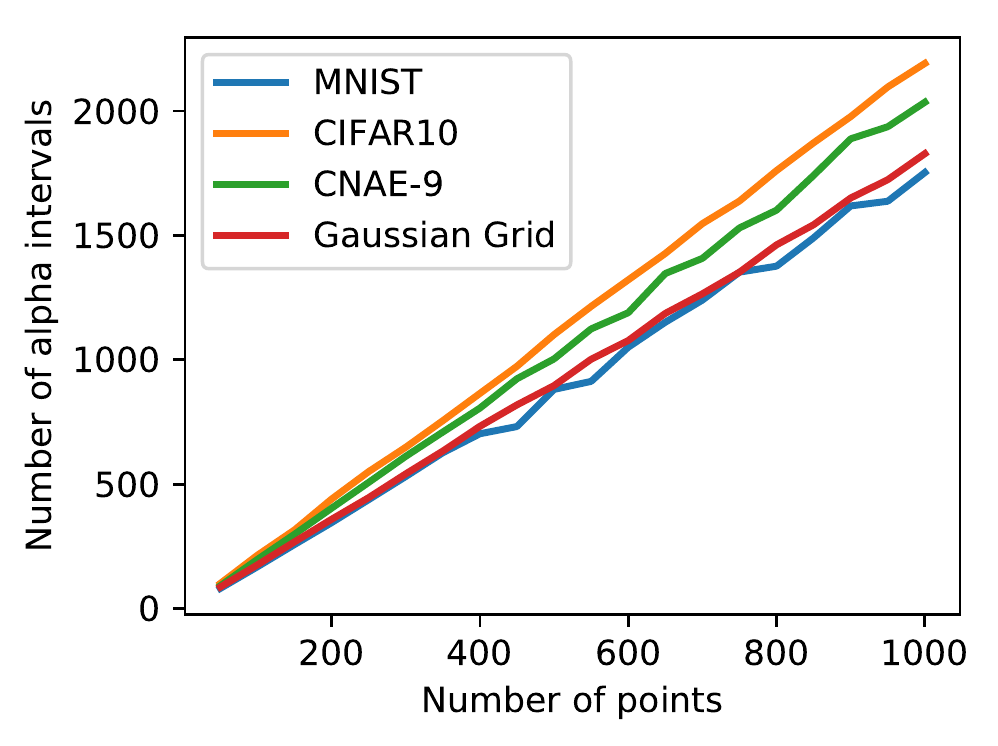}
\caption{The number of $\alpha$-intervals as a function of clustering instance size.}
\label{fig:alphaIntervals}
\end{figure}

\paragraph{Distribution of $\alpha$-decision Points.} Finally, for each dataset,
Figure~\ref{fig:alphaDistribution} shows a histogram of the distribution of the
$\alpha$-interval boundaries.

\begin{figure*}
  \centering
  \subfigure[MNIST]{\includegraphics[height=3cm]{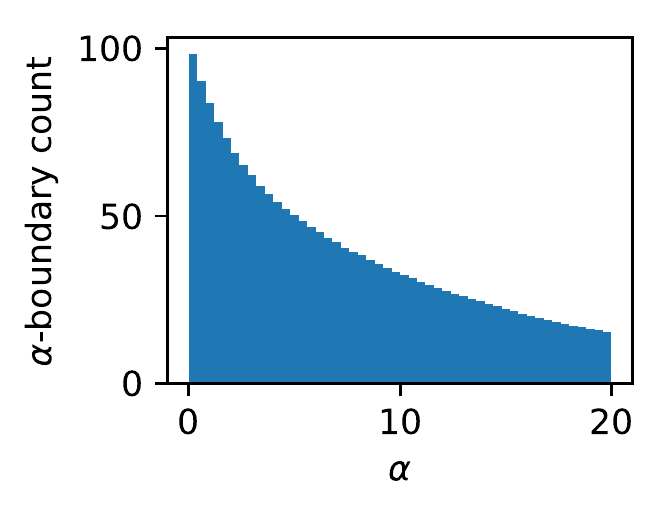}}
  \subfigure[CIFAR-10]{\includegraphics[height=3cm]{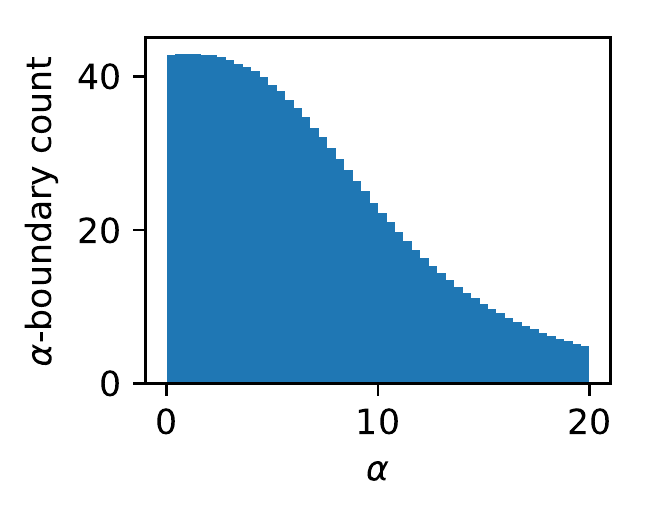}}
  \subfigure[CNAE-9]{\includegraphics[height=3cm]{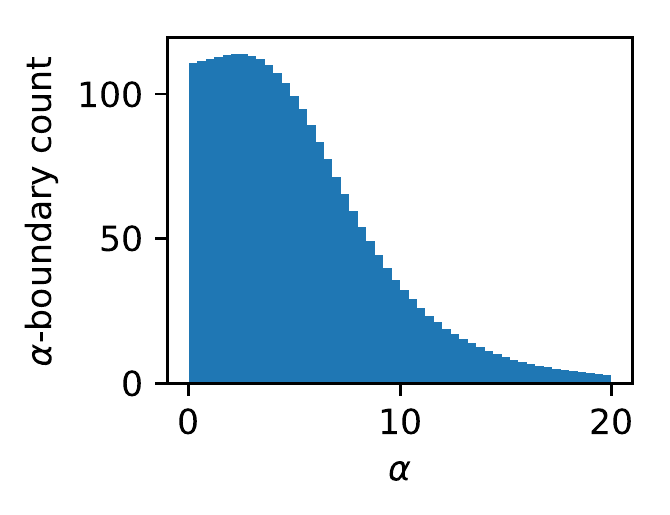}}
  \subfigure[Gaussian Grid]{\includegraphics[height=3cm]{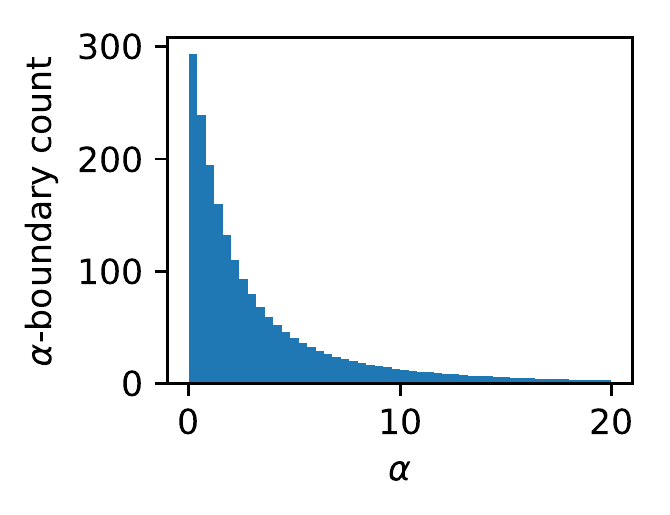}}
  \caption{Distribution of $\alpha$-decision points.}
  \label{fig:alphaDistribution}
\end{figure*}

\section{Conclusion} \label{sec:conclusion}

We define an infinite family of algorithms generalizing Lloyd's method, with one parameter controlling the initialization procedure,
and another parameter controlling the local search procedure.
This family of algorithms includes the celebrated $k$-means$++$ algorithm,
as well as the classic farthest-first traversal algorithm.
We provide a sample efficient and computationally efficient algorithm to learn a near-optimal parameter over an unknown distribution
of clustering instances, by developing techniques to bound the expected number of discontinuities in the cost as a function of the parameter.
We give a thorough empirical analysis, showing that the value of the optimal parameters transfer to related clustering instances.
We show the optimal parameters vary among different application domains, and the optimal parameters often significantly improve the error
compared to existing algorithms such as $k$-means++ and farthest-first traversal.

\section{Acknowledgments} \label{sec:acknowledgements} This work was supported
in part by NSF grants CCF-1535967, IIS-1618714, an Amazon Research Award, a
Microsoft Research Faculty Fellowship, a National Defense Science \& Engineering
Graduate (NDSEG) fellowship, and by the generosity of Eric and Wendy Schmidt by
recommendation of the Schmidt Futures program.


\bibliographystyle{plain}
\bibliography{clustering}

\newpage

\appendix


\section{Table of Notation} \label{app:notation}
\begin{table*}
\caption{Notation table}
\begin{center}
\begin{tabular}{|c||c|}
	\hline
	Symbol & Description \\
	\hline
	$\V=(V,d,k)$ & Clustering instance of points $V$ with distance metric $d$, number of clusters $k$ \\
	\hline
	$n$ & $n=|V|$, the size of the point set \\
	\hline
	$k$ & $k$=number of clusters in $\V$\\
	\hline
	$\mathcal{C}^*=\{C_1^*,\dots,C_k^*\}$ & Optimal clusters according to an objective such as $k$-means \\
	\hline
	$\lloyds$ & Clustering algorithm with parameters $\alpha\in [0,\infty)$ and $\beta\in[1,\infty)$ \\
	\hline
	$\vec{Z}$ & $\vec{Z}=\{z_1,\dots,z_k\}\in [0,1]^k$ is the random seed for the $\lloyds$ algorithm\\
	\hline
	$\clus_{\alpha,\beta}\left(\V,\vec{Z}\right)$ & Cost of the clustering outputted by $\lloyds$ \\
	& with randomness $\vec{Z}\in [0,1]^k$ \\
	\hline
	$\clus_{\alpha,\beta}(\V)$ & $\clus_{\alpha,\beta}(\V)=\E_{\vec{Z}\sim [0,1]^k}\left[\clus_{\alpha,\beta}\left(\V,\vec{Z}\right)\right]$	 \\
	\hline
	$\seed_\alpha(\V, \vec{Z})$ & the output centers of phase 1 of Algorithm~\ref{alg:clus} run on $\V$ with vector $\vec{Z}$\\
	\hline
	$\lloyd_\beta(\V,C,T)$ & cost of the outputted clustering from phase 2 of Algorithm \ref{alg:clus} \\
	& on instance $\V$ with initial centers $C$, and a maximum of $T$ iterations.\\
	\hline
	$H$ & $H$= maximum loss for any clustering instance $\V$ \\
	\hline
	$s$ & $s$= the minimum ratio $\frac{d_1}{d_2}$ between two distances $d_1>d_2$ in the point set\\
	\hline
	$d_j$ & $d_j = d(c,j)$ only when the center $c$ is clear from context\\
	\hline
	$\dratio$ & $\dratio=\max_{u,v,x,y\in V}\frac{d(u,v)}{d(x,y)}$ (the ratio of the largest to smallest distances)\\
	\hline
	$D_i(\alpha)$ & $D_i(\alpha)=\sum_{j=1}^i d_j^\alpha$ \\
	\hline
	$\#I$ & The number of breakpoints of $\lloyds$ \\
	& (the number of discontinuities in $\clus_{\alpha,\beta}\left(\V,\vec{Z}\right)$)\\
	\hline
	$\#I_{t,\ell}$ & The number of breakpoints in round $t$ and interval $[\alpha_\ell,\alpha_{\ell+1}]$ (the number \\
	& of times the choice of $t$'th center changes as we vary $\alpha$ along $[\alpha_\ell,\alpha_{\ell+1}]$)\\
	\hline
	$E_{t,j}$ & the event that $\frac{D_j(\alpha_{\ell})}{D_n(\alpha_\ell)}<z_t<\frac{D_i(\alpha_{\ell+1})}{D_n(\alpha_{\ell+1})}$\\
	& (we achieve the maximum number of breakpoints given $v_i$ is the $t$'th center)\\
	\hline
	$E_{t,j}'$ & the event that $\frac{D_i(\alpha_{\ell+1})}{D_n(\alpha_{\ell+1})}<z_t<\frac{D_{j+1}(\alpha_{\ell})}{D_n(\alpha_\ell)}$\\
	& (we do not achieve the max number of breakpoints given $v_i$ is the $t$'th center)\\
	\hline
\end{tabular}
\end{center}
\label{tab:notation}
\end{table*}

\section{Details from Section \ref{sec:alpha}} \label{app:theory}

In this section, we give details and proofs from Section \ref{sec:alpha}.

\lemFarthestFirst*

\begin{proof}  
Given such a clustering instance $\V=(V,d,k)$ and $\alpha$,
first we note that farthest-first traversal (i.e. $d^\infty$-sampling) and $d^\alpha$-sampling both start by picking a center
uniformly at random from $V$. Assume both algorithms have chosen initial center $v_1$, and let $C=\{v_1\}$ denote the
set of current centers.
In rounds 2 to $n$, farthest-first traversal deterministically chooses the center $u$ which maximizes $d_{\min}(u,C)$
(breaking ties uniformly at random).
We will show that with high probability, in every round, $d^\alpha$-sampling will also choose the center maximizing $d_{\min}(u,C)$
or break ties at random.
In round $t$, let $d_t=\text{max}_{u\in V}d_{\min}(u,C)$ (assuming $C$ are the first $t-1$ centers chosen by farthest-first traversal).
Let $d_t'$ denote the largest distance smaller than $d_t$, so by assumption, $d_t>s\cdot d_t'$.
Assume there are $x$ points whose minimum distance to $C$ is $d_t$.
Then the probability that $d^\alpha$-sampling will fail to choose one of these points is at most
\begin{align*}
\frac{(n-x)d_t'^\alpha}{(n-x)d_t'^\alpha+x(s d_t')^\alpha}&\leq \frac{n-x}{n-x+x\cdot s^\alpha}\\
&\leq \frac{n}{s^\alpha}
\end{align*}

Over all $k$ rounds of the algorithm, the probability that $d^\alpha$-sampling will deviate from farthest-first traversal
(assuming they start with the same first choice of a center and break ties at random in the same way) is at most
$\frac{nk}{s^\alpha}$, and if we set this probability smaller than $\delta$ and solve for $\alpha$,
we obtain

$$\alpha>\frac{\log\left(\frac{nk}{\delta}\right)}{\log s}.$$
\end{proof}

Now we define a common stability assumption called separability \cite{kobren2017online,pruitt2011ncbi},
which states that there exists a value $r$ such that all points in the same cluster have distance less than $r$ and all points in different clusters have distance greater than $r$.

\begin{definition}
A clustering instance satisfies $(1+c)$-separation if $(1+c)\max_{i\mid u,v\in C_i}d(u,v)<\min_{j\neq j'\mid u\in C_j,v\in C_{j'}} d(u,v)$.
\end{definition}

Now we show that under $(1+c)$-separation, $d^\alpha$-sampling will give the same output as farthest-first traversal with high probability if $\alpha>\log n$,
even for $c=.1$.

\begin{lemma} \label{lem:stability}
Given a clustering instance $\V$ satisfying $(1+c)$-separation, and $0<\delta$,
then if $\alpha>\frac{1}{c}\cdot\left(\log n + \log \frac{1}{\delta}\right)$, with probability $>1-\delta$,
$d^\alpha$-sampling with Lloyd's algorithm will output the optimal clustering.
\end{lemma}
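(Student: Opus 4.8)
The plan is to decompose the argument into a probabilistic seeding step and a deterministic Lloyd's step. Let $r = \max_i \max_{u,v \in C_i} d(u,v)$ denote the maximum intra-cluster distance, so that $(1+c)$-separation guarantees $d(u,v) > (1+c)r$ whenever $u$ and $v$ lie in different clusters. First I would show that with probability at least $1-\delta$, phase 1 of the algorithm ($d^\alpha$-sampling) places exactly one center in each of the $k$ ground-truth clusters. Then I would show, deterministically, that whenever the seeds consist of one point per cluster, the subsequent Lloyd's iterations immediately recover the separated clustering and never leave it. Since under $(1+c)$-separation the separated clustering $\{C_1,\dots,C_k\}$ is the optimal $\ell_\beta$ clustering (a standard consequence of strong separation), this yields the claim.

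For the seeding step, I would argue by induction on the round $t$, conditioning on the event that the first $t-1$ centers lie in $t-1$ distinct clusters. The first center is chosen uniformly at random and trivially lands in one cluster. Suppose the current center set $C$ meets one point in each of some subset of clusters. Any point $v$ in a cluster already containing a center satisfies $d_{\min}(v,C) \le r$, so the total $d^\alpha$-sampling weight on such points is at most $n r^\alpha$; meanwhile any point $v$ in a not-yet-covered cluster satisfies $d_{\min}(v,C) > (1+c)r$, so the denominator of the sampling distribution is at least $((1+c)r)^\alpha$. Hence the probability of choosing a center from an already-covered cluster in round $t$ is at most $\frac{n r^\alpha}{((1+c)r)^\alpha} = n/(1+c)^\alpha$. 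Taking a union bound over the at most $k$ rounds, the probability that seeding ever repeats a cluster is at most $nk/(1+c)^\alpha$; setting this below $\delta$ and using $\log(1+c) \ge c/2$ (valid for $c \in (0,1]$, with $\log(1+c) \approx c$ for small $c$) gives the stated threshold $\alpha > \frac1c\bigl(\log n + \log\frac1\delta\bigr)$ up to the absorbed $\log k \le \log n$ factor and a constant.

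For the Lloyd's step, assume the seeds place one center $c_i \in C_i$ in each cluster. For any point $v \in C_i$ we have $d(v, c_i) \le r$, while for any center $c_j$ with $j \ne i$ we have $d(v, c_j) > (1+c)r > r$; thus the Voronoi partition induced by the seeds is exactly the separated clustering $\{C_1,\dots,C_k\}$. The center-recomputation step sets the new center of cluster $C_i$ to $\operatorname{argmin}_{x \in V}\sum_{v \in C_i} d(x,v)^\beta$; any $x$ outside $C_i$ incurs cost at least $|C_i|((1+c)r)^\beta$, strictly more than the at most $|C_i| r^\beta$ achievable by a point inside $C_i$, so the new center remains in $C_i$. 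By the same gap argument the Voronoi partition is unchanged, so the centers stabilize, the algorithm halts, and it outputs the separated (hence optimal) clustering.

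The main obstacle is the probabilistic seeding bound and matching the exact $\alpha$ threshold: the separation assumption does not control the minimum distance ratio $s$ used in Lemma \ref{lem:farthest-first}, so I cannot simply invoke that lemma and must instead exploit the intra-/inter-cluster gap directly. The delicate part is bounding the per-round failure weight and converting the resulting $nk/(1+c)^\alpha$ bound into the claimed $\frac1c\bigl(\log n + \log\frac1\delta\bigr)$ form, which relies on the inequality $\log(1+c) \ge c/2$ and on absorbing lower-order terms into the constant.
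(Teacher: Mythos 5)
Your proposal is correct and follows essentially the same route as the paper's proof: bound the per-round probability of seeding an already-covered cluster by the ratio of the (at most $n r^\alpha$) weight on covered points to the (at least $((1+c)r)^\alpha$) weight from an uncovered cluster, union bound over the $k$ rounds, and then observe that one seed per cluster makes the Voronoi partition equal to the separated clustering so Lloyd's converges immediately. If anything you are more careful than the paper, which writes $c^\alpha$ where $(1+c)^\alpha$ is meant and leaves the Lloyd's stabilization implicit; the constant-factor slack in matching the stated threshold $\frac{1}{c}(\log n + \log\frac{1}{\delta})$ via $\log(1+c)\geq c/2$ is present in the paper's own argument as well.
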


\begin{proof}
Given $\V$ satisfying $(1+c)$-separation,
we know there exists a value $r$ such that for all $i$, for all $u,v\in C_i$, $d(u,v)<r$,
and for all $u\in C_j$, $v\in C_{j'\neq j}$, $d(u,v)>(1+c)r$.
WLOG, let $r=1$.
Consider round $t$ of $d^\alpha$-sampling
and assume that each center $v$ in the current set of centers $C$ is from a unique cluster in the optimal solution.
Now we will bound the probability that the center chosen in round $t$ is not from a unique cluster in the optimal solution.
Given a point $u$ from a cluster already represented in $C$, there must exist $v\in C$ such that
$d(u,v)<1$, so $d_{min}(u,C)<1$. Given a point $u$ from a new cluster, it must be the case that $d_{min}(u,C)>(1+c)$.
The total number of points in represented clusters is $<n$.
Then the probability we pick a point from an already represented cluster is
$\leq\frac{tn}{tn+c^\alpha}$. If we set $\frac{tn}{tn+c^\alpha}\leq \frac{\delta}{k}$ and solve for $\alpha$,
we obtain $\alpha>\frac{\log n+\log\frac{1}{\delta}}{\log c}\leq \frac{1}{c}\cdot\left(\log n + \log \frac{1}{\delta}\right)$.
Since this is true for an arbitrary round $t$, and there are $k$ rounds in total, we may union bound over all rounds
to show the probability $d^\alpha$-sampling outputting one center per optimal clustering is $>1-\delta$.
Then, using $(1+c)$-separation, the Voronoi tiling of these centers must be the optimal clustering, so Lloyd's algorithm
converges to the optimal solution in one step.
\end{proof}

Next, we give the formal proof of Theorem \ref{thm:alpha-just}.

\medskip
\noindent \textbf{Theorem~\ref{thm:alpha-just} (restated).}
\emph{
For $\alpha^*\in [.01,\infty)\cup\{\infty\}$ and $\beta^*\in [1,\infty)\cup\{\infty\}$, there exists a clustering instance $\V$
whose target clustering is the optimal $\ell_{\beta^*}$ clustering,
such that $\clus_{\alpha^*,\beta^*}(\V)<\clus_{\alpha,\beta}(\V)$ for all $(\alpha,\beta)\neq (\alpha^*,\beta^*)$.
}

\begin{proof}
Consider $\alpha^*,\beta^*\in [0,\infty)\cup\{\infty\}$.
The clustering instance consists of 6 clusters, $C_1,\dots,C_6$.
The target clustering will be the optimal $\ell_{\beta^*}$ objective.
The basic idea of the proof is as follows.

First, we show that for all $\beta$ and $\alpha\neq\alpha^*$, $\clus_{\alpha^*,\beta}(\V)<\clus_{\alpha,\beta}(\V)$.
We use clusters $C_1,\dots,C_4$ to accomplish this. We set up the distances so that
$d^{\alpha^*}$ sampling is more likely to sample one point per cluster than any other value of $\alpha$.
If the sampling does not sample one point per cluster, then it will fall into a high-error local minima trap that
$\beta$-Lloyd's method cannot escape, for any value of $\beta$.
Therefore, $d^{\alpha^*}$ sampling is more effective than any other value of $\alpha$.

Next, we use clusters $C_5$ and $C_6$ to show that if we start with one center in $C_5$ and one in $C_6$, then
$\beta^*$-Lloyd's method will strictly outperform any other value of $\beta$.
We accomplish this by adding three choices of centers for $C_5$.
Running $\beta^*$-Lloyd's method will return the correct center, but any other value of $\beta$ will return suboptimal centers which incur error
on $C_5$ and $C_6$. Also, we show that Lloyd's method returns the same centers on $C_1,\dots,C_4$, independent of $\beta$.

For the first part of the proof, we define three cliques (see Figure \ref{fig:alpha-just}).
The first two cliques are $C_1$ and $C_2$, and the third clique is $C_3\cup C_4$.
$C_1$ contains $w_1$ points at distance $x>1$, and $C_2$ contains $w_2$ points at distance $\frac{1}{x}$.
We set $w_2=x^{2\alpha^*} w_1$. The last clique contains $w$ points at distance 1.
Since the cliques are very far apart, the first three sampled centers will each be in a different clique, with high probability (for $\alpha>.01)$.
The probability of sampling a 4th center $x_4$ in the third clique, for $\alpha=\alpha^*+\delta$ is equal to

$$\frac{w}{w+(x^{\alpha^*+\delta}+x^{\alpha^*-\delta})w_2}.$$

Since $x^{\alpha^*+\delta}+x^{\alpha^*-\delta}$ is minimized when $\delta=0$, this probability is maximized when $\alpha=\alpha^*$.
Now we show that the error will be much larger when $x_4$ is not in the third clique.
We add center $c_1$ which is distance $x-\epsilon$ to all points in $C_1$. We also add centers $b_1$ and $b_1'$ which are distance $x-2\epsilon$
to $B_1$ and $B_1'$ such that $B_1$ and $B_1'$ form a partition of $C_1$.
Similarly, we add centers $c_2$, $b_2$, and $b_2'$ at distance $\frac{1}{x}-\epsilon$ and $\frac{1}{x}-2\epsilon$ to $C_2$, $B_2$, and $B_2'$,
respectively, such that $B_2$ and $B_2'$ form a partition of $C_2$.
Finally, we add $c_3$ and $c_4$ which are distance .5 to $C_3$ and $C_4$, respectively,
and we add $b_3$ which is distance $1-\epsilon$ to $C_3\cup C_4$.
Then the optimal centers for any $\beta$ must be $c_1,c_2,c_3,c_4$,
and this will be the solution of $\beta$-Lloyd's method, as long as the sampling procedure returned one point in the first two cliques,
and two points in the third clique.
If the sampling procedure returns two points in the first clique or second clique, then $\beta$-Lloyd's method will return
$b_1,b_1',c_2,b_3$ or $c_1,b_2,b_2',b_3$, respectively. This will incur error $w/2+w_1/2$ or $w/2+w_2/2$, since we set the target clustering
to be the optimal $\ell_{\beta^*}$ objective which is equal to $\{C_1,C_2,C_3,C_4\}$.
Note that we have set up the distances so that Lloyd's method is independent of $\beta$.
Therefore, the expected error is equal to

$$\frac{w}{w+(x^{\alpha^*+\delta}+x^{\alpha^*-\delta})w_2}\cdot (w/2+\min(w_1,w_2)/2).$$

This finishes off the first part of the proof.
Next, we construct $C_5$ and $C_6$ so that $\beta^*$-Lloyd's method will return the best solution, assuming the sampling returned one point in
$C_5$ and $C_6$. Later we will show how adding these clusters does not affect the previous part of the proof.
We again define three cliques.
The first clique is size $\frac{2}{3}w_5$ and distance $.1$, the second clique is size $\frac{1}{3}w_5$ and distance $.1$, and the third clique is size
$w_6$ and distance $.1$. The first two cliques are $C_5$, and the second clique is $C_6$.
The distance between the first two cliques is $.2$, and the distance between the first two and the third clique is $1000$.
Now imagine the first two cliques are parallel to each other, and there is a perpendicular bisector which contains possible centers for $C_5$.
I.e., we will consider possible centers $c$ for $C_5$ where the $\ell_\beta$ cost of $c$ is
$\frac{2}{3}w_5\cdot z^\beta+\frac{1}{3}w_5(.2-z)^\beta$ for some $0\leq z\leq .2$, . For $\beta\in (0,\infty)$, the $\beta$ which minimizes the expression must be in $[0,.2]$. We set $c_5$ corresponding to the $z$ which minimizes the expression for $\beta^*$, call it $z^*$.
Therefore, $\beta^*$-Lloyd's method will output $c_5$. We also set centers $b_5$ and $b_5'$ corresponding to $z^*-\epsilon$ and $z^*+\epsilon$.
Therefore, any value of $\beta$ slightly above or below $\beta^*$ will return a different center.
We add a center $C_6$ at distance $.1-\epsilon$ to the third clique. This is the only point we add, so it will always be chosen by
$\beta$-Lloyd's method for all $\beta$. Finally, we add two points $p_1$ and $p_2$ in between the second and third
cliques. We set the distances as follows. $d(c_5,p_1)=d(c_6,p_2)=500-\epsilon$, $d(c_5,p_2)=d(c_6,p_1)=500$,
$d(b_5,p_1)=500+\epsilon$, and $d(b_5',p_2)=500-2\epsilon$.
Since the weight of these two points are very small compared to the cliques, these points will have no effect on the prior sampling and Lloyd's method analyses.
The optimal clustering for the $\ell_{\beta^*}$ objective is to add $p_1$ to $C_5$ and $p_2$ to $C_6$.
However, running $\beta$-Lloyd's method for $\beta$ smaller or larger than $\beta^*$ will return center $b_5$ or $b_5'$ and incur error 1
by mislabeling $p_1$ or $p_2$.

Since all cliques from both parts of the proof are 1000 apart, with high probability, the first 5 sampled points will
be in cliques $C_1,C_2,C_3\cup C_4,C_5,$ and $C_6$. Since the cliques from the second part of the proof are distance .2,
while in the first part they were $>\frac{1}{x}$ apart, we can set the variables $x,w_1,w_2,w,w_5,w_6$ so that with high probability,
the sixth sampled point will not be in $C_5$ or $C_6$.
Therefore, the constructions in the second part do not affect the first part.
This concludes the proof.
\end{proof}


Now we upper bound the number of discontinuities of  $\seed_\alpha(\V,\vec{Z})$.
Recall that $\seed_\alpha(\V,\vec{Z})$ denotes the outputted centers from phase 1 of Algorithm \ref{alg:clus} on instance $\V$ with randomness $\vec{Z}$.
For the first phase, our main structural result is to show that for a given clustering instance and value of $\beta$,
with high probability over the randomness in Algorithm \ref{alg:clus},
the number of discontinuities of the cost function $\clus_{\alpha,\beta}\left(\V,\vec{Z}\right)$ as we vary $\alpha\in [0,\ahi]$
is $O(nk(\log n)\ahi)$.
Our analysis crucially harnesses the randomness in the algorithm to achieve this bound.
In contrast, a combinatorial approach would only achieve a bound of $n^{O(k)}$,
which is the total number of sets of $k$ centers.
For completeness, we start with a combinatorial proof of $O(n^{k+3})$ discontinuities.
Although Theorem \ref{thm:combinatorial} is exponential as opposed to Theorem \ref{thm:expect}, it holds with probability 1 and
has no dependence on $\ahi$.
First, we need to state a consequence of Rolle's Theorem.

\begin{theorem}[ex. \cite{tossavainen}]\label{thm:roots}
Let $f$ be a polynomial-exponential sum of the form $f(x) = \sum_{i = 1}^N a_i b_i^x$, where $b_i > 0$, $a_i \in \R$, and at least one $a_i$ is non-zero. The number of roots of $f$ is upper bounded by $N$.
\end{theorem}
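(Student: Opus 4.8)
The plan is to prove the bound by induction on $N$, exploiting two facts: dividing $f$ by a single exponential leaves its root set unchanged, while differentiating a sum of this form produces another sum of the same form with one fewer term. Rolle's theorem then controls how the number of roots can grow from the differentiated sum back to the original. First I would reduce to the case where the bases $b_1, \dots, b_N$ are distinct: if two bases coincide we merge their terms, which only decreases $N$, and if a merged coefficient vanishes we drop that term and decrease $N$ again. Thus it suffices to show that a sum of $N$ terms with distinct positive bases and at least one nonzero coefficient has at most $N-1$ roots, which in particular implies the stated bound of $N$.

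For the base case $N = 1$ we have $f(x) = a_1 b_1^x$ with $a_1 \neq 0$ and $b_1^x > 0$ for every $x$, so $f$ has no roots, consistent with the bound $N - 1 = 0$.

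For the inductive step, suppose the claim holds for sums of $N-1$ terms, and let $f(x) = \sum_{i=1}^N a_i b_i^x$. I would divide by $b_1^x$, which is everywhere positive and hence preserves the root set, to obtain $g(x) = a_1 + \sum_{i=2}^N a_i (b_i/b_1)^x$. Differentiating gives $g'(x) = \sum_{i=2}^N a_i \ln(b_i/b_1)\,(b_i/b_1)^x$, which is again a sum of the required form but with only $N-1$ terms: the bases $b_i/b_1$ are distinct and positive, and because the original bases are distinct we have $\ln(b_i/b_1) \neq 0$, so each nonzero $a_i$ yields a nonzero coefficient in $g'$. By the inductive hypothesis $g'$ has at most $N-2$ roots. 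By Rolle's theorem, strictly between any two consecutive roots of $g$ lies a root of $g'$, so if $g$ had $m$ roots then $g'$ would have at least $m-1$ roots; combining with $m-1 \leq N-2$ gives $m \leq N-1$. Since $f$ and $g$ have identical roots, $f$ has at most $N-1 \leq N$ roots, completing the induction.

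The main obstacle is guaranteeing that the differentiated sum $g'$ genuinely retains $N-1$ nondegenerate terms, so that the inductive hypothesis truly applies to it; this is precisely where distinctness of the bases is needed, since it forces $\ln(b_i/b_1) \neq 0$ and prevents the derivative from collapsing to a shorter sum. This is the reason for the preliminary reduction to distinct bases, after which the remainder is a routine application of Rolle's theorem.
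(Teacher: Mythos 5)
Your proof is correct. The paper does not prove Theorem~\ref{thm:roots} itself---it is imported from the cited reference---and your argument (merge equal bases, normalize by $b_1^x$, differentiate to drop one term, apply Rolle's theorem between consecutive roots, and induct) is exactly the standard proof of that result, and it even yields the slightly sharper bound $N-1$. The only case you pass over silently is when $a_2=\dots=a_N=0$, so that $g'\equiv 0$ and the inductive hypothesis cannot be applied to it; but then $g$ is the nonzero constant $a_1$ and has no roots, so the bound holds trivially.
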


Now we are ready to prove the combinatorial upper bound.

\begin{theorem} \label{thm:combinatorial}
Given a clustering instance $\V$ and vector $\vec{Z}\in [0,1]^k$, the number of discontinuities
of $\seed_\alpha(\V,\vec{Z})$ as a function of $\alpha$ over $[0,\infty)\cup\{\infty\}$ is $O\left(\min\left(n^{k+3},n^2 2^n\right)\right)$.
\end{theorem}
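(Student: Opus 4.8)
The plan is to show that every discontinuity of $\alpha \mapsto \seed_\alpha(\V,\vec{Z})$ is a root of one of a bounded family of polynomial-exponential equations, and then apply Theorem~\ref{thm:roots} to count roots. First I would localize where the output can change. Fix $\alpha$ and run phase~1: in round $t$ the algorithm holds the center set $C$ consisting of the first $t-1$ chosen centers, which determines the distances $d_j = \min_{c\in C} d(v_j,c)$ and hence the interval boundaries $\frac{D_i(\alpha)}{D_n(\alpha)}$ for $i=1,\dots,n-1$. The center selected in round $t$ can change only when the fixed value $z_t$ crosses one of these moving boundaries, i.e.\ only at an $\alpha$ solving $\frac{D_i(\alpha)}{D_n(\alpha)} = z_t$ for some $i$. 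A discontinuity of $\seed_\alpha$ is precisely a value of $\alpha$ at which the selected center in some round changes, so every discontinuity is such a crossing.

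Next I would count crossings for a fixed state. For a fixed $C$ and index $i$, clearing the denominator turns $\frac{D_i(\alpha)}{D_n(\alpha)} = z_t$ into $\sum_{j=1}^{i}(1-z_t)\,d_j^{\alpha} - \sum_{j=i+1}^{n} z_t\,d_j^{\alpha} = 0$, which is a polynomial-exponential sum in the variable $\alpha$ with positive bases $b_j = d_j$ and at most $n$ terms. By Theorem~\ref{thm:roots} it has at most $n$ real roots, hence at most $n$ roots in $[0,\infty)$. Points at distance zero contribute zero-probability buckets and can be dropped, and coincident distances only merge terms and so cannot increase the root count, so the bound of $n$ is safe in all cases.

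Finally I would sum over all states, and this is the step I expect to be the main obstacle: the center set $C$ in round $t$ itself depends on $\alpha$, so one cannot literally fix a single $C$ across the whole range. I would handle this by over-counting. Any actual crossing encountered along the trajectory occurs in some round $t$ for some \emph{realized} center set $C$ and index $i$, and is therefore a root of the corresponding equation for that particular pair $(C,i)$. Consequently the union, over all conceivable center sets $C$ and all boundary indices $i$, of the root sets contains every discontinuity; the bound becomes (number of center sets)$\,\times\,$(at most $n$ indices)$\,\times\,$(at most $n$ roots). Bounding the number of center sets by all subsets of $V$ gives $2^{n}\cdot n\cdot n = n^{2}2^{n}$, while restricting to the subsets that can actually arise as prefixes, namely those of size at most $k-1$, gives $O(n^{k-1})\cdot n\cdot n = O(n^{k+1})$, which in particular is $O(n^{k+3})$. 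Taking the smaller of the two yields the claimed $O(\min(n^{k+3},\,n^{2}2^{n}))$.

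The only remaining piece is the point $\alpha=\infty$. As $\alpha\to\infty$ the interval of the farthest point widens to full measure, so for every fixed $z_t < 1$ the round-$t$ selection converges to the farthest-first choice; hence $\seed_\alpha = \seed_\infty$ for all sufficiently large finite $\alpha$, and the limit introduces no discontinuity beyond those already counted (any tie-induced boundary cases add at most $O(k)$, absorbed into the bound). This would complete the argument.
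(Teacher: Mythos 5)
Your proposal is correct and follows essentially the same route as the paper's own proof: identify each discontinuity with a root of an equation $\frac{D_i(\alpha)}{D_n(\alpha)} = z_t$ for some realized center set $C$ and boundary index $i$, invoke Theorem~\ref{thm:roots} to bound the roots of each such polynomial-exponential sum by $O(n)$, and over-count the states by either all subsets of $V$ (giving $n^2 2^n$) or all subsets of size at most $k-1$ (giving a bound within $O(n^{k+3})$). The only differences are cosmetic --- your per-state count is slightly tighter than the paper's, and you add an explicit (and harmless) treatment of $\alpha=\infty$ that the paper omits.
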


\begin{proof}
Given a clustering instance $\V$ and a vector $\vec{Z}$,
consider round $t$ of the $d^\alpha$ seeding algorithm.
Recall that the algorithm decides on the next center based on $z_t\in [0,1]$, $\alpha$,
and the distance from each point to the set of current centers.
The idea of this proof will be to count the number of $\alpha$ intervals such that
within one interval, all possible decisions the algorithm makes are fixed.
In round $t$, there are ${n \choose t-1}$ choices for the set of current centers $C$.
We denote the points $V=\{v_1,\dots,v_n\}$ and WLOG assume the algorithm orders the intervals
$I_{v_1},I_{v_2},\dots,I_{v_n}$.
Given a point $v_i\in V$, it
will be chosen as the next center if and only if $z_t$ lands in its interval, formally,
\begin{equation*}
\frac{\sum_{j=1}^{i-1}d_{\min}(v_j,C)^\alpha}{\sum_{j=1}^{n}d_{\min}(v_j,C)^\alpha}<z_t<
\frac{\sum_{j=1}^{i}d_{\min}(v_j,C)^\alpha}{\sum_{j=1}^{n}d_{\min}(v_j,C)^\alpha}
\end{equation*}

By Theorem \ref{thm:roots}, these two equations have at most $n+1$ roots each.
Therefore, in the $n+2$ intervals of $\alpha$ between each root, the decision whether or not to choose $v_i$ as a center in round $t$ is fixed.
Note that the center set $C$, the point $v_i$, and the number $z_t$ fixed the coefficients of the equation.
Since in round $t$, there are ${n\choose t}$ choices of centers, then there are $\sum_{t=1}^k {n\choose k}\cdot n\cdot 2$ total equations which determine the
outcome of the algorithm. Each equation has at most $n+1$ roots, so it follows there are $1+\sum_{t=1}^k {n\choose k}\cdot n\cdot 2(n+1)\in O(n^3\cdot n^k)$ total intervals
of $\alpha$ along $[0,\infty)\cup\{\infty\}$ such that within each interval, the entire outcome of the algorithm, $\seed_\alpha(\V,\vec{Z})$, is fixed.
Note that we used $k\cdot n^k$ to bound the total number of choices for the set of current centers.
We can also bound this quantity by $2^n$, since each point is either a center or not a center.
This results in a final bound of $O\left(\min\left(n^{k+3},n^2 2^n\right)\right)$.
\end{proof}


\medskip\noindent\textbf{Lemma~\ref{lem:monotoneBins} (restated).}
\emph{
	Assume that $v_1$, \dots, $v_n$ are sorted in decreasing distance from a set $C$ of centers.
	Then for each $i=1,\dots,n$, the function
	$\alpha \mapsto\frac{D_i(\alpha)}{D_n(\alpha)}$ is monotone increasing and continuous
	along $[0,\infty)$.
	Furthermore, for all $1\leq i\leq j\leq n$ and $\alpha\in[0,\infty)$, we have $\frac{D_i(\alpha)}{D_n(\alpha)}\leq\frac{D_j(\alpha)}{D_n(\alpha)}$.
}
\begin{proof}
  Recall that $D_i(\alpha) = \sum_{j=1}^i \dmin(v_{(i)}, C)^\alpha$, where
  $v_1$, \dots, $v_n$ are the points sorted in decreasing order of
  distance to the set of centers $C$.

  First we show that for each $i$, the function $\alpha \mapsto D_i(\alpha)/D_n(\alpha)$
	is monotone increasing. Given $\alpha_1<\alpha_2$, we
  must show that for each $i$,
  \[
  \frac{D_i(\alpha_1)}{D_n(\alpha_1)} \leq \frac{D_i(\alpha_2)}{D_n(\alpha_2)}.
  \]
  This is equivalent to showing
  \[
  D_i(\alpha_1)D_n(\alpha_2)\leq D_i(\alpha_2)D_n(\alpha_1).
  \]
  Using the shorthand notation $d_j = d(v_j,C)$, we have
  \begin{align*}
  D_i(\alpha_1)D_n(\alpha_2)
  =& \left(\sum_{j=1}^i d_j^{\alpha_1}\right)\left(\sum_{j=1}^n d_j^{\alpha_2}\right) \\
  =& \left(\sum_{j=1}^i d_j^{\alpha_1}\right)\left(\sum_{j=1}^i d_j^{\alpha_2}\right)
  +\left(\sum_{j=1}^i d_j^{\alpha_1}\right)\left(\sum_{j=i+1}^n d_j^{\alpha_2}\right) \\
  =& \left(\sum_{j=1}^i d_j^{\alpha_1}\right)\left(\sum_{j=1}^i d_j^{\alpha_2}\right)
  +\sum_{j=1}^i\sum_{k=i+1}^n d_j^{\alpha_1} d_k^{\alpha_2} \\
  \leq& \left(\sum_{j=1}^i d_j^{\alpha_1}\right)\left(\sum_{j=1}^i d_j^{\alpha_2}\right)
  +\sum_{j=1}^i\sum_{k=i+1}^n d_j^{\alpha_1} d_k^{\alpha_2} \left(\frac{d_j}{d_k}\right)^{\alpha_2-\alpha_1} \\
  =& \left(\sum_{j=1}^i d_j^{\alpha_1}\right)\left(\sum_{j=1}^i d_j^{\alpha_2}\right)
  +\sum_{j=1}^i\sum_{k=i+1}^n d_j^{\alpha_2} d_k^{\alpha_1} \\
  =& \left(\sum_{j=1}^i d_j^{\alpha_2}\right)\left(\sum_{j=1}^n d_j^{\alpha_1}\right)\\
  =& D_i(\alpha_2)D_n(\alpha_1),
  \end{align*}
  as required.

	Next we show that $\alpha \mapsto D_i(\alpha)/D_n(\alpha)$ is continuous along $[0,\infty)$.
	$D_i(\alpha)$ and $D_n(\alpha)$ are both sums of simple exponential functions, so they are
	continuous. $D_n(\alpha)$ is always at least $n$ along $[0,\infty)$, therefore,
	$D_i(\alpha)/D_n(\alpha)$ is continuous.

	Finally, we show that for all $1\leq i\leq j\leq n$, we have $\frac{D_i(\alpha)}{D_n(\alpha)}\leq\frac{D_j(\alpha)}{D_n(\alpha)}$.
	Given $1\leq i\leq j\leq n$, then
	\begin{equation*}
	\frac{D_j(\alpha)}{D_n(\alpha)}-\frac{D_i(\alpha)}{D_n(\alpha)}
	=\frac{d_{i+1}^\alpha+\cdots+d_j^\alpha}{D_n(\alpha)}\geq 0
	\end{equation*}

	This completes the proof.
\end{proof}

Now to work up to the proof of Theorem \ref{thm:expect}, we bound the derivative of $\frac{D_i(\alpha)}{D_n(\alpha)}$.

\begin{lemma} \label{lem:deriv}
  Given distances $d_1\geq\cdots\geq d_n$, for any index $i$ and $\alpha > 0$,
  we have
  \[
  \left|\frac{\partial}{\partial\alpha}
  \left(\frac{D_i(\alpha)}{D_n(\alpha)}\right)\right|\leq
  \min\left\{\frac{2}{\alpha}\log n , \log \frac{d_1}{d_n} \right\}.
  \]
\end{lemma}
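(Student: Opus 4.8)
The plan is to recognize the derivative as a covariance under the $d^\alpha$-sampling distribution and then bound that covariance in two different ways. First I would introduce the weights $p_j = d_j^\alpha / D_n(\alpha)$, which form a probability distribution on $\{1,\dots,n\}$; let $J$ denote a draw from it. Then $\frac{D_i(\alpha)}{D_n(\alpha)} = \sum_{j \le i} p_j$, and differentiating with the quotient rule gives, after writing $D_i'(\alpha) = \sum_{j\le i} d_j^\alpha \log d_j$ and $D_n'(\alpha) = \sum_j d_j^\alpha \log d_j$, the clean identity $\frac{\partial}{\partial \alpha}\bigl(\frac{D_i(\alpha)}{D_n(\alpha)}\bigr) = \sum_{j \le i} p_j(\log d_j - \mu)$, where $\mu = \sum_j p_j \log d_j = \E[\log d_J]$. (This is just the standard fact that differentiating the mean of a statistic in the natural parameter yields its covariance with the sufficient statistic $\log d_J$.) Since the full sum $\sum_{j} p_j(\log d_j - \mu)$ equals zero, any partial sum has magnitude at most the total absolute mass, giving the key reduction $\bigl|\frac{\partial}{\partial\alpha}(\frac{D_i}{D_n})\bigr| \le \E\,|\log d_J - \mu|$. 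I would also note that any point at distance $0$ carries weight $p_j = 0$ under $d^\alpha$-sampling and may be discarded, so WLOG $d_n > 0$ and all logarithms are finite.

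For the first bound I would simply use the range of the values: every $\log d_j$ lies in the interval $[\log d_n, \log d_1]$, which also contains the mean $\mu$. Hence $|\log d_j - \mu| \le \log d_1 - \log d_n = \log(d_1/d_n)$ pointwise, and averaging gives $\E\,|\log d_J - \mu| \le \log \frac{d_1}{d_n}$.

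The main work is the $\frac{2}{\alpha}\log n$ bound, which I would obtain through an entropy reformulation. From $\log p_j = \alpha \log d_j - \log D_n(\alpha)$ one gets $\log d_j = \frac{1}{\alpha}\bigl(\log p_j + \log D_n(\alpha)\bigr)$, and the additive $\log D_n(\alpha)$ term cancels in the centering, so $\log d_j - \mu = \frac{1}{\alpha}\bigl(\log p_j - \E[\log p_J]\bigr)$. Writing $X := \log p_J + H(p)$ with $H(p) = -\sum_j p_j \log p_j$ the Shannon entropy, and using $\E[\log p_J] = -H(p)$, this yields $\E\,|\log d_J - \mu| = \frac{1}{\alpha}\,\E|X|$. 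Now $X$ has mean zero, so $\E|X| = 2\,\E[X^+]$; and since $p_J \le 1$ forces $\log p_J \le 0$, we have $X \le H(p) \le \log n$, whence $X^+ \le \log n$ and $\E[X^+] \le \log n$. Combining gives $\E\,|\log d_J - \mu| \le \frac{2}{\alpha}\log n$, and taking the minimum of the two bounds completes the proof. I expect the symmetrization step, combining the zero-mean identity $\E|X| = 2\,\E[X^+]$ with the one-sided bound $X \le H(p) \le \log n$, to be the crux: it is exactly what tames the potentially heavy negative tail of $\log p_J$ (a point with tiny sampling probability) and converts it into a clean $\log n$ factor.
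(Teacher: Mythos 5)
Your proposal is correct, and it reaches both bounds by a genuinely different route than the paper. The paper computes the derivative via the quotient rule as the explicit double sum $\bigl(\sum_{x\le i}\sum_{y>i} d_x^\alpha d_y^\alpha\log(d_x/d_y)\bigr)/\bigl(\sum_x\sum_y d_x^\alpha d_y^\alpha\bigr)$; the $\log(d_1/d_n)$ bound then falls out by bounding each $\log(d_x/d_y)$ term, and the $\frac{2}{\alpha}\log n$ bound is obtained by pulling out $\frac{1}{\alpha}$ and running an ad hoc two-case analysis on whether $d_y^\alpha \ge d_1^\alpha/n^2$, using the elementary inequality $\frac{1}{x}\log x \le \frac{1}{n}$ for $x > n^2$ to control the heavy-tailed terms. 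You instead write the derivative as the exponential-family covariance $\sum_{j\le i}p_j(\log d_j - \mu)$, reduce to the mean absolute deviation $\E|\log d_J - \mu|$ by the triangle inequality, and then get the range bound trivially and the $\frac{2}{\alpha}\log n$ bound via the entropy reformulation $\log d_j - \mu = \frac{1}{\alpha}(\log p_j + H(p))$ together with the symmetrization $\E|X| = 2\E[X^+]$ for mean-zero $X$ and the one-sided bound $X^+ \le H(p) \le \log n$. All steps check out (the centering identity, the sign of $\log p_J$, and the WLOG removal of zero-distance points, which carry no weight for $\alpha>0$, are all handled correctly), and you recover the same constant $2$. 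What your argument buys is transparency: the factor $2$ is visibly the symmetrization constant and the $\log n$ is visibly the maximum entropy of a distribution on $n$ atoms, so the bound extends verbatim to any one-parameter exponential family of sampling weights, whereas the paper's case split is tailored to the specific form $d_j^\alpha$. What the paper's argument buys is self-containedness at the level of elementary inequalities on the explicit double sum, with no probabilistic reinterpretation required.
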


\begin{proof}
  For any $i \in [n]$, the derivative of $D_i(\alpha)$ is given by $D'_i(\alpha)
  = \sum_{j=1}^i d_j^\alpha \log d_j$. With this,
\begin{align}
\frac{\partial}{\partial\alpha} \left(\frac{D_i(\alpha)}{D_n(\alpha)}\right)
&= \frac{D'_i(\alpha)D_n(\alpha)-D'_n(\alpha)D_i(\alpha)}{(D_n(\alpha))^2} \nonumber \\
&= \frac{\left(\sum_{x=1}^i d_x^\alpha\log d_x\right)\left(\sum_{y=1}^n d_y^\alpha\right)
-\left(\sum_{x=1}^i d_x^\alpha\right)\left(\sum_{y=1}^n d_y^\alpha\log d_y \right)}{\sum_{x=1}^n\sum_{y=1}^n d_x^\alpha d_y^\alpha} \nonumber \\
&= \frac{\sum_{x=1}^i \sum_{y=1}^n \left( d_x^\alpha d_y^\alpha \log d_x \right)
-\sum_{x=1}^{i} \sum_{y=1}^n \left( d_x^\alpha d_y^\alpha \log d_y \right)}
{\sum_{x=1}^n\sum_{y=1}^n d_x^\alpha d_y^\alpha} \nonumber \\
&= \frac{\sum_{x=1}^i \sum_{y=i+1}^n \left( d_x^\alpha d_y^\alpha \log d_x \right)
-\sum_{x=1}^{i} \sum_{y=i+1}^n \left( d_x^\alpha d_y^\alpha \log d_y \right)}
{\sum_{x=1}^n\sum_{y=1}^n d_x^\alpha d_y^\alpha} \nonumber \\
&= \frac{\sum_{x=1}^i \sum_{y=i+1}^n \left( d_x^\alpha d_y^\alpha \log\left(\frac{d_x}{d_y}\right)\right)}
{\sum_{x=1}^n\sum_{y=1}^n d_x^\alpha d_y^\alpha} \label{eq:deriv}
\end{align}
At this point, because $d_1\geq\cdots\geq d_n\geq 0$, we achieve our second bound as follows.
\begin{equation}
\frac{\sum_{x=1}^i \sum_{y=i+1}^n \left( d_x^\alpha d_y^\alpha \log\left(\frac{d_x}{d_y}\right)\right)}
{\sum_{x=1}^n\sum_{y=1}^n d_x^\alpha d_y^\alpha}
\leq  \frac{\sum_{x=1}^i \sum_{y=i+1}^n d_x^\alpha d_y^\alpha}
{\sum_{x=1}^n\sum_{y=1}^n d_x^\alpha d_y^\alpha}
\cdot \log\frac{d_1}{d_n}
\leq \log \frac{d_1}{d_n}.
\label{eq:deriv_part_1}
\end{equation}
Recall our goal is to bound the derivative by a minimum of two quantities. Equation \ref{eq:deriv_part_1} gives us the second bound.
To achieve the first bound, we bound Equation \ref{eq:deriv} a different way, as follows.
\begin{align}
\frac{\sum_{x=1}^i \sum_{y=i+1}^n \left( d_x^\alpha d_y^\alpha \log\left(\frac{d_x}{d_y}\right)\right)}
{\sum_{x=1}^n\sum_{y=1}^n d_x^\alpha d_y^\alpha}
%
&\leq\frac{1}{\alpha}\cdot\frac{\sum_{x=1}^i d_x^\alpha \left( \sum_{y=i+1}^n d_y^\alpha \log\left(\frac{d_x^\alpha}{d_y^\alpha}\right)\right)}
{\sum_{x=1}^n \sum_{y=1}^n d_x^\alpha  d_y^\alpha }
\label{eq:deriv_part_2_cases}
\end{align}
To upper bound \eqref{eq:deriv_part_2_cases}, we will show that $\sum_{y=i+1}^n
d_y^\alpha\log\left(\frac{d_1^\alpha}{d_y^\alpha}\right)\leq 2\log n\sum_{y=1}^n
d_y^\alpha$. We bound each term of the sum in one of two cases:
\begin{description}
  \item[Case 1:] If $y$ is such that $d_y^\alpha \geq \frac{d_1^\alpha}{n^{2}}$, then
  $d_y^\alpha\log\left(\frac{d_1^\alpha}{d_y^\alpha}\right)\leq 2d_y^\alpha\log
  n$.
  \item[Case 2:] If $y$ is usch that $d_y^\alpha<\frac{d_1^\alpha}{n^2}$, then
  $d_y^\alpha\log\left(\frac{d_1^\alpha}{d_y^\alpha}\right)\leq
  d_1^\alpha\left(\frac{d_y^\alpha}{d_1^\alpha}\log\frac{d_1^\alpha}{d_y^\alpha}\right)\leq\frac{1}{n}\cdot
  d_1^\alpha$, where the last inequality follows because $\frac{1}{x}\log x\leq
  \frac{1}{n}$ for all $x>n^2$.
\end{description}

Let $C_1 = \{ y \geq i + 1\,|\, d_y^\alpha \geq \frac{d_1^\alpha}{n^{2}} \}$
and $C_2 = \{ y \geq i+1 \,|\, d_y^\alpha<\frac{d_1^\alpha}{n^2} \}$ be the
values of $y$ corresponding to Cases 1 and 2, respectively. Then we have
\[
  \sum_{y=i+1}^n d_y^\alpha \cdot \log \frac{d_x^\alpha}{d_y^\alpha}
  =
  \sum_{y \in C_1} d_y^\alpha \cdot \log \frac{d_x^\alpha}{d_y^\alpha}
  +
  \sum_{y \in C_2} d_y^\alpha \cdot \log \frac{d_x^\alpha}{d_y^\alpha}
  \leq
  2 \log(n) \sum_{y=2}^n d_y^\alpha
  +
  d_1^\alpha
  \leq
  2 \log(n) \sum_{y=1}^n d_y^\alpha.
\]
Substituting this into \eqref{eq:deriv_part_2_cases}, we have that
\[
  \frac{\partial}{\partial \alpha} \frac{D_i(\alpha)}{D_n(\alpha)}
  \leq
  \frac{2}{\alpha} \log(n) \cdot
  \frac{\sum_{x=1}^i \sum_{y=i+1}^n d_x^\alpha d_y^\alpha}
  {\sum_{x=1}^n \sum_{y=1}^n d_x^\alpha d_y^\alpha }
  \leq  \frac{2}{\alpha} \log n,
\]
as required.

%
\end{proof}

\thmExpect*

\begin{proof}
Given $\V$, we will show that $\E[\#I]\leq nk\log n\log\frac{\ahi}{\alo}$ over $[\alo,\ahi]$
and $\E[\#I]\leq 4nk\log n(1 + \log \alpha_h + \log\log D)$ over $[0,\ahi]$,
where $\#I$ denotes the total number of discontinuities of $\seed_\alpha(\V,\vec{Z})$
and the expectation
is over the draw of $\vec{Z}\in[0,1]^k$.
Consider round $t$ of a run of Algorithm \ref{alg:clus}.
Suppose at the beginning of round $t$, there are $L$ possible states of the algorithm, e.g., $L$ sets of $\alpha$
such that within a set, the choice of the first $t-1$ centers is fixed.
By Lemma \ref{lem:monotoneBins}, we can write these sets as
$[\alpha_0,\alpha_1],\dots,[\alpha_{L-1},\alpha_L]$, where $0=\alpha_0<\cdots<\alpha_L=\ahi$.
Given one interval, $[\alpha_{\ell},\alpha_{\ell+1}]$,
we claim the expected number of new breakpoints $\#I_{t,\ell}$ by choosing a center in round $t$ is bounded by
$$\min\left(2n\log \dratio (\alpha_{\ell+1}-\alpha_\ell), n-t-1, 4n\log n(\log\alpha_{\ell+1}-\log\alpha_\ell)\right).$$
Note that $\#I_{t,\ell}+1$ is the number of possible choices for the next center in round $t$ using $\alpha$ in $[\alpha_{\ell},\alpha_{\ell+1}]$.

The claim gives three different upper bounds on the expected number of new breakpoints,
where the expectation is only over $z_t$ (the uniformly random draw from $[0,1]$ used by Algorithm \ref{alg:clus} in round $t$),
and the bounds hold for any given configuration of $d_1\geq\cdots\geq d_n$.
To prove the first statement in Theorem~\ref{thm:expect}, we only need the last of the three bounds, and to prove the second statement,
we need all three bounds.
First we show how to prove these statements assuming the claim, and later we will prove the claim.
We prove the first statement as follows.
Let $\#I$ denote the total number of discontinuities of $\seed_\alpha(\V,\vec{Z})$ along $[\alo, \ahi]$.

\begin{align*}
E_{Z\in [0,1]^k}[\#I]&\leq E_{Z\in [0,1]^k} \left[\sum_{t=1}^k\sum_{\ell=1}^{L-1} (\#I_{t,\ell})\right]\\
&\leq \sum_{t=1}^k \sum_{\ell=0}^{L-1} E_{Z\in [0,1]^k}[\#I_{t,\ell}]\\
&\leq \sum_{t=1}^k \sum_{\ell=0}^{L-1} E_{Z]in [0,1]^k}\left(4n\log n\left(\log \alpha_{\ell+1}-\log\alpha_\ell)\right) \right)\\
&\leq \sum_{t=1}^k \left(4n\log \left(\log \ahi-\log\alo)\right) \right)\\
&\leq k \left(4n\log n \log \frac{\ahi}{\alo} \right)\\
&\leq nk\log n\log \frac{\ahi}{\alo}
\end{align*}

Now we prove the second statement of Theorem~\ref{thm:expect}.
Let $\ell^*$ denote the largest value such that $\alpha_{\ell^*}<\frac{1}{\log \dratio}$. Such an $\ell^*$ must exist because $\alpha_0=0$.
Then we have $\alpha_{\ell^*}<\frac{1}{\log \dratio}\leq\alpha_{\ell^*+1}$.
We use three upper bounds for three different cases of alpha intervals:
the first $\ell^*$ intervals, interval $[\alpha_{\ell^*},\alpha_{\ell^*+1}]$, and intervals $\ell^*+2$ to $L$.
Let $\#I$ denote the total number of discontinuities of $\seed_\alpha(\V,\vec{Z})$ along $[0,\ahi]$.

\begin{align*}
E_{Z\in [0,1]^k}[\#I]&\leq E_{Z\in [0,1]^k} \left[\sum_{t=1}^k\sum_{\ell=1}^{L-1} (\#I_{t,\ell})\right]\\
&\leq \sum_{t=1}^k \sum_{\ell=0}^{L-1} E_{Z\in [0,1]^k}[\#I_{t,\ell}]\\
&\leq \sum_{t=1}^k\left( \sum_{\ell=0}^{\ell^*-1} E_{Z\in [0,1]^k}[\#I_{t,\ell}] + E_{Z\in [0,1]^k}[\#I_{t,\ell^*}]
+ \sum_{\ell=\ell^*+1}^{L-1} E_{Z\in [0,1]^k}[\#I_{t,\ell}] \right)\\
&\leq \sum_{t=1}^k\left( \sum_{\ell=0}^{\ell^*-1} \left(2n\log D (\alpha_{\ell+1}-\alpha_\ell)\right) + (n-t-1)
+ \sum_{\ell=\ell^*+1}^{L-1} \left(4n\log n(\log \alpha_{\ell+1}-\log\alpha_\ell)\right) \right)\\
&\leq \sum_{t=1}^k\left( 2n\log D\cdot \alpha_{\ell^*} + n + 4n\log n\left(\log \alpha_h-\log\alpha_{\ell^*}\right) \right)\\
&\leq \sum_{t=1}^k\left( 2n\log D\cdot \frac{1}{\log D} + n + 4n\log n\left(\log \alpha_h-\log\left(\frac{1}{\log D}\right)\right) \right)\\
&\leq k \left( 2n+n + 4n\log n ( \log \alpha_h + \log\log D)\right) \\
&\leq 4nk\log n(1 + \log \alpha_h + \log\log D).
\end{align*}

Now we will prove the claim.
Recall that for $\alpha$-sampling, each point $x$ receives an interval along $[0,1]$ of size $\frac{d_x^\alpha}{D_n(\alpha)}$, so
the number of breakpoints in round $t$
along $[\alpha_{\ell},\alpha_{\ell+1}]$ corresponds to the
number of times $z_t$ switches intervals as we increase $\alpha$ from $\alpha_{\ell}$ to
$\alpha_{\ell+1}$.
By Lemma \ref{lem:monotoneBins}, the endpoints of these intervals
are monotone increasing, continuous, and non-crossing, so the number of breakpoints is exactly
$x-y$, where $x$ and $y$ are the minimum indices s.t.\ $\frac{D_x(\alpha_{\ell})}{D_n(\alpha_{\ell})}>z_t$
and $\frac{D_y(\alpha_{\ell+1})}{D_n(\alpha_{\ell+1})}>z_t$, respectively
(see Figure \ref{fig:case2}).
We want to compute the expected value of $x-y$ for $z_t$ uniform in $[0,1]$ (here, $x$ and $y$ are functions of $z_t$).

We take the approach of analyzing each interval individually.
One method for bounding $E_{z_t\in[0,1]}[x-y]$ is to compute the maximum possible number of
breakpoints for each interval $I_{v_j}$, for all $1\leq j\leq n$.
Specifically, if we let $i$ denote the minimum index such that $\frac{D_j(\alpha_{\ell})}{D_n(\alpha_{\ell})}<\frac{D_i(\alpha_{\ell+1})}{D_n(\alpha_{\ell+1})}$, then
\begin{align*}
E[\#I_{t,\ell}]&\leq \sum_{j=1}^n P\left(\frac{D_j(\alpha_{\ell})}{D_n(\alpha_{\ell})}<z_t<\frac{D_{j+1}(\alpha_{\ell})}{D_n(\alpha_{\ell})}\right)\cdot (j-i+1)\\
&\leq \sum_{j=1}^n \frac{d_j^{\alpha_\ell}}{D_n(\alpha_\ell)}\cdot (j-i+1). \label{eq:j-i}
\end{align*}
In this expression, we are using the worst case number of breakpoints within each bucket, $j-i+1$.

We cannot quite use this expression to obtain our bound; for example, when $\alpha_{\ell+1}-\alpha_\ell$ is extremely small, $j-i+1=1$,
so this expression will give us $E[\#I_{t,\ell}]\leq 1$ over $[\alpha_\ell,\alpha_{\ell+1}]$, but we need to show the expected number of breakpoints
is proportional to $\epsilon$ to prove the claim.
To tighten up this analysis, we will show that for each bucket, the probability (over $z_t$) of achieving the maximum number of breakpoints is low.

Assuming that $z_t$ lands in a bucket $I_{v_j}$, we further break into cases as follows.
Let $i$ denote the minimum index such that $\frac{D_i(\alpha_{\ell+1})}{D_n(\alpha_{\ell+1})}>\frac{D_j(\alpha_{\ell})}{D_n(\alpha_{\ell})}$.
Note that $i$ is a function of $j,\alpha_\ell$, and $\alpha_{\ell+1}$, but it is independent of $z_t$.
If $z_t$ is less than $\frac{D_i(\alpha_{\ell+1})}{D_n(\alpha_{\ell+1})}$,
then we have the maximum number of breakpoints possible, since the algorithm chooses center $v_{i-1}$ when $\alpha=\alpha_{\ell+1}$ and it chooses
center $v_j$ when $\alpha=\alpha_\ell$. The number of breakpoints is therefore $j-i+1$, by Lemma \ref{lem:monotoneBins}.
We denote this event by $E_{t,j}$, i.e., $E_{t,j}$ is the event that in round $t$, $z_t$ lands in $I_{v_j}$ and is less than
$\frac{D_i(\alpha_{\ell+1})}{D_n(\alpha_{\ell+1})}$.
If $z_t$ is instead greater than $\frac{D_i(\alpha_{\ell+1})}{D_n(\alpha_{\ell+1})}$,
then the algorithm chooses center $v_i$ when $\alpha=\alpha_{\ell+1}$,
so the number of breakpoints is $\leq j-i$.
We denote this event by $E_{t,j}'$ (see Figure \ref{fig:case2}).
Note that $E_{t,j}$ and $E_{t,j}'$ are disjoint and $E_{t,j}\cup E_{t,j}'$ is the event that $z_t\in I_{v_j}$.

Within an interval $I_{v_j}$,
the expected number of breakpoints is
$$P(E_{t,j})(j-i+1)+P(E_{t,j}')(j-i)=P(E_{t,j}\cup E_{t,j})(j-i)+P(E_{t,j}').$$
We will show that $j-i$ and $P(E_{t,j})$ can both be bounded using Lemma~\ref{lem:deriv}, which finishes off the claim.

First we upper bound $P(E_{t,j})$.
Recall this is the probability that $z_t$ is in between $\frac{D_j(\alpha_{\ell})}{D_n(\alpha_\ell)}$ and $\frac{D_i(\alpha_{\ell+1})}{D_n(\alpha_{\ell+1})}$,
which is
$$\frac{D_i(\alpha_{\ell+1})}{D_n(\alpha_{\ell+1})}-\frac{D_j(\alpha_{\ell})}{D_n(\alpha_\ell)}\leq\frac{D_j(\alpha_{\ell+1})}{D_n(\alpha_{\ell+1})}-\frac{D_j(\alpha_{\ell})}{D_n(\alpha_\ell)}=\frac{D_j(\alpha)}{D_n(\alpha)}\bigg\rvert_{\alpha_\ell}^{\alpha_{\ell+1}}.$$
Recall that Lemma~\ref{lem:deriv} states that
$\left|\frac{\partial}{\partial\alpha} \left(\frac{D_i(\alpha)}{D_n(\alpha)}\right)\right|\leq \min\left(\frac{2}{\alpha}\log n , \log \left(\frac{d_1}{d_n}\right)\right)$.
If we use the second part of the $\min$ expression (and use $\frac{d_1}{d_n}\leq \dratio$)
we get $\frac{D_j(\alpha)}{D_n(\alpha)}\mid_{\alpha_\ell}^{\alpha_{\ell+1}}\leq \log D(\alpha_{\ell+1}-\alpha_\ell)$.
If we use the first part of the $\min$ expression, we get
$$\frac{D_j(\alpha)}{D_n(\alpha)}\bigg\rvert_{\alpha_\ell}^{\alpha_{\ell+1}}\leq 2\log n\int_{\alpha_\ell}^{\alpha_{\ell+1}} \frac{1}{\alpha}\cdot d\alpha
\leq 2\log n\left(\log\alpha\right)\mid_{\alpha_\ell}^{\alpha_{\ell+1}}=2\log n(\log \alpha_{\ell+1}-\log \alpha_\ell).$$

Now we upper bound $j-i$.
Recall that $j-i$ represents the number of intervals between
$\frac{D_i(\alpha_\ell)}{D_n(\alpha_{\ell})}$ and $\frac{D_j(\alpha_\ell)}{D_n(\alpha_\ell)}$ (see Figure \ref{fig:case2}).
Note that the smallest interval in this range is $\frac{d_j^{\alpha_\ell}}{D_n(\alpha_\ell)}$, and
$$\frac{D_j(\alpha_\ell)}{D_n(\alpha_\ell)}-\frac{D_i(\alpha_\ell)}{D_n(\alpha_{\ell})}\leq \frac{D_i(\alpha_{\ell+1})}{D_n(\alpha_{\ell+1})}-\frac{D_i(\alpha_\ell)}{D_n(\alpha_{\ell})}.$$
Therefore, the expected number of breakpoints is at most
$\frac{D_n(\alpha_\ell)}{d_j^{\alpha_\ell}}\cdot\left(\frac{D_i(\alpha_{\ell+1})}{D_n(\alpha_{\ell+1})}-\frac{D_i(\alpha_\ell)}{D_n(\alpha_{\ell})}\right)$, and we can bound the second half of this fraction by again using Lemma \ref{lem:deriv}.
To finish off the proof, we have

\begin{align*}
E[\#I_{t,\ell}]&\leq\sum_j \left(P(E_{t,j}')\cdot (j-i)+P(E_{t,j})\cdot (j-i+1)\right)\\
&\leq \sum_j \left( P(E_{t,j}')\cdot (j-i)+ P(E_{t,j})\cdot(j-i)+P(E_{t,j})\right)\\
&\leq \sum_j \left( P(E_{t,j}'\cup E_{t,j})\cdot (j-i)+ P(E_{t,j})\right)\\
&\leq \sum_j \left( P(z_t\in I_{v_j})\cdot (j-i)\right)+ \sum_j P(E_{t,j})\\
&\leq \sum_j\left(\frac{d_j^{\alpha_\ell}}{D_n(\alpha_\ell)}\right)
\left( \frac{D_n(\alpha_\ell)}{d_j^{\alpha_\ell}}\cdot \frac{D_j(\alpha)}{D_n(\alpha)}\bigg\rvert_{\alpha_\ell}^{\alpha_{\ell+1}}\right)
+\sum_j \left( \frac{D_j(\alpha)}{D_n(\alpha)}\bigg\rvert_{\alpha_\ell}^{\alpha_{\ell+1}}\right)\\
&\leq 2n \left( \frac{D_j(\alpha)}{D_n(\alpha)}\bigg\rvert_{\alpha_\ell}^{\alpha_{\ell+1}} \right)\\
&\leq 2n \min\left( 2\log n(\log \alpha_{\ell+1}-\log \alpha_\ell) , \log D(\alpha_{\ell+1}-\alpha_\ell) \right)
\end{align*}

This accounts for two of the three upper bounds in our claim.
To complete the proof, we note that $E[\#I_{t,\ell}]\leq n-t-1$ simply because there are only $n-t$ centers available to be chosen
in round $t$ of the algorithm (and therefore, $n-t-1$ breakpoints).
\end{proof}

In fact, we can also show that the worst-case number of discontinuities is exponential.

\begin{lemma}
Given $n$, there exists a clustering instance $\V$ of size $n$ and a vector $\vec{Z}$ such that the number of discontinuities of
$\seed_\alpha(\V,\vec{Z})$ as a function of $\alpha$ over $[0,2]$ is $2^{\Omega(n)}$.
\end{lemma}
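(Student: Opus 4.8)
The plan is to prove this by giving an explicit worst-case construction, showing that the combinatorial upper bound of Theorem~\ref{thm:combinatorial} is essentially tight in the exponent even over the bounded window $[0,2]$. I would take $k = \lfloor n/2 \rfloor$ and build $\V$ in an abstract metric space, assigning pairwise distances freely subject only to the triangle inequality (exactly as in the proof of Theorem~\ref{thm:alpha-just}), together with a single fixed vector $\vec{Z}$ all of whose coordinates equal $\tfrac12$. The guiding interpretation is that, when $z_t = \tfrac12$, the center chosen in round $t$ is the $d^\alpha$-weighted median of the points' distances to the current center set: it is the point $v_i$ at which $D_i(\alpha)/D_n(\alpha)$ first exceeds $\tfrac12$. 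A breakpoint in round $t$ occurs precisely when this weighted median jumps from one point to another as $\alpha$ increases and the reweighting $d_j\mapsto d_j^\alpha$ shifts mass toward the farther points. The object to analyze is the execution tree of $\seed_\alpha(\V,\vec{Z})$ restricted to $\alpha\in[0,2]$: its number of leaves (minus one) equals the number of discontinuities, so it suffices to force this tree to be a near-complete binary tree of depth $\Theta(n)$.

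The heart of the argument is an inductive claim that the number of $\alpha$-subintervals on which the first $t$ chosen centers are constant at least doubles with each round, so that after $k$ rounds there are $2^{\Omega(k)} = 2^{\Omega(n)}$ intervals, and hence that many discontinuities on $[0,2]$. To make every interval split, I would design the distances so that the instance is \emph{self-similar} across scales of $\alpha$: within any current interval $[\alpha_\ell,\alpha_{\ell+1}]$ with fixed center set $C$, there is an interior value $\alpha^\ast\in(\alpha_\ell,\alpha_{\ell+1})$ at which the weighted median crosses from one designated point to another, and crucially the two possible choices reshape the distance profile differently (by zeroing out the chosen point, and by a center capturing a controlled auxiliary subset) so that each of the two resulting subintervals again looks like a rescaled copy of the parent configuration. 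Chaining this for $k$ rounds produces the doubling. Because the successive crossing points must be packed into the fixed window $[0,2]$, the construction will use distances spanning an enormous range (very large $\dratio$); this is consistent with Lemma~\ref{lem:deriv}, which caps the slope of each boundary $\alpha\mapsto D_i(\alpha)/D_n(\alpha)$ by $\log\dratio$, so that squeezing exponentially many crossings into $[0,2]$ forces $\dratio$ to be huge, and with the $\log\log\dratio$ dependence of the matching expected upper bound in Theorem~\ref{thm:expect}.

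The step I expect to be the main obstacle is the inductive splitting itself. Since a \emph{single} fixed level $z_t=\tfrac12$ must be crossed simultaneously inside every one of the (eventually exponentially many) current intervals, the reshaping induced by earlier choices has to preserve, in every branch, the precise mass balance needed for a fresh median crossing strictly interior to that branch's $\alpha$-window. Establishing this requires tracking how choosing center $A$ versus center $B$ in round $t$ alters each $d_j$ (and hence each $D_i(\alpha)$) in round $t+1$, and verifying that both resulting subintervals retain a two-group configuration whose crossover lies inside them. I would control this by placing the gadgets at well-separated scales so that choosing a center essentially only zeroes out that point while leaving all other distances unchanged, together with a few auxiliary points per gadget whose distances are tuned to position the crossover near the midpoint of the current window.

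A secondary, more routine obstacle is checking that distances varying over such a large range still satisfy the triangle inequality throughout; as in Theorem~\ref{thm:alpha-just}, this becomes straightforward once all inter-gadget distances are taken comparably large, leaving only the intra-gadget distances to be tuned. Finally, I would confirm that the doubling survives the $\alpha=\infty$ endpoint and that the counted breakpoints all fall in the open interval $(0,2)$, so that the resulting count of $2^{\Omega(n)}$ discontinuities holds over $[0,2]$ as claimed.
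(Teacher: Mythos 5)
Your plan is essentially the paper's proof: the paper likewise fixes $\vec{Z}=(1/2,\dots,1/2)$ (after pinning the first center), builds one binary gadget $\{a_i,b_i\}$ per round plus auxiliary points whose distance to the current centers depends on which of the two was chosen, and shows by induction on a decreasing sequence of offsets $o_1>o_2>\cdots$ that each new median-crossing root $\alpha_{\vec{x}}$ of $E_{\vec{x}}=\tfrac12$ lands strictly inside its parent interval, doubling the interval count every round. The only real divergence is cosmetic: the paper keeps all distances near a common value $100$ and encodes the history-dependence in tiny offsets $(100-\vec{x}_j o_j)^\alpha$ rather than in widely separated scales, so your side-claim that a huge $\dratio$ is forced by the upper bound of Theorem~\ref{thm:expect} is not needed (that bound is an expectation over random $\vec{Z}$ and places no constraint on the adversarial $\vec{Z}$ used here).
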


\begin{proof}
We construct
$\V=(V,d,k)$ and $\vec{Z}=\{z_1,\dots,z_k\}$ such that $\seed_\alpha(\V,\vec{Z})$ has $2^{k/3}$ different
intervals in $\alpha$ which give different outputs.

Here is the outline of the construction.
At the start, we set $z_1$ so that one point, $v$, will always be the first center chosen.
Then we add points $a_1,b_1,\dots,a_{k},b_{k}$ such that in round $i$, either $a_i$ or $b_i$ will be chosen as centers.
We carefully set the distances so that for each combinations of centers, there is an $\alpha$ interval which achieves this combination of centers.
Therefore, the total number of $\alpha$ intervals such that the output of the sampling step is fixed, is  $2^{k-1}$.
Our construction also uses points $a_1',b_1',\dots a_k',b_k'$ and $v_1,\dots,v_k$ which are never chosen as centers, but will be crucial in the analysis.

Next, we describe the distances between the points in our clustering instance.
Almost all distances will be set to 100, except for a few distances: for all $i$,
$d(a_i,a_i')=d(b_i,b_i')=\epsilon$, $d(b_i,v_i)=100-o_i$,
$d(a_i,b_i)=2o_i$, $d(v,a_1)=d(v,b_1)=99$, and $d(a_{i-1},a_i)=d(a_{i-1},b_i)=d(b_{i-1},a_i)=d(b_{i-1},b_i)=100-o_i$,
 for $0\leq \epsilon,o_1,\dots,o_{k}\leq 1$ to be specified later.
At the end, we will perturb all other distances by a slight amount $(<\epsilon)$ away from 100, to break ties.

Now we set up notation to be used in the remainder of the proof. We set $z_i=\frac{1}{2}$ for all $i$.
For $1\leq i\leq k$, given $\vec{x}\in\{0,1\}^{i-1}$, let $E_{\vec{x}}$ denote the equation
in round $i$ which determines whether $a_i$ or $b_i$ is chosen as the next center,
in the case where for all $1\leq j<i$, $a_j\in C$ if $\vec{x}_j=0$, or else $b_j\in C$
(and let $E'$ denote the single equation in round 2).
Specifically, $E_{\vec{x}}$ is the following expression

$$ \frac{100^\alpha\left(\frac{n-(i-1)}{2}\right)}{100^\alpha(n-2(i-1))+(100-o_i)^\alpha+\sum_{j=1}^{i-1}(100-\vec{x}_jo_j)^\alpha}.$$

Let $\alpha_{\vec{x}}$ denote the solution to equation $E_{\vec{x}}=\frac{1}{2}$ in $[1,3]$, if it exists.
In the rest of the proof, we must show there exist parameters $\epsilon,o_0,\dots,o_{k}$ which admit
an ordering to the values $\alpha_{\vec{x}}$ which ensures that each $\alpha_{\vec{x}}$ falls in the correct
range to split up each interval, thus achieving $2^{k-1}$ intervals.
The ordering of the $\alpha_{\vec{x}}$'s can be specified by two conditions:
\emph{(1)} $\alpha_{[\vec{x}~0]}<\alpha_{[\vec{x}]}<\alpha_{[\vec{x}~1]}$
and \emph{(2)} $\alpha_{[\vec{x}~0~\vec{y}]}<\alpha_{[\vec{x}~1~\vec{z}]}$
for all $\vec{x},\vec{y},\vec{z}\in\bigcup_{i< k}\{0,1\}^i$ and $|{\bf y}|=|{\bf z}|$.
To prove the $\alpha_{\vec{x}}$'s follow this ordering, we use an inductive argument.
We must show the following claim: there exist $0<o_1,\dots,o_k<1$ such that if we solve $E_{\vec{x}}=\frac{1}{2}$ for
$\alpha_{\vec{x}}$ for all $\vec{x}\in\cup_{i<k}\{0,1\}^i$, then the $\alpha$'s satisfy
$\alpha_{[\vec{x}~0]}<\alpha_{[\vec{x}]}<\alpha_{[\vec{x}~1]}$ and for all $i<k$, $\alpha_{[\vec{x}~1]}<\alpha_{[\vec{y}~0]}$
for $\vec{x},\vec{y}\in\{0,1\}^i$ and $x_1\dots x_i<y_1\dots y_i$.

Given $\vec{x}\in\{0,1\}^i$, for $1\leq i\leq k-1$, let
$p(\vec{x}),n(\vec{x})\in\{0,1\}^i$ denote the vectors which sit on either
side of $\alpha_{\vec{x}}$ in the desired ordering, i.e., $\alpha_{\vec{x}}$ is the only $\alpha_{\vec{y}}$
in the range $(\alpha_{p(\vec{x})},\alpha_{n(\vec{x})})$ such that $|\vec{y}|=i$.
If $\vec{x}=[1\dots 1]$, then set $\alpha_{n(\vec{x})}=3$, and if $\vec{x}=[0\dots 0]$, then set $\alpha_{p(\vec{x})}=1$.

Given $1\leq i\leq k-2$, assume there exist $0<o_1,\dots,o_i<1$ such that the statement is true.
Now we will show the statement holds for $i+1$.
Given $\vec{x}\in\{0,1\}^i$, by assumption, we have that the solution to $E_{\vec{x}}=\frac{1}{2}$ is equal to $\alpha_{\vec{x}}$.
First we consider $E_{[\vec{x}~0]}=\frac{1}{2}$. Note there are two differences in the equations $E_{\vec{x}}$ and $E_{[\vec{x}~0]}$.
First, the number of $100^\alpha$ terms in the numerator decreases by 1, and the number of terms in the denominator decreases by 2.
WLOG, at the end we set a constant $c=\frac{n}{k}$ large enough so that this effect on the root of the equation is negligible for all $n$.
Next, the offset in the denominator changes from $(100-o_i)^\alpha$ to $(100-o_{i+1})^\alpha$.
Therefore, if $0<o_{i+1}<o_i$, then $\alpha_{[\vec{x}~0]}<\alpha_{\vec{x}}$. Furthermore, there exists an upper bound
$0<z_{i+1}<o_i$ such that for all $0<o_{i+1}<z_{i+1}$, we have $\alpha_{[\vec{x}~0]}\in (\alpha_{p(\vec{x})},\alpha_{\vec{x}})$.
Next we consider $E_{[\vec{x}~1]}=\frac{1}{2}$.
As before, the number of $100^\alpha$ terms decrease, which is negligible.
Note the only other change is that an $100^\alpha$ term is replaced with $(100-o_{i+1})^\alpha$.
Therefore, as long as $0<o_{i+1}<o_i$, then $\alpha_{\vec{x}}<\alpha_{[\vec{x}~1]}$, and similar to the previous case,
there exists an upper bound $0<z_{i+1}'<o_i$ such that for all $0<o_{i+1}<z_{i+1}'$, we have
$\alpha_{[\vec{x}~1]}\in (\alpha_{\vec{x}},\alpha_{n(\vec{x})})$.
We conclude that there exists $0<o_{i+1}<\min(z_i,z_i')<o_i$ such that
$\alpha_{p(\vec{x})}<\alpha_{[\vec{x}~0]}<\alpha_{\vec{x}}<\alpha_{[\vec{x}~1]}<\alpha_{n(\vec{x})}$, thus finishing the inductive proof.

Now we have shown that there are $2^{k'}$ nonoverlapping $\alpha$ intervals, such that within every interval,
$d^\alpha$-sampling chooses a unique set of centers, for $k'=k-1$.
To finish our structural claim, we will show that after $\beta$-Lloyd's method, the cost function
$\clus_{\alpha,\beta}(\V,\vec{Z})$ alternates $2^{k'}$ times above and below a value $r$ as $\alpha$ increases.
We add two points, $a$ and $b$, so that  $d(v,a)=d(v,b)=100$, $d(a_k,a)=d(b_k,b)=100-\epsilon$, and the distances from all other
points to $a$ and $b$ are length $100+\epsilon$. Then we add many points in the same location as $v$, $a_i$, and $b_i$, so that any
set $c$ returned by $d^\alpha$-sampling is a local minima for $\beta$-Lloyd's method, for all $\beta$.
Furthermore, these changes do not affect the previous analysis, as long as we appropriately balance the terms in the numerator and denominator of each equation $E_{\vec{x}}$ (and for small enough $\epsilon$).
Finally, we set $v$ and $a$ to have label 1 in the target clustering, and all points are labeled 2.
Therefore, as $d^{\alpha}$-sampling will alternate between $a_k\in C$ and $b_k\notin C$ as we increase $\alpha$,
$a$ and $v$ alternate being in the same or different clusters, so the function $\seed_\alpha(\V,\vec{Z})$ will alternate between different
outputs $2^{\Omega(n)}$ times as a function of $\alpha$.
\end{proof}

Now we give the details for Theorem \ref{thm:pdim_upper}.

\medskip
\noindent \textbf{Theorem~\ref{thm:pdim_upper} (restated).}
\emph{
Given $T\in\mathcal{N}$, a clustering instance $\V$, and a fixed set $C$ of initial centers,
the number of discontinuities of $\lloyd_\beta(\V,C,T)$ as a function of $\beta$ on instance $\V$ is $O(\min(n^{3T},n^{k+3}))$.
}

\begin{proof}
Given a clustering instance $\V$ and a vector $\vec{Z}$,
we bound the number of possible intervals created by the Lloyd's step, given a fixed set of initial centers.
Define $\texttt{lloyd}_\beta(\V,C)$ as the cost of the clustering outputted by the $\beta$-Lloyd iteration
algorithm on $\V$ using initial centers $C$.
Note that the Voronoi partitioning step is only dependent on $C$, in particular, it is independent of $\beta$.
Let $\{C_1,\dots,C_k\}$ denote the Voronoi partition of $V$ induced by $C$.
Given one of these clusters $C_i$, the next center is computed by $\min_{c\in C_i}\sum_{v\in C_i}d(c,v)^\beta$.
Given any $c_1,c_2\in C_i$, the decision for whether $c_1$ is a better center than $c_2$ is governed by
$\sum_{v\in C_i}d(c_1,v)^\beta<\sum_{v\in C_i}d(c_2,v)^\beta$.
Again by Theorem~\ref{thm:roots}, this equation has at most $2n+1$ roots.
Notice that this equation depends on the set $C$ of centers, the choice of a cluster $C_i$,
and the two points $c_1,c_2\in C_i$.
Then there are ${n\choose k}\cdot n\cdot {n\choose 2}$ total equations which fix the outcome of the Lloyd's method,
and there are ${n\choose k}\cdot n\cdot {n\choose 2}\cdot (2n+1)\leq n^{k+4}$ total intervals of $\beta$ such that the outcome of Lloyd's method
is fixed.

Next we give a different analysis which bounds the number of discontinuities by $n^{3T}$, where $T$ is the maximum number of Lloyd's iterations.
By the same analysis as the previous paragraph, if we only consider one round, then the total number of equations which govern the output
of a Lloyd's iteration is ${n\choose 2}$, since the set of centers $C$ is fixed. These equations have $2n+1$ roots, so the total number
of intervals in one round is $O(n^3)$. Therefore, over $T$ rounds, the number of intervals is $O(n^{3T})$.
\end{proof}

Now we give the details for the proofs of the generalized results,
Theorems~\ref{thm:init} and \ref{thm:gen_runtime}.

\medskip
\noindent \textbf{Theorem~\ref{thm:init} (restated).}
\emph{
Given an $\alpha$-parameterized family such that \emph{(1)} for all $1\leq i\leq k$ and $C\subseteq V$ such that $|C|\leq k$,
each $S_{i,C}(\alpha)$ is monotone increasing and continuous as a function of $\alpha$,
and \emph{(2)} for all $1\leq i\leq j\leq n$ and $\alpha\in (\alo,\ahi)$, $S_{i,C}(\alpha)\leq S_{j,C}(\alpha)$,
then the expected number of discontinuities of $\seed_\alpha(\V,\vec{Z},p)$ as a function of $\alpha$ is
$O\left(nkD_p(\ahi-\alo)\right)$.
}

\begin{proof}
Given $\V$ and $[0,\ahi]$, we
will show that $\E[\#I]\leq nk\log n\cdot\ahi$,
where $\#I$ denotes the total number of discontinuities of $\seed_\alpha(\V,\vec{Z})$
and the expectation
is over the randomness $Z\in[0,1]^k$ of the $d^\alpha$-sampling algorithm.
Consider round $t$ of a run of the algorithm.
Suppose at the beginning of round $t$, there are $L$ possible states of the algorithm, e.g., $L$ sets of $\alpha$
such that within a set, the choice of the first $t-1$ centers is fixed.
From the monotonicity property, we can write these sets as
$[\alpha_0,\alpha_1],\dots,[\alpha_{L-1},\alpha_L]$, where $0=\alpha_0<\cdots<\alpha_L=\ahi$.
Recall that
$D_{p}=\max_{i,C,v,\alpha}\left(\frac{\partial S_{i,C}(\alpha)}{\partial \alpha}\right)$.
Given one interval, $[\alpha_{\ell},\alpha_{\ell+1}]$,
we claim the expected number of new breakpoints $\#I_{t,\ell}$ by choosing a center in round $t$ is bounded by $nD_p(\alpha_{\ell+1}-\alpha_{\ell})$.
Note that $\#I_{t,\ell}+1$ is the number of possible choices for the next center in round $t$ using $\alpha$ in $[\alpha_{\ell},\alpha_{\ell+1}]$.

The claim gives an upper bound on the expected number of new breakpoints,
where the expectation is only over $z_t$ (the uniformly random draw from $[0,1]$ used by
the initialization algorithm in round $t$),
and the bound holds for any set of centers and points.
Assuming the claim, we can finish off the proof by using linearity of expectation as follows.

\setcounter{figure}{2}
\begin{figure}[ht]
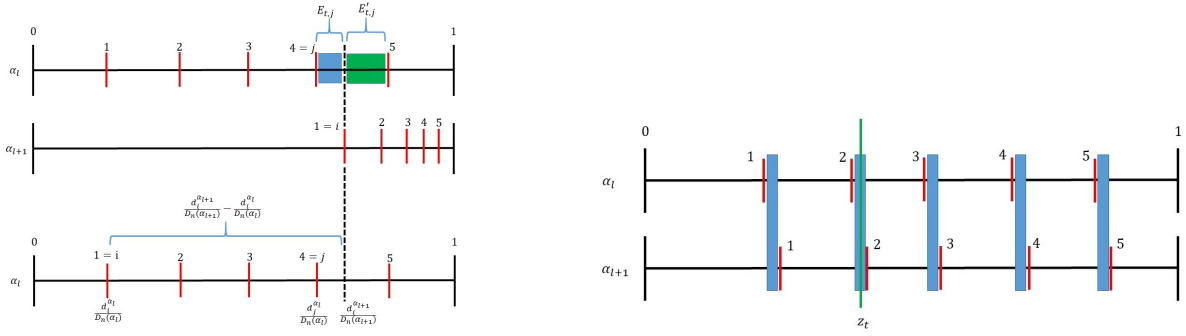

  \subfigure{\includegraphics[width=.5\linewidth]{case2.jpg}}
  \subfigure{\includegraphics[width=.5\linewidth]{epsilon-interval.jpg}}
  \caption{(Repeated) Definition of $E_{t,j}$ and $E_{t,j}'$, and details for bounding $j-i$ (left).
	Intuition for bounding $P(E_{t,j})$, where the blue regions represent $E_{t,j}$ (right).
	}
  \label{fig:case2_}
\end{figure}
\setcounter{figure}{7}

\begin{align*}
E_{Z\in [0,1]^k}[\#I]&\leq E_{Z\in [0,1]^k} \left[\sum_{t=1}^k\sum_{\ell=1}^L (\#I_{t,\ell})\right]\\
&\leq \sum_{t=1}^k \sum_{\ell=1}^L E_{Z\in [0,1]^k}[\#I_{t,\ell}]\\
&\leq \sum_{t=1}^k \sum_{\ell=1}^L nD_p(\alpha_{\ell+1}-\alpha_{\ell})\\
&\leq nk D_p\cdot\ahi\\
\end{align*}

Now we will prove the claim.
Recall that for $\alpha$-sampling, each point $v_i$ receives an interval along $[0,1]$ of size
proportional to $p_\alpha(v_i,C)$
$\frac{d_x^\alpha}{D_n(\alpha)}$, so
the number of breakpoints in round $t$
along $[\alpha_{\ell},\alpha_{\ell+1}]$ corresponds to the
number of times $z_t$ switches intervals as we increase $\alpha$ from $\alpha_{\ell}$ to
$\alpha_{\ell+1}$.
Note that $S_{i,C}(\alpha)$ corresponds to the division boundary between the interval of
$v_i$ and $v_{i+1}$.
By assumption, these divisions
are monotone increasing, so the number of breakpoints is exactly
$x-y$, where $x$ and $y$ are the minimum indices s.t.\ $\frac{S_{x,C}(\alpha_{\ell})}{S_{n,C}(\alpha_{\ell})}>z_t$
and $\frac{S_{y,C}(\alpha_{\ell+1})}{S_{n,C}(\alpha_{\ell+1})}>z_t$, respectively.
We want to compute the expected value of $x-y$ for $z_t$ uniform in $[0,1]$ (here, $x$ and $y$ are functions of $z_t$).

We take the approach of analyzing each interval individually.
One method for bounding $E_{z_t\in[0,1]}[x-y]$ is to compute the maximum possible number of
breakpoints for each interval $I_{v_j}$, for all $1\leq j\leq n$.
Specifically, if we let $i$ denote the minimum index such that
$\frac{S_{j,C}(\alpha_{\ell})}{S_{n,C}(\alpha_{\ell})}<\frac{S_{i,C}(\alpha_{\ell+1})}{S_{n,C}(\alpha_{\ell+1})}$, then
\begin{align*}
E[\#I_{t,\ell}]&\leq \sum_{j=1}^n P\left(\frac{S_{j,C}(\alpha_{\ell})}{S_{n,C}(\alpha_{\ell})}<z_t<\frac{S_{j+1,C}(\alpha_{\ell})}{S_{n,C}(\alpha_{\ell})}\right)\cdot (j-i+1)\\
&\leq \sum_{j=1}^n \frac{p_{\alpha_\ell}(v_i,C)}{S_{n,C}(\alpha_\ell)}\cdot (j-i+1).
\end{align*}
In this expression, we are using the worst case number of breakpoints within each bucket, $j-i+1$.

We cannot quite use this expression to obtain our bound; for example, when $\alpha_{\ell+1}-\alpha_\ell$ is extremely small, $j-i+1=1$,
so this expression will give us $E[\#I_{t,\ell}]\leq 1$ over $[\alpha_\ell,\alpha_{\ell+1}]$, which is not sufficient to prove the claim.
Therefore, we give a more refined analysis by further breaking into cases based on
whether $z_t$ is smaller or larger than $\frac{S_{i,C}(\alpha_{\ell+1})}{S_{n,C}(\alpha_{\ell+1})}$.
In case 1, when $z_t>\frac{S_{i,C}(\alpha_{\ell+1})}{S_{n,C}(\alpha_{\ell+1})}$, the number of breakpoints is $j-i$, and we will show that
$j-i\leq nD_p(\alpha_{\ell+1}-\alpha_\ell)$.
In case 2, when $z_t<\frac{S_{i,C}(\alpha_{\ell+1})}{S_{n,C}(\alpha_{\ell+1})}$, the number of breakpoints is $j-i+1$, but we will show the probability of this case is low.

For case 2, the probability that $z_t$ is in between $\frac{S_{j,C}(\alpha_{\ell})}{S_{n,C}(\alpha_\ell)}$ and $\frac{S_{i,C}(\alpha_{\ell+1})}{S_{n,C}(\alpha_{\ell+1})}$
is $\frac{S_{j,C}(\alpha_{\ell+1})}{S_{n,C}(\alpha_{\ell+1})}-\frac{S_{j,C}(\alpha_{\ell})}{S_{n,C}(\alpha_\ell)}
\leq\frac{S_{j,C}(\alpha_{\ell+1})}{S_{n,C}(\alpha_{\ell+1})}-\frac{S_{j,C}(\alpha_{\ell})}{S_{n,C}(\alpha_\ell)}$.
Therefore, we can bound this quantity by bounding the derivative $\left|\frac{\partial}{\partial\alpha} \left(\frac{S_{j,C}(\alpha)}{S_{n,C}(\alpha)}\right)\right|$,
which is at most $D_p$ by definition.

For case 1, recall that $j-i$ represents the number of intervals between
$\frac{S_{i,C}(\alpha_\ell)}{S_{n,C}(\alpha_{\ell})}$ and $\frac{S_{j,C}(\alpha_\ell)}{S_{n,C}(\alpha_\ell)}$.
Note that the smallest interval in this range is $\frac{p_{\alpha_\ell}(v_i,C)}{S_{n,C}(\alpha_\ell)}$, and
$\frac{S_{j,C}(\alpha_\ell)}{S_{n,C}(\alpha_\ell)}-\frac{S_{i,C}(\alpha_\ell)}{S_{n,C}(\alpha_{\ell})}\leq \frac{S_{i,C}(\alpha_{\ell+1})}{S_{n,C}(\alpha_{\ell+1})}-\frac{S_{i,C}(\alpha_\ell)}{S_{n,C}(\alpha_{\ell})}$.
Therefore, the expected number of breakpoints is at most
$\frac{S_{n,C}(\alpha_\ell)}{p_{\alpha_\ell}(v_j,C)}\cdot\left(\frac{S_{i,C}(\alpha_{\ell+1})}{S_{n,C}(\alpha_{\ell+1})}-\frac{S_{i,C}(\alpha_\ell)}{S_{n,C}(\alpha_{\ell})}\right)$, and we can bound the second half of this fraction by again using the derivative of $S_{i,C}(\alpha)$.

Putting case 1 and case 2 together, we have


\begin{align*}
E[\#I_{t,\ell}]&\leq\sum_j P\left(\frac{S_{i,C}(\alpha_{\ell+1})}{S_{n,C}(\alpha_{\ell+1})}<z_t<\frac{S_{j+1,C}(\alpha_{\ell})}{S_{n,C}(\alpha_{\ell})}\right)\cdot (j-i)
+\sum_j P\left(\frac{S_{j,C}(\alpha_{\ell})}{S_{n,C}(\alpha_{\ell})}<z_t<\frac{S_{i,C}(\alpha_{\ell+1})}{S_{n,C}(\alpha_{\ell+1})}\right)\cdot (j-i+1)\\
&\leq\sum_j P\left(\frac{S_{j,C}(\alpha_{\ell})}{S_{n,C}(\alpha_{\ell})}<z_t<\frac{S_{j+1,C}(\alpha_{\ell})}{S_{n,C}(\alpha_{\ell})}\right)\cdot (j-i)
+\sum_j P\left(\frac{S_{j,C}(\alpha_{\ell})}{S_{n,C}(\alpha_{\ell})}<z_t<\frac{S_{i,C}(\alpha_{\ell+1})}{S_{n,C}(\alpha_{\ell+1})}\right)\\
&\leq\sum_j \frac{p_{\alpha_\ell}(v_j,C)}{S_{n,C}(\alpha_\ell)}\cdot \frac{S_{n,C}(\alpha_\ell)}{p_{\alpha_\ell}(v_j,C)}\cdot\left(\frac{S_{j,C}(\alpha_{\ell+1})}{S_{n,C}(\alpha_{\ell+1})}-\frac{S_{i,C}(\alpha_\ell)}{S_{n,C}(\alpha_{\ell})}\right)
+\sum_j P\left(\frac{S_{j,C}(\alpha_{\ell})}{S_{n,C}(\alpha_{\ell})}<z_t<\frac{S_{j,C}(\alpha_{\ell+1})}{S_{n,C}(\alpha_{\ell+1})}\right)\\
&\leq\sum_j \left(\frac{S_{i,C}(\alpha_{\ell+1})}{S_{n,C}(\alpha_{\ell+1})}-\frac{S_{i,C}(\alpha_\ell)}{S_{n,C}(\alpha_{\ell})}\right)
+\sum_j \left(\frac{S_{j,C}(\alpha_{\ell+1})}{S_{n,C}(\alpha_{\ell+1})}-\frac{S_{j,C}(\alpha_{\ell})}{S_{n,C}(\alpha_\ell)}\right)\\
&\leq\sum_j D_p (\alpha_{\ell+1}-\alpha_\ell) + \sum_j D_p(\alpha_{\ell+1}-\alpha_\ell)\\
&\leq 2nD_p(\alpha_{\ell+1}-\alpha_\ell)
\end{align*}

This concludes the proof.
\end{proof}

\medskip
\noindent \textbf{Theorem~\ref{thm:gen_runtime} (restated).}
\emph{
Given parameters $0\leq\alo<\ahi$, $\epsilon>0$, a sample $\mathcal{S}$ of size
$$m = O\left(\left(\frac{H}{\epsilon}\right)^2\log \left(\log\frac{\ahi n D_p}{\delta}\right)\right)$$  from $\left(\mathcal{D} \times [0,1]^k\right)^m$,
and an $\alpha$-parameterized family satisfying properties \emph{(1)} and \emph{(2)} from
Theorem~\ref{thm:init}, run Algorithm~\ref{alg:fast} on
each sample and collect all break-points (i.e., boundaries of the intervals
$A_i$). With probability at least $1-\delta$, the break-point $\bar\alpha$ with
lowest empirical cost satisfies
$|\clus_{\bar\alpha,\beta}(\mathcal{S})-\min_{0\leq\alpha\leq\ahi}\clus_{\alpha,\beta}(\mathcal{S})|<\epsilon$.
The total running time to find the best break point is
$O\left(m n^2 k^2 \ahi D_p \log \left( \frac{nH }{\epsilon}\right)\log n\right)$.
}

The proof is almost identical to the proof of Theorem~\ref{thm:runtime}.

\begin{proof}
First we argue that one of the breakpoints output by Algorithm~\ref{alg:fast}
on the sample is approximately optimal.
Formally, denote $\bar\alpha$ as the breakpoint returned by the algorithm with the lowest empirical cost over the sample,
and denote $\alpha^*$ as the value with the minimum true cost over the distribution.
We define $\hat\alpha$ as the empirically optimal value over the sample.
We also claim that for all breakpoints $\alpha$, there exists a breakpoint $\hat\alpha$ outputted by Algorithm~\ref{alg:fast}
such that $|\alpha-\hat\alpha|<\frac{\epsilon}{5n^2kL\log n}$.
We will prove this claim at the end of the proof.
Assuming the claim is correct, we denote $\alpha'$ as a breakpoint outputted by the algorithm such that
$|\hat\alpha-\alpha'|<\frac{\epsilon}{5n^2kL\log n}$.

For the rest of the proof, denote
$\underset{V \sim \mathcal{D}}{\E}\left[\clus_{\alpha,\beta}\left(V\right)\right]=\text{true}(\alpha)$
and $\frac{1}{m} \sum_{i = 1}^m \clus_{\alpha,\beta}\left(V^{(i)},\vec{Z}^{(i)}\right)=\text{sample}(\alpha)$
since beta, the distribution, and the sample are all fixed.

By construction, we have $\text{sample}(\hat\alpha)\leq \text{sample}(\alpha^*)$
and $\text{sample}(\bar\alpha)\leq \text{sample}(\alpha')$.
By Theorem~\ref{thm:rademacher}, with probability $>1-\delta$, for all $\alpha$ (in particular, for $\bar\alpha,~\hat\alpha,~\alpha^*,$ and $\alpha'$), we have
$\left|\text{sample}(\alpha)-\text{true}(\alpha)\right|<\epsilon/5$.
Finally, by Lemma~\ref{lem:net}, we have
$$|\hat\alpha-\alpha'|<\frac{\epsilon}{5n^2kL\log n}~\implies
~\left|\text{true}(\hat\alpha)-\text{true}(\alpha')\right|<\epsilon/5.$$

Using these five inequalities for $\alpha',~\hat\alpha,~\bar\alpha,$ and $\alpha^*$, we can show the desired outcome as follows.
\begin{align*}
\text{true}(\bar\alpha)-\text{true}(\alpha^*)
&\leq\left(\text{true}(\bar\alpha)-\text{sample}(\bar\alpha)\right)+\text{sample}(\bar\alpha)
-\left(\text{true}(\alpha^*)-\text{sample}(\alpha^*)\right)-\text{sample}(\alpha^*)\\
&\leq\epsilon/5+\text{sample}(\alpha')+\epsilon/5-\text{sample}(\hat\alpha)\\
&\leq \left(\text{sample}(\alpha')-\text{true}(\alpha')\right)+\left(\text{true}(\alpha')-\text{true}(\hat\alpha)\right)
+\left(\text{true}(\hat\alpha)-\text{sample}(\hat\alpha)\right)+\frac{2\epsilon}{5}\\
&\leq\epsilon.
\end{align*}

Now we will prove the claim that for all breakpoints $\alpha$, there exists a breakpoint $\hat\alpha$ outputted by Algorithm~\ref{alg:fast}
such that $|\alpha-\hat\alpha|<\frac{\epsilon}{5n^2kL\log n}$. Denote $\epsilon'=\frac{\epsilon}{5n^2kL\log n}$.
We give an inductive proof.
Recall that the algorithm may only find the values of breakpoints up to additive error $\epsilon'$, since the true breakpoints may be
irrational and/or transcendental.
Let $\hat T_t$ denote the execution tree of the algorithm after round $t$,
and let $T_t$ denote the \emph{true} execution tree on the sample.
That is, $T_t$ is the execution tree as defined earlier this section,
$\hat T_t$ is the execution tree with the algorithm's $\epsilon'$ imprecision on the values of alpha.
Note that if a node in $T_t$ represents an alpha-interval of size smaller than $\epsilon'$, it is possible that $\hat T_t$
does not contain the node. Furthermore, $\hat T_t$ might contain spurious nodes with alpha-intervals of size smaller than $\epsilon'$.

Our inductive hypothesis has two parts.
The first part is that for each breakpoint $\alpha$ in $T_t$, there exists
a breakpoint $h(\alpha)$ in $\hat T_t$ such that $|\alpha-h(\alpha)|<\epsilon'$.
For the second part of our inductive hypothesis, we define
$B_t=\bigcup_{\alpha\text{ breakpoint}}\left([\alpha,h(\alpha)]\cup[h(\alpha),\alpha]\right)$,
the set of ``bad'' intervals. Note that for each $\alpha$, one of $[\alpha,h(\alpha)]$ and $[h(\alpha),\alpha]$ is empty.
Then define $G_t=[\alo,\ahi]\setminus B_t$, the set of ``good'' intervals.
The second part of our inductive hypothesis is that the set of centers for $\alpha$ in $T_t$ is the same as in $\hat T_t$, as long as $\alpha\in G_t$.
That is, if we look at the leaf in $T_t$ and the leaf in $\hat T_t$ whose alpha-intervals contain $\alpha$,
the set of centers for both leaves are identical.
Now we will prove the inductive hypothesis is true for round $t+1$, assuming it holds for round $t$.
Given $T_t$ and $\hat T_t$, consider a breakpoint $\alpha$ from $T_{t+1}$ introduced in round $t+1$.

Case 1: $\alpha\in G_t$.
Then the algorithm will recognize there exists a breakpoint, and use binary search to output a value $h(\alpha)$ such that $|\alpha-h(\alpha)|<\epsilon'$.
The interval $[\alpha,h(\alpha)]\cup[h(\alpha),\alpha]$ is added to $B_{t+1}$, but the good intervals to the left and right of this interval still have the correct centers.

Case 2: $\alpha\in B_t$. Then there exists an interval $[\alpha',h(\alpha')]\cup [h(\alpha'),\alpha]$ containing $\alpha$.
By assumption, this interval is size $<\epsilon'$, therefore, we set $h(\alpha')=h(\alpha)$, so there is a breakpoint within $\epsilon'$ of $\alpha$.

Therefore, for each breakpoint $\alpha$ in $T_{t+1}$, there exists a breakpoint $\hat \alpha$ in $\hat T_{t+1}$ such that $|\alpha-\hat\alpha|<\epsilon'$.
Furthermore, for all $\alpha\in G_{t+1}$, the set of centers for $\alpha$ in $T_{t+1}$ is the same as in $\hat T_{t+1}$.
This concludes the inductive proof.

Now we analyze the runtime of Algorithm~\ref{alg:fast}.
Let $(C,A)$ be any node in the algorithm, with centers $C$ and alpha
interval $A = [\alo, \ahi]$. Sorting the points in $\V$
according to their distance to $C$ has complexity $O(n \log n)$. Finding the
points sampled by $d^\alpha$-sampling with $\alpha$ set to $\alo$ and
$\ahi$ costs $O(n)$ time. Finally, computing the alpha interval $A_i$
for each child node of $(C,A)$ costs $O(n \log\frac{nH}{\epsilon})$ time, since we need
to perform $\log\frac{nkH\log n}{\epsilon}$ iterations of binary search on $\alpha
\mapsto \frac{D_i(\alpha)}{D_n(\alpha)}$
and each evaluation of the function costs $O(n)$ time.
We charge this $O(n \log\frac{nH}{\epsilon})$ time to the corresponding child node.
If there are $N$ nodes in the execution tree, summing this cost over all nodes
gives a total running time of $O(N \cdot n \log\frac{nH}{\epsilon}))$.
If we let $\#I$ denote the total number of $\alpha$-intervals for $\V$,
then each layer of the execution tree has at most $\#I$ nodes, and the depth is
$k$, giving a total running time of $O(\#I \cdot k n \log\frac{nH}{\epsilon})$.

From Theorem \ref{thm:expect}, we have $\E[\#I]\leq 8nk\log n\cdot \ahi$.
Therefore, the expected runtime of Algorithm \ref{alg:fast} is
 $O\left(n^2 k^2 \ahi (\log n)\left(\log\frac{nH}{\epsilon}\right)\right)$.
This completes the proof.
\end{proof}



\end{document}